% This is a general template file for the LaTeX package SVJour3
% for Springer journals.          Springer Heidelberg 2010/09/16
%
% Copy it to a new file with a new name and use it as the basis
% for your article. Delete % signs as needed.
%
% This template includes a few options for different layouts and
% content for various journals. Please consult a previous issue of
% your journal as needed.
%
%%%%%%%%%%%%%%%%%%%%%%%%%%%%%%%%%%%%%%%%%%%%%%%%%%%%%%%%%%%%%%%%%%%
%
% First comes an example EPS file -- just ignore it and
% proceed on the \documentclass line
% your LaTeX will extract the file if required
% \begin{filecontents*}{example.eps}
% %!PS-Adobe-3.0 EPSF-3.0
% %%BoundingBox: 19 19 221 221
% %%CreationDate: Mon Sep 29 1997
% %%Creator: programmed by hand (JK)
% %%EndComments
% gsave
% newpath
%   20 20 moveto
%   20 220 lineto
%   220 220 lineto
%   220 20 lineto
% closepath
% 2 setlinewidth
% gsave
%   .4 setgray fill
% grestore
% stroke
% grestore
% \end{filecontents*}
%
\RequirePackage{fix-cm}
%
%\documentclass{svjour3}                     % onecolumn (standard format)
%\documentclass[smallcondensed]{svjour3}     % onecolumn (ditto)
%\documentclass[smallextended]{svjour3}       % onecolumn (second format)
%\documentclass[smallextended]{my-svjour3}       % onecolumn (second format)
%\documentclass[twocolumn]{svjour3}          % twocolumn
%
%\smartqed  % flush right qed marks, e.g. at end of proof
%

\documentclass[a4paper,10pt]{article}
\usepackage[a4paper, total={16cm, 23.6cm}, tmargin=2.5cm, headheight=26pt, twoside, lmargin=2.5cm]{geometry}

\usepackage[affil-it]{authblk}

\usepackage[american]{babel}

\usepackage{stmaryrd}
\usepackage{amsmath,bm}
\usepackage{epsfig,color}
\usepackage{amsthm}
\usepackage{amsfonts}
\usepackage{amssymb}
\usepackage{url}

\usepackage{adjustbox}

\usepackage{subcaption}
\captionsetup{compatibility=false}

\usepackage{pgfplots}
\pgfplotsset{compat=1.13}
\usepackage{tikz-3dplot}
\usetikzlibrary{patterns}
\usepackage{tikz}
\usetikzlibrary[positioning]
\usetikzlibrary{intersections}
\usetikzlibrary{backgrounds}
\usetikzlibrary{decorations.pathreplacing,calligraphy}

\usepackage[utf8]{inputenc}
\usepackage{booktabs}
\usepackage{color}
\usepackage{graphicx}
%\usepackage{subfigure}
%\usepackage{tikz}
%\usetikzlibrary[positioning]
%\usetikzlibrary{patterns}
%\usetikzlibrary{shapes,snakes}
%\usetikzlibrary{arrows.meta}
%\usepackage{natbib}
\usepackage[procent]{overpic}
\usepackage{upgreek}
\usepackage{changes}
\usepackage{caption}
\usepackage{listings}
\usepackage{cite}
%
% please place your own definitions here and don't use \def but
% \newcommand{}{}
%
% Insert the name of "your journal" with
% \journalname{Continuum Mechanics and Thermodynamics}
%\journalname{\textcolor{red}{Journal ???}}
%

% Useful to build the dvi/pdf without any images
%\renewcommand{\includegraphics}[2][a]{image}
% Useful to let jpg be preferred to png by \includegraphics (default behavior is the other way round) if no suffix is specified.
%\DeclareGraphicsExtensions{.pdf,.jpg,.png}

% \includeonly{chapter/introduction.tex, chapter/problem_formulation.tex, chapter/stochastic_backgrounds.tex}

%% Macro for output in more details
\usepackage{ifthen}
\newcommand{\moreDetailed}[3]{
  \ifthenelse{\equal{#1}{true}}{#2}{#3}
}

% Get rid of the double spacing at the end of a sentence.
\frenchspacing
% Configure the "depth" of the table of contents
\setcounter{tocdepth}{2}
% Change the numbering of the first enumerated list to (i)

% Change the numbering of the second enumerated list to (1)

\newcommand{\R}{\ensuremath{\mathbb{R}}}
\newcommand{\N}{\ensuremath{\mathbb{N}}}

\newcommand{\oN}{{\overline{N}}}

\newcommand{\hv}{h_v}
\newcommand{\Ac}{A_c}
\newcommand{\cpf}{c_{p,f}}
\newcommand{\mc}{\dot{m}_c}

\def\vect#1{\mbox{\boldmath$ #1$}}                   % Vektoren ohne Index

\newcommand{\bA}{{\vect A}}
\newcommand{\bb}{{\vect b}}
\newcommand{\bC}{{\vect C}}
\newcommand{\bc}{{\vect c}}
\newcommand{\bD}{{\vect D}}
\newcommand{\bI}{{\vect I}}
\newcommand{\bW}{{\vect W}}

\newcommand{\bs}{{\vect s}}

\newcommand{\bv}{{\vect v}}
\newcommand{\bV}{{\vect V}}

\newcommand{\bz}{{\vect z}}

\newcommand{\oa}{{\overline{a}}}
\newcommand{\ob}{{\overline{b}}}
\newcommand{\oC}{{\overline{C}}}
\newcommand{\oc}{{\overline{c}}}
\newcommand{\od}{{\overline{d}}}
\newcommand{\oD}{{\overline{D}}}
\newcommand{\of}{{\overline{f}}}

\newcommand{\oNN}{{\overline{N}}}
\newcommand{\orr}{{\overline{r}}}
\newcommand{\orho}{{\overline{\rho}}}
\newcommand{\oT}{{\overline{T}}}

\newcommand{\oy}{{\overline{y}}}
\newcommand{\oZ}{{\overline{Z}}}

\newcommand{\ooa}{{\oa}}
\newcommand{\oob}{{\ob}}
\newcommand{\oor}{{\orr}}

\newcommand{\ooC}{{\overline{\oC}}}
\newcommand{\ooc}{{\overline{\oc}}}
\newcommand{\ood}{{\overline{\od}}}

\newcommand{\ooN}{{\overline{\oNN}}}
\newcommand{\oorho}{{\overline{\orho}}}
\newcommand{\ooT}{{\overline{\oT}}}

\newcommand{\oooT}{{\overline{\ooT}}}

\newcommand{\cc}{{\check{c}}}

\newcommand{\diag}{\mbox{diag}}
\newcommand{\sign}{\mbox{sign}}

\newcommand{\eref}[1]{(\ref{#1})}

\newtheorem{remark}{Remark}
\newtheorem{proposition}{Proposition}
\newtheorem{lemma}{Lemma}

\begin{document}

\title{Analytical investigation of 1D Darcy-Forchheimer flow under local thermal nonequilibrium}
%\subtitle{Do you have a subtitle?\\ If so, write it here}

%\titlerunning{Short form of title}        % if too long for running head

%\author{S.~M\"uller \and  M.~Rom}
\author{S.~M\"uller}
\author{M.~Rom}

%\authorrunning{Short form of author list} % if too long for running head

% \institute{
% 	S. M\"uller \at
% 	Institut f\"ur Geometrie und Praktische Mathematik, RWTH Aachen University, Templergraben 55, D-52056 Aachen, Germany \\
% 	%              Tel.: +49 241 80 96469 \\
% 	%              Fax:  +49 241 80 92317 \\
% 	\email{mueller@igpm.rwth-aachen.de}           %  \\
% 	%             \emph{Present address:} of F. Author  %  if needed
% 	\and
%         M. Rom \at
% 	Institut f\"ur Geometrie und Praktische Mathematik, RWTH Aachen University, Templergraben 55, D-52056 Aachen, Germany \\
% 	%              Tel.: +49 241 80 94873 \\
% 	%              Fax:  +49 241 80 92317 \\
% 	\email{rom@igpm.rwth-aachen.de}           %  \\
% }

\affil{Institut f\"ur Geometrie und Praktische Mathematik, RWTH Aachen University, Templergraben~55, 52062 Aachen, Germany \authorcr rom@igpm.rwth-aachen.de, mueller@igpm.rwth-aachen.de}

\date{}
% The correct dates will be entered by the editor

\maketitle

%%%%%%%%%%%%%%%%%%%%%%%%%%%%%%%%%%%%%%%%%%%%%%%%%%%%%%%%%%%%%%%%%%%%%%%%%%%%%%%%%%%%%%%%%%%%%%%%%%%%%%%%%%
\begin{abstract}
In the context of transpiration cooling, a 1D porous medium model consisting of a temperature system   and a mass-momentum system is derived from the 2D/3D Darcy-Forchheimer equations. The temperatures of the coolant and the solid are assumed to be in local nonequilibrium. This system is analytically verified to have a unique solution. Transpiration cooling simulations are performed by a two-domain approach, coupling the assembled 1D porous medium solutions with 2D solutions of a hot gas flow solver. A comparison of results obtained by applying the 1D porous medium model with the temperature system either including or neglecting fluid heat conduction justifies the use of the latter simplified system.
\end{abstract}

\vspace*{0.5cm}
\noindent Keywords:
transpiration cooling, 1D porous medium flow, Darcy-Forchheimer, local thermal nonequilibrium 

%\todo{
%Beweise in Appendix. Paper ohne Appendix einreichen (eventuell Appendix als "supplementary material". Dafuer IGPM-Preprint mit Anhang.
%}

%%%%%%%%%%%%%%%%%%%%%%%%%%%%%%%%%%%%%%%%%%%%%%%%%%%%%%%%%%%%%%%%%%%%%%%%%%%%%%%%%%%%%%%%%%%%%%%%%%%%%%%%%%
\section{Introduction}
\label{sec:Intro}

%\todo{Eventuell direkt mit der Modellierung starten. Ist ja nur ein Bericht und keine richtige Veroeffentlichung, die wir einreichen wollen.}

%Original IJHMT 
%Among active heat protection techniques for high-temperature and high-velocity gas flows, transpiration cooling provides a superior cooling efficiency compared with other active methods such as film cooling~\cite{Eckert54,Laganelli70}. A cooling gas, initially stored in a reservoir, is driven through a porous medium such that it is injected into the hot gas flow. This is done by establishing a pressure difference between the hot gas and the coolant side. 

Transpiration cooling is considered a promising heat protection technique for high-temperature and high-velocity gas flows. Compared with other active methods such as film cooling, it provides a superior cooling efficiency~\cite{Eckert54,Laganelli70}. A generic configuration is sketched in Fig.~\ref{fig:2D_setting}, where a cooling gas, initially stored in a reservoir, is driven through a porous medium such that it is injected into the hot gas flow. This is done by establishing a pressure difference between the hot gas and the coolant side.
\begin{figure}[t]
  \centering
  \begin{tikzpicture}[scale=1.359]
    \draw[fill=gray!90] (0,2) -- (7,2) -- (7,2.5) -- (0,2.5) -- (0,2);
    \draw (0,0.5) -- (0,2);
    \draw (7,0.5) -- (7,2);

    \draw[dashed,fill=gray!30] (2,0) -- (5,0) -- (5,0.5) -- (2,0.5) -- (2,0);

    \draw[fill=gray!90] (0,0) -- (2,0) -- (2,0.5) -- (0,0.5) -- (0,0);
    \draw[fill=gray!90] (5,0) -- (7,0) -- (7,0.5) -- (5,0.5) -- (5,0);

    \draw[dashed] (2,0) -- (5,0);

    %\node[right] at (0,1.25) {\small $\Gamma_{\text{I}}$};
    \draw[->] (0.6,1.5) -- (1,1.5);
    \draw[->] (0.6,1.25) -- (1,1.25);
    \draw[->] (0.6,1) -- (1,1);
    \node[above] at (0.8,1.53) {\small hot gas};

    %\node[below] at (3.5,2) {\small $\Gamma_{\text{W,HG,iso}}$};
    %\node[left] at (7,1.25) {\small $\Gamma_{\text{O}}$};

    %\node at (5,1.5) {$\Omega_{\text{HG}}$};

    %\node[above] at (1,0.43) {\small $\Gamma_{\text{W,HG,iso}}$};
    %\node[above] at (6,0.43) {\small $\Gamma_{\text{W,HG,ad}}$};

    \draw[->] (3.25,0.5) -- (3.25,0.9);
    \draw[->] (3.5,0.5) -- (3.5,0.9);
    \draw[->] (3.75,0.5) -- (3.75,0.9);

    \draw[->] (3.25,-0.4) -- (3.25,0);
    \draw[->] (3.5,-0.4) -- (3.5,0);
    \draw[->] (3.75,-0.4) -- (3.75,0);

    \node at (3.5,0.35) {\small porous};
    \node at (3.5,0.15) {\small medium};

    %\node[right] at (1.95,0.25) {\small $\Gamma_{\text{W}}$};
    %\node[left] at (5.05,0.25) {\small $\Gamma_{\text{W}}$};

    %\draw (1.3,0) -- (1.3,-0.8) -- (5.7,-0.8) -- (5.7,0);
    %\node at (1.9,-0.25) {\small reservoir};
    \node at (3.5,-0.6) {\small cooling gas};

    %\draw (4.1,0.15) -- (4.4,-0.4);
    %\node at (4.6,-0.58) {$\Omega_{\text{PM}}$};
    %\draw (4.25,-0.4) -- (4.9,-0.4) -- (4.9,-0.75) -- (4.25,-0.75) -- (4.25,-0.4);

    \node[below] at (4.5,0) {\small $\Gamma_{\text{R}}$};
    \node[above] at (4.5,0.48) {\small $\Gamma_{\text{Int}}$};
    
    \draw[->] (7.4,0) -- (7.7,0);
    \draw[->] (7.4,0) -- (7.4,0.3);
    \node[right] at (7.7,0) {\small $x$};
    \node[above] at (7.4,0.3) {\small $y$};
    
    %\draw (2,-0.1) -- (2,-0.4);
    %\draw (5,-0.1) -- (5,-0.4);
    %\node[below] at (2,-0.4) {\small $x=x_{\text{min}}$};
    %\node[below] at (5,-0.4) {\small $x=x_{\text{max}}$};
  \end{tikzpicture}
  \caption{Illustration of 2D transpiration cooling setup ($\Gamma_{\text{R}}$: reservoir boundary, $\Gamma_{\text{Int}}$: interface between hot gas and porous medium).}
  \label{fig:2D_setting}
\end{figure}
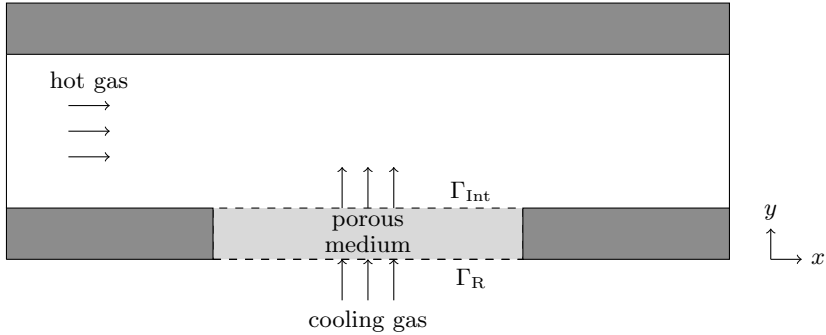

%Original IJHMT 
%Our former works~\cite{Dahmen14_ijnmf,Dahmen15_ijhmt,Koenig19_jtht,Koenig20_book} present a two-domain approach for the numerical simulation of transpiration cooling problems. A typical setup couples a turbulent hot gas flow, e.g., a channel or a nozzle flow, with a laminar porous medium flow. For this purpose, we developed a coupled procedure solving the Reynolds-averaged Navier-Stokes equations in the hot gas domain and a system for mass, momentum and energy conservation in the porous medium domain. Momentum conservation is modeled by the Darcy-Forchheimer equation, and energy conservation is described by two temperature equations to account for a potential local thermal nonequilibrium between fluid and solid phase. Particular attention was paid to the development of the coupling conditions. Similar recent approaches, also taking into account local thermal nonequilibrium, are for instance those of Xiao et al.~\cite{Xiao18}, Munk et al.~\cite{Munk19} or Prokein et al.~\cite{Prokein20}, which mainly differ in their underlying application and in parts of the modeling, e.g., the turbulence model for the hot gas flow. \revone{Other recent works on transpiration cooling with two-domain approaches concentrate on models with coolant phase change~\cite{SuWang19,SuHe21,LiuSu22} or the combination with film cooling~\cite{DingWang20,LvLiu22}. In addition, numerical investigations of transpiration cooling using models available in commercial software such as Ansys Fluent are available, e.g., \cite{HinseYildiz22}.}

In our former works~\cite{Dahmen14_ijnmf,Dahmen15_ijhmt,Koenig19_jtht,Koenig20_book}, we present a two-domain approach for the numerical simulation of transpiration cooling problems, where a turbulent hot gas flow, e.g., a channel or a nozzle flow, is coupled with a laminar porous medium flow. For this purpose, we developed an iterative coupled procedure solving alternately the Reynolds-averaged Navier-Stokes equations in the hot gas domain and a system for mass, momentum and energy conservation in the porous medium domain. Momentum conservation is modeled by the Darcy-Forchheimer equation, and energy conservation is described by two temperature equations to account for a potential local thermal nonequilibrium between fluid and solid phase. 
%Particular attention was paid to the development of the coupling conditions. 
For this coupled procedure, we developed appropriate coupling conditions, imposed at $\Gamma_{\text{Int}}$, see Fig.~\ref{fig:2D_setting}.
%
%Original IJHMT 
%With our two-domain approach, we investigated a wide variety of 2D and 3D test cases. This comprises the study of, e.g., the cooled boundary layer of a turbulent channel flow~\cite{Dahmen15_ijhmt}, the cooling of hot gas flows modeled as a mixture of thermally perfect but calorically imperfect gases~\cite{Koenig20_book,Dahmen14_eccomas}, the effect of nonuniformly simulated cooling gas injection~\cite{Koenig20_book,Koenig18_aiaa} or the cooling of a supersonic nozzle~\cite{Koenig20_book}. In all cases, we found good agreement with experimental data. However, there are some numerical issues with respect to solving the porous medium flow. These in particular concern the numerical stability of the solution method, especially in 3D, and the correct application of boundary conditions at the coolant reservoir and at the interface. Both problems contribute to an increase of the computation time.

Using this two-domain approach, a wide variety of 2D and 3D test cases was investigated for different ranges of applications, cf.~\cite{Dahmen14_ijnmf,Dahmen15_ijhmt,Koenig19_jtht,Koenig20_book,Dahmen14_eccomas,Koenig18_aiaa}.
Although we found good agreement with experimental data, there are some numerical issues with respect to solving the porous medium flow. In particular, these concern the numerical stability of the solution method, especially in 3D, and the correct application of boundary conditions at the coolant reservoir and at the interface. Both problems contribute to an increase of the computation time. 
To overcome these issues and to generally accelerate computations, we recently developed in \cite{RomMueller22_ijhmt} a new coupled 2D/3D hot gas -- assembled-1D porous medium flow solver, where the 2D or 3D hot gas flow is coupled to a porous medium which is decomposed into several 1D problems.
These 1D problems are solved independently of each other and afterwards assembled to obtain a 2D or 3D solution for the porous medium flow.
%To overcome these issues we recently developed in \cite{RomMueller22_ijhmt} a new coupled 2D/3D hot gas -- assembled-1D porous medium flow solver. Here we approximate  the porous medium flow by assembling 1D solutions.
We refer to this coupled model as \emph{assembled-1D model} in the following.
In 1D, the \emph{temperature system} (equations for fluid and solid temperature) can be decoupled from the \emph{mass-momentum system} (continuity and Darcy-Forchheimer momentum equation) and solved analytically. The mass-momentum system can be rewritten making use of conservation of mass to obtain an ordinary differential equation for the density (or the velocity). From the numerical solution of the latter, the velocity (or the density) can be directly evaluated.

%\todo{Der nachfolgende Absatz stammt aso aehnlich aus der Einleitung in unserem IJHMT-Paper}
In \cite{RomMueller22_ijhmt}, a 1D porous medium model is derived from a 2D/3D porous medium model.
%We emphasize that energy transport in 1D porous medium flow was subject of 
In contrast to
previous work in the literature, where analytical solutions 
of the energy transport in 1D 
were in particular used to investigate the effect of different boundary conditions~\cite{Kubota76,Wolfersdorf05,Wang06,Wang08} or to predict wall temperatures in combination with a near-wall layer analysis of the hot gas flow~\cite{ZhangYi20,BaiYi21,NakayamaYi22},
we impose other  boundary conditions for the coupling with the hot gas flow.
Furthermore, the mass-momentum system is included in the solvability analysis in~\cite{RomMueller22_ijhmt} and in the current work.

%Consequently, our novel approach of assembled 1D solutions is capable of reflecting the flow behavior in the whole porous domain and can be applied in a fully-coupled transpiration cooling scheme. 
%In Section \ref{sec:PM-full-temp-model-III} and \ref{sec:PM-mass-momentum} we show under which assumptions the equations for the two temperatures and the one for the density have unique solutions. These assumptions are verified to be physically reasonable and ensure monotonicity of the temperatures, the density and the velocity. 
%Furthermore, we discuss in detail the advantages and limitations of the 1D porous medium flow solver. The main limitation is the nonexistence of the side walls of the porous probe in 1D. We demonstrate how slip and adiabatic side walls can still be modeled.

For the assembled-1D model in \cite{RomMueller22_ijhmt}, we use a simplified 1D porous medium temperature system. This is derived from the 2D/3D porous medium model used in our works~\cite{Dahmen14_ijnmf,Dahmen15_ijhmt,Koenig19_jtht,Koenig20_book} or works by other authors~\cite{Munk19,Prokein20} by neglecting effects due to fluid heat conduction.
The simplification is based on the assumption
%, assuming 
that the thermal conductivity of the fluid in the porous medium is much smaller than the thermal conductivity of the solid phase. The main objective of the current work is the justification of this simplification. For this purpose, we first recall the 1D model in Sect.~\ref{sec:PM-models} composed of the temperature system with fluid heat conduction and the mass-momentum system. In Sect.~\ref{sec:PM-full-temp-model-III}, we determine the exact analytical solution of the temperature system. This enters the mass-momentum system that is verified to have a unique solution, see Sect.~\ref{sec:PM-mass-momentum}. By means of coupled simulations with the assembled-1D model 
using either the 1D temperature system with fluid heat conduction or the simplified one without, we verify in Sect.~\ref{sec:numerical_results} that using the latter as in \cite{RomMueller22_ijhmt} will provide  accurate results for the range of parameters in our transpiration cooling  configuration.

%%%%%%%%%%%%%%%%%%%%%%%%%%%%%%%%%%%%%%%%%%%%%%%%%%%%%%%%%%%%%%%%%%%%%%%%%%%%%%%%%%%%%%%%%%%%%%%%%%%%%%%%%%

%%%%%%%%%%%%%%%%%%%%%%%%%%%%%%%%%%%%%%%%%%%%%%%%%%%%%%%%%%%%%%%%%%%%%%%%%%%%%%%%%%%%%%%%%%%%%%%%%%%%%%%%%%
\section{One-dimensional model for porous medium flow in the context of transpiration cooling}
\label{sec:PM-models}
In the following, we present the governing equations and boundary conditions for 1D stationary porous medium flow in local thermal nonequilibrium. The particular boundary conditions at the outlet of the porous medium establish the coupling with a hot gas flow for the simulation of transpiration cooling problems. More details on our original coupled two-domain approach can be found in~\cite{Dahmen14_ijnmf,Dahmen15_ijhmt,Koenig19_jtht,Koenig20_book}. The derivation of the 1D porous medium model and of the whole coupled assembled-1D model for transpiration cooling are presented in~\cite{RomMueller22_ijhmt}.

%Sections \ref{sec:governing_eqs_PM_1D} and \ref{sec:boundary_conditions_1D} introduce a simplified one-dimensional porous medium model that corresponds to the 2D/3D model presented in Sects.~\ref{sec:governing_eqs_PM_3D} and \ref{sec:boundary_conditions_3D}. Section~\ref{sec:coupled_approach_1D} describes how the 1D model can be used in our coupled two-domain approach for transpiration cooling.

%\subsection{Governing equations}
%\label{sec:governing_eqs_PM_1D}
%\input{pm-models-III.tex} % Nur fuer Modellansatz III

%\subsection{A one-dimensional model for porous medium flow}
%\label{sec:governing_eqs_PM_1D}

\textbf{Governing equations}.
With $\rho_f$, $v$, $\dot m_c$ and $A_c$ denoting the fluid density, Darcy velocity, prescribed coolant mass flow rate and cross-sectional area of the porous probe, respectively, the one-dimensional model is obtained taking advantage of 
%
%To obtain a simplified one-dimensional system of equations, we neglect fluid heat conduction in~\eqref{eq:PM_T_f} as discussed in Sect.~\ref{sec:governing_eqs_PM_3D} and rewrite~\eqref{eq:PM_system_3D} taking advantage of 
\begin{equation}
\label{eq:mass-cons-III}
   \rho_f(y) \,v(y) =  \frac{\mc}{\Ac} 
\end{equation}
being constant due to conservation of mass. Then, the 1D system of equations for $y \in (0,L)$, where $L$ is the length of the porous probe, 
%reads
splits into a system for mass and momentum 
%\begin{subequations}
%\begin{align}
%\rho_{f}^{\prime}(y) \, v(y) + \rho_{f}(y) \, v^{\prime}(y) &= 0, \label{eq:PM_conti_1D}\\
%\varphi^{-2} \, \rho_{f}(y) \, v(y) \, v^{\prime}(y) &= -R \left( \rho_{f}^{\prime}(y) \, T_{f}(y) + \rho_{f}(y) \, T_{f}^{\prime}(y) \right) - \frac{\mu}{K_{D}} \, v(y) - \frac{\rho_{f}(y)}{K_{F}} \, v(y)^2, \label{eq:PM_Darcy_Forch_1D}\\
%c_{p,f} \frac{\dot m_c}{A_c} T_{f}^{\prime}(y) &= h_v \left( T_s(y) - T_f(y) \right), \label{eq:PM_T_f_1D}\\
%(1-\varphi) \, \kappa_s \, T_{s}^{\prime \prime}(y) &= h_v \left( T_s(y) - T_f(y) \right). \label{eq:PM_T_s_1D}
%\end{align}
%\label{eq:PM_system_1D}
%\end{subequations}
\begin{subequations}
\label{eq:mass-momentum-system}
\begin{align}
\rho_{f}^{\prime}(y) \, v(y) + \rho_{f}(y) \, v^{\prime}(y) &= 0, \label{eq:PM_conti_1D}\\
\varphi^{-2} \, \rho_{f}(y) \, v(y) \, v^{\prime}(y) &= -R \left( \rho_{f}^{\prime}(y) \, T_{f}(y) + \rho_{f}(y) \, T_{f}^{\prime}(y) \right) - \frac{\mu_f}{K_{D}} \, v(y) - \frac{\rho_{f}(y)}{K_{F}} \, v(y)^2, \label{eq:PM_Darcy_Forch_1D}
\end{align}
%\label{eq:PM_system_1D}
\end{subequations}
and a temperature system 
\begin{subequations}
\label{eq:temperature-system-full}
\begin{align}
-\varphi\, \kappa_f\, T_{f}^{\prime \prime}(y) + c_{p,f} \frac{\dot m_c}{A_c} T_{f}^{\prime}(y) &= h_v \left( T_s(y) - T_f(y) \right), 
\label{eq:PM_T_f_1D-full}\\
(1-\varphi) \, \kappa_s \, T_{s}^{\prime \prime}(y) &= h_v \left( T_s(y) - T_f(y) \right). 
\label{eq:PM_T_s_1D-full}
\end{align}
%\label{eq:PM_system_1D}
\end{subequations}
The unknowns are fluid density~$\rho_f$, Darcy velocity~$v$, fluid temperature~$T_f$ and solid temperature~$T_s$. All other quantities are given constant parameters, namely porosity~$\varphi$ of the porous probe, specific gas constant~$R$, dynamic viscosity~$\mu_f$ of the fluid, permeability~$K_D$, Forchheimer coefficient~$K_F$, specific heat capacity~$c_{p,f}$ of the fluid, coolant mass flow rate~$\dot m_c$, cross-sectional area~$A_c$ of the porous probe, volumetric heat transfer coefficient~$h_v$, thermal conductivity~$\kappa_f$ of the fluid and thermal conductivity~$\kappa_s$ of the solid. Note that $K_D$, $K_F$ and $\kappa_s$ are scalars in 1D and matrices in 2D/3D. Consequently, only the respective values in flow direction~$y$ of those matrices are used in 1D. In~\eqref{eq:PM_Darcy_Forch_1D}, we make use of the equation of state for an ideal gas $p(y) = \rho_f(y) R T_f(y)$, where~$p$ denotes the pressure. 
%The mass-momentum system~\eqref{eq:mass-momentum-system} and the temperature system~\eqref{eq:temperature-system-full} are independent of each other in 1D.
In 1D, the temperature system~\eqref{eq:temperature-system-full} can be solved independently from the mass-momentum system~\eqref{eq:mass-momentum-system}.

%\subsection{Boundary conditions}
%\label{sec:boundary_conditions_1D}
%
%In the one-dimensional setting, there are no side walls $\Gamma_{\text{W}}$. Hence, 
\textbf{Boundary conditions}.
In 1D, boundary conditions have to be imposed only on the reservoir~$\Gamma_{\text{R}}$ and on the interface~$\Gamma_{\text{Int}}$ because the side walls of the porous medium do not exist. However, in~\cite{RomMueller22_ijhmt} we show how the side walls can still be modeled for assembling a 2D or 3D porous medium solution from the individual 1D solutions.
\begin{itemize}
  \item On the reservoir $\Gamma_{\text{R}}$ at $y=0$:
  \begin{subequations}
  \begin{align}
    %\rho_f(0) &= \frac{p_{\text{R}}}{R T_b}, \label{eq:rho_f_reservoir_1D}\\ 
    T_f(0) &= T_b,  \label{eq:T_f_reservoir_1D}\\
    T_s(0) &= T_b,  \label{eq:T_s_reservoir_1D}
  \end{align}
  \label{eq:bc_reservoir_1D}%
  \end{subequations}
  where $T_b$ is the constant backside temperature of the porous probe.
  
  \item On the interface $\Gamma_{\text{Int}}$ at $y=L$:
  \begin{subequations}
  \begin{align}
    \rho_f(L) &= \frac{p_{\text{HG}}}{R T_f(L)}, \label{eq:rho_f_int_1D}\\
    v(L) &= \frac{\dot m_c}{A_c} \frac{R T_f(L)}{p_{\text{HG}}}, \label{eq:velo_int_1D}\\
    \left(1-\varphi \right) \kappa_s T_s^{\prime}(L) &= q_{\text{HG}} - c_{p,f} \frac{\dot m_c}{A_c} \left(T_s(L) - T_f(L)\right), \label{eq:heat_balance_int_1D}\\
    T_f^{\prime}(L) &= \frac{\hv\,\Ac}{\cpf\,\mc} \left(T_{s}(L)-T_{f}(L)\right). \label{eq:Tf_int_1D}
  \end{align}
  \label{eq:bc_interface_1D}%
  \end{subequations}
  Here, $p_{\text{HG}}$ and $q_{\text{HG}}=\kappa_{\text{HG}} T_{\text{HG}}^{\prime}$ are the pressure and the heat flux, respectively, in the hot gas at the interface. For each 1D porous medium problem, both $p_{\text{HG}}$ and $q_{\text{HG}}$ are constant values such that the coordinate dependence of the hot gas values is omitted.
  
  The boundary condition~\eqref{eq:Tf_int_1D} for $T_f^{\prime}(L)$ is based on the simplification of the equation for the fluid temperature~\eqref{eq:PM_T_f_1D-full}. It is obtained by setting $y=L$ and neglecting the first term on the left-hand side, i.e., $-\varphi \kappa_f T_f^{\prime \prime}(y)$. This is a common simplification for cases in which $\kappa_f \ll \kappa_s$ such as transpiration cooling problems, see for instance~\cite{Wolfersdorf05,Colladay71}. Other works such as~\cite{Prokein20} use $T_f^{\prime}(L)=0$ instead, assuming that the convective heat flux from the hot gas side is fully absorbed by the solid in the porous medium. However, despite imposing $T_f^{\prime}(L)=0$ an adiabatic behavior of the fluid temperature at the interface is usually not observable in simulations where $\kappa_f \ll \kappa_s$. This is due to the strong coupling of the two temperature equations and fluid heat conduction being negligible. The influence of the different boundary conditions for $T_f^{\prime}(L)$ is investigated in Sect.~\ref{sec:numerical_results}.
\end{itemize}
%These boundary conditions correspond to the ones discussed in 
%Sect.~\ref{sec:boundary_conditions_3D}. 
%Sect.~\ref{sec:governing_eqs_PM_1D}.
%Sect.~\ref{sec:governing_eqs_PM_3D}.
%Due to neglecting fluid \mr{heat} conduction effects and, hence, the second-order derivative of the fluid temperature~$T_f$, the boundary condition $\displaystyle \nabla T_f(x,L,z) \cdot \vec{n}_y=0$ or, correspondingly in 1D, $T_f^{\prime}(L)=0$ is not needed anymore.

\textbf{Solution procedure.}
The resulting 1D porous medium model becomes a  boundary value problem. Its solution requires to solve 
% splits into 
three subproblems for the temperatures $T_f$ and $T_s$, the density $\rho_f$ and the velocity $v$, respectively.
First of all, we solve 
%either the full temperature model \eref{eq:temperature-system-full}
% or the simplified temperature model \eref{eq:temperature-system-simple} 
%with boundary conditions
%\eref{eq:T_f_reservoir_1D}, \eref{eq:T_s_reservoir_1D} at $y=0$ as well as \eref{eq:heat_balance_int_1D}
%    and \eref{eq:Tf_int_1D} (only full model) at $y=L$. 
the temperature system \eref{eq:temperature-system-full}
with boundary conditions
\eref{eq:T_f_reservoir_1D}, \eref{eq:T_s_reservoir_1D} at $y=0$ as well as \eref{eq:heat_balance_int_1D}, \eref{eq:Tf_int_1D} at $y=L$. 
Since these problems are linear, we can solve them explicitly.

The temperature  solution 
%will enter 
enters
the 
initial value problem~\eqref{eq:PM_conti_1D} for the density  with initial condition~\eref{eq:rho_f_int_1D} at $y=L$ as parameter. Under some assumptions, its solution can be verified to exist but is not known explicitly because the initial value problem is nonlinear. Instead, it will be approximated numerically applying some ODE solver. This is done by treating~\eqref{eq:PM_conti_1D} as a backward problem using
%using~\eqref{eq:ODE-simplified-III-bc-L} for $\rho_f(L)$. 
the boundary condition for $\rho_f(L)$ given in~\eqref{eq:rho_f_int_1D} and employing mass conservation \eref{eq:mass-cons-III}.

%Due to the uncertainty about the reservoir pressure~$p_{\text{R}}$ \mr{discussed in~\cite{RomMueller22_ijhmt},}
%%described in Sect.~\ref{sec:numerical_issues},
% we choose to solve the backward problem in our numerical simulations. Therefore, the reservoir pressure is determined by the solution of~\eqref{eq:PM_conti_1D} at $y=0$ via $p_{\text{R}}=\rho_f(0) R T_b$.\\
% such that the iterative computation of $p_{\text{R,num}}$, see~\eqref{eq:p_R_num}, is not needed here.
Finally, we may determine algebraically the solution of the velocity employing constant mass flow~\eref{eq:mass-cons-III}. 
In particular, \eqref{eq:velo_int_1D} holds at $y=L$.
Note that due to mass conservation  we alternatively may first solve the velocity and then determine the density.

%In the following sections we will now address all the aforementioned problems and analyse under which assumptions a unique solution exists and investigate the monotonicity of the solutions.

\begin{remark}[Simplified temperature model]
As discussed above, if $\kappa_f \ll \kappa_s$, fluid heat conduction is negligible and the resulting simplified temperature system as used for our assembled-1D model in~\cite{RomMueller22_ijhmt} is given by
\begin{subequations}
\label{eq:temperature-system-simple}
\begin{align}
c_{p,f} \frac{\dot m_c}{A_c} T_{f}^{\prime}(y) &= h_v \left( T_s(y) - T_f(y) \right), 
\label{eq:PM_T_f_1D-simple}\\
%-\varphi\, \kappa_f\, T_{f}^{\prime \prime}(y) + 
(1-\varphi) \, \kappa_s \, T_{s}^{\prime \prime}(y) &= h_v \left( T_s(y) - T_f(y) \right), \label{eq:PM_T_s_1D-simple}
\end{align}
\label{eq:PM_system_1D}%
\end{subequations}
omitting the term $-\varphi \kappa_f T_{f}^{\prime \prime}(y)$ in~\eqref{eq:PM_T_f_1D-simple}.
Due to neglecting fluid heat conduction effects and, hence, the second-order derivative of the fluid temperature~$T_f$, the boundary condition~\eqref{eq:Tf_int_1D} for $T_f^{\prime}(L)$ is not needed for the simplified temperature model.

\end{remark}

In Sects.~\ref{sec:PM-full-temp-model-III} and \ref{sec:PM-mass-momentum}, we show under which 
%physically reasonable
assumptions the equations for the two temperatures including the term $-\varphi \kappa_f T_{f}^{\prime \prime}(y)$ and the one for the density have unique solutions. These assumptions 
%are verified to be physically reasonable and 
ensure monotonicity of the temperatures, the density and the velocity.

%%%%%%%%%%%%%%%%%%%%%%%%%%%%%%%%%%%%%%%%%%%%%%%%%%%%%%%%%%%%%%%%%%%%%%%%%%%%%%%%%%%%%%%%%%%%%%%%%%%%%%%%%%

%%%%%%%%%%%%%%%%%%%%%%%%%%%%%%%%%%%%%%%%%%%%%%%%%%%%%%%%%%%%%%%%%%%%%%%%%%%%%%%%%%%%%%%%%%%%%%%%%%%%%%%%%%
%\section{Full porous medium model III}
%\label{sec:PM-full-model-III}

%\input{pm-fullmodel-III-analysis.tex}

%\section{Full temperature model}
\section{Temperature system with fluid heat conduction}
\label{sec:PM-full-temp-model-III}

For the investigation of the temperature system determined by~\eref{eq:temperature-system-full} and corresponding boundary conditions \eref{eq:T_f_reservoir_1D}, \eref{eq:T_s_reservoir_1D} at $y=0$ and \eref{eq:heat_balance_int_1D}, \eref{eq:Tf_int_1D} at $y=L$, we rewrite the model in a more canonical form. For this purpose, we introduce the derivatives of the temperatures as additional unknowns, i.e., $T_{f,1}:=T_f$, $T_{f,2}:=T_f'$, $T_{s,1}:=T_s$, $T_{s,2}:=T_s'$:
%The full PM model is determined by the 1st order ODE system
\begin{subequations}
\label{eq:ODE-full}
\begin{align}
%\label{eq:ODE-full-a}
%  \rho_f'(x) &=  N((\rho_f,v,T_{f,1},T_{f,2})(x))\, \rho_f(x), \\[2mm]
%\label{eq:ODE-full-b}
%  v'(x)      &= -N((\rho_f,v,T_{f,1},T_{f,2})(x))\, v(x), \\[2mm]
\label{eq:ODE-full-c}
  T_{f,1}'(y)    &= T_{f,2}(y), \\[2mm]
\label{eq:ODE-full-d}
  T_{f,2}'(y)    &= -\frac{1}{\varphi \kappa_f} \left( \hv \left(T_{s,1}(y)-T_{f,1}(y)\right) -\cpf \frac{\mc}{\Ac} T_{f,2}(y) \right), \\[2mm]
\label{eq:ODE-full-e}
  T_{s,1}'(y)    &= T_{s,2}(y), \\[2mm]
\label{eq:ODE-full-f}
  T_{s,2}'(y)    &= \frac{1}{(1-\varphi)\kappa_s} \hv \left(T_{s,1}(y)-T_{f,1}(y)\right)
\end{align}
\end{subequations}
for $y\in(0,L)$ with boundary conditions
\begin{subequations}
\label{eq:ODE-full-bc}
\begin{align}
\label{eq:ODE-full-bc-0}
% & \rho_f(0) = \frac{p_R}{R\,\modify{T_b}},\ T_{f,1}(0) = \modify{T_b},\ T_{s,1}(0) = T_b,\\[2mm]
 & T_{f,1}(0) = T_b,\ T_{s,1}(0) = T_b,\\[2mm]
\label{eq:ODE-full-bc-L}
% & \rho_f(L) =  \frac{p_{HG}}{R\,T_{f,1}(L)},\
%   T_{f,2}(L) = \frac{\hv\,\Ac}{\cpf\,\mc} \left(T_{s,1}(L)-T_{f,1}(L)\right),\\[2mm]
% &  T_{s,2}(L) = \frac{1}{(1-\varphi)\kappa_s} \left(\cpf\frac{\mc}{\Ac} \left(T_{f,1}(L) - \modify{T_{s,1}(L)}  \right)  + \kappa_{HG} T'_{HG}\right) \nonumber
 & T_{f,2}(L) = \frac{\hv\,\Ac}{\cpf\,\mc} \left(T_{s,1}(L)-T_{f,1}(L)\right),\\[2mm]
 &  T_{s,2}(L) = \frac{1}{(1-\varphi)\kappa_s} \left(\cpf\frac{\mc}{\Ac} \left(T_{f,1}(L) - T_{s,1}(L)  \right)  + q_{\text{HG}}\right) .\nonumber
\end{align}
\end{subequations}
%and coefficient
%\begin{equation}
%\label{eq:ODE-full-N}
% N(\rho_f,v,T_{f,1},T_{f,2}) =
% \frac{R  \rho_f T_{f,2} + K_D^{-1}\mu v + K_F^{-1}\rho_f v^2}{\rho_f(\varphi^{-2} v^2 - R T_{f,1})}
%\end{equation}
%Since we cannot impose boundary values for the densities on both sides, the reservoir pressure  $p_R$ is assumed to be a parameter that needs to be determined appropriately such that the problem \eref{eq:ODE-full}, \eref{eq:ODE-full-bc} has a unique solution.
%

%\textbf{Temperatures}.
%We note that the temperature equations are independent of density and velocity. Therefore, in a
% first step we solve the temperature problem determined by the linear homogeneous first order initial value problem
We solve this boundary value problem by means of a shooting procedure where in a first step we solve 
the linear homogeneous first order initial value problem
\begin{equation}
\label{eq:temp-system-full}
  \bz'(y) = \bA \bz(y),\quad y\in(0,L)
\end{equation}
with
\begin{equation}
  \bA :=
  \left(
  \begin{matrix}
    0  & 1 &  0 & 0\\
    a  & b & -a & 0\\
    0  & 0 &  0 & 1\\
    -c & 0 &  c & 0\\
  \end{matrix}
  \right)
,\quad
  \bz:=
  \left(
  \begin{matrix}
    T_{f,1} \\
    T_{f,2} \\
    T_{s,1} \\
    T_{s,2}
  \end{matrix}
  \right),
\end{equation}
positive matrix entries
\begin{equation}
\label{eq:pos-matrix-entries}
  a:= \frac{\hv}{\varphi \kappa_f},\
  b:= \frac{\cpf \mc }{\Ac\varphi\kappa_f},\
  c:= \frac{\hv}{(1-\varphi)\kappa_s}
\end{equation}
and initial conditions
\begin{equation}
\label{eq:temp-system-ic-full}
%  \bz(0) = (T_f(0), s_1, T_s(0), s_2)^T = (T_c, s_1, T_b, s_2)^T \equiv\bz(0;s_1,s_2).
 \bz(0) = (T_f(0), s_1, T_s(0), s_2)^T = (T_b, s_1, T_b, s_2)^T \equiv\bz(0;s_1,s_2).
\end{equation}
Then the shooting parameters  $(s_1,s_2)$ are determined such that the solution of the initial value problem $\bz=\bz(\cdot;s_1,s_2)$ satisfies the boundary conditions
\begin{subequations}
\label{eq:bc-complex}%
\begin{align}
  &z_2(L;s_1,s_2) =
%    \equiv T_{f,2}(L) = \frac{\hv\,\Ac}{\cpf\,\mc} \left( T_{s,1}(L)-T_{f,1}(L) \right) \equiv
%  \frac{\hv\,\Ac}{\cpf\,\mc} \left( z_3(L)-z_1(L)\right),\\[2mm]
  \alpha \left( z_3(L;s_1,s_2)-z_1(L;s_1,s_2)\right),
 \\[2mm]
  &z_4(L;s_1,s_2) =
%\equiv T_{s,2}(L) = \frac{1}{(1-\varphi)\kappa_s} \left( \frac{\cpf\,\mc}{\Ac} \left( T_{f,1}(L)-T_{HG}\right) + \kappa_{HG} T'_{HG} \right) \equiv
%\frac{1}{(1-\varphi)\kappa_s} \left( \frac{\cpf\,\mc}{\Ac} \left( z_1(L)-T_{HG}\right) + \kappa_{HG} T'_{HG} \right) .
%\beta z_1(L;s_1,s_2)+\gamma.
\beta \left( z_1(L;s_1,s_2)-z_3(L;s_1,s_2)\right)+\gamma
\end{align}
\end{subequations}
with
\begin{subequations}
\label{eq:bc-complex-coeff}%
\begin{align}
   & \alpha:= \frac{\hv\,\Ac}{\cpf\,\mc} = \frac{a}{b},\quad
     \beta:=\frac{1}{(1-\varphi)\kappa_s}\frac{\cpf\,\mc}{\Ac} = \frac{c\,b}{a},\\[2mm]
%    \beta:=\frac{\cpf\,\mc}{(1-\varphi)\kappa_s\,\Ac} = \frac{c\,b}{a},\\[2mm]
%   & \gamma:= \frac{1}{(1-\varphi)\kappa_s}\left(  \kappa_{HG} T'_{HG} -  \frac{\cpf\,\mc}{\Ac} T_{HG }\right).
\label{eq:bc-complex-coeff-gamma}
%   & \gamma:= \frac{1}{(1-\varphi)\kappa_s}  \modify{\kappa_{HG} T'_{HG}} .
   & \gamma:= \frac{1}{(1-\varphi)\kappa_s}  q_{\text{HG}}.
\end{align}
\end{subequations}

To determine the unique solution of the parameter-dependent initial value problem, we  determine the eigenvalues and eigenvectors of the matrix $\bA$. The characteristic polynomial is
\begin{align}
\label{eq:cubic-polynomial}
  \det(\bA - \lambda \bI) = \lambda\, p_3(\lambda)\quad \mbox{with}\quad
  p_3(\lambda) = \lambda^3 - b\,\lambda^2 - (a+c)\, \lambda + c\,b.
\end{align}
First of all, we investigate the roots of the cubic polynomial $p_3$.
\begin{proposition}[Properties of cubic polynomial]~\\[2mm]
\label{prop:cubicpolynomial}
There exist three real roots $\lambda_k$, $k=0,1,2$, of the polynomial $p_3$ with the following properties:
\begin{enumerate}
 \item the roots are explicitly given by
\begin{align}
  \label{eq:roots-cardano}
   &\lambda_k = 2 \sqrt{\frac{3(a+c)+b^2}{9}} \cos\left(\frac{\theta}{3}+\frac{2\,k\,\pi}{3} \right) + \frac{b}{3},\\[2mm]
  &\cos\,\theta = \frac{b}{2} \frac{2\,b^2 + 9\,(a+c) - 27\, c}{\left(3\,(a+c) +b^2 \right)^{3/2}}; \nonumber
\end{align}
 \item the roots can be estimated by
\begin{align}
\label{eq:root-interval}
  \lambda_0\in (\max(b,\sqrt{c}),+\infty),\quad
  \lambda_1\in (-\infty,-\sqrt{c}),\quad
  \lambda_2\in (0,\min(b,\sqrt{c}));
\end{align}
  \item the roots satisfy the relations
\begin{align}
\label{eq:roots-relations}
  b=\lambda_0+\lambda_1+\lambda_2,\
  a+c = -(\lambda_1\lambda_2+\lambda_0\lambda_1+\lambda_0\lambda_2),\
  c\,b = -\lambda_0 \lambda_1 \lambda_2 ;
\end{align}
\item the following estimates hold:
\begin{subequations}
\label{eq:estimate-roots}%
\begin{align}
\label{eq:estimate-roots-a}
   &c-\lambda_0^2<0,\quad c-\lambda_1^2<0,\quad   c-\lambda_2^2>0,\\[2mm]
\label{eq:estimate-roots-b}
   &  \lambda_2 < |\lambda_1|<\lambda_0.
\end{align}
\end{subequations}
\end{enumerate}
\end{proposition}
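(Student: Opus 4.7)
The four claims are tightly linked and my plan is to establish (ii) and (iii) first by elementary sign analysis and Vieta, then derive (iv) from them, and finally obtain the explicit formula (i) by standard trigonometric Cardano once reality of all three roots is in hand.

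\emph{Step 1: sign analysis.} I compute $p_3$ at the candidate separation points $-\sqrt{c}$, $0$, $\sqrt{c}$ and $b$. A direct evaluation gives
\begin{equation*}
 p_3(-\sqrt{c}) = a\sqrt{c},\quad p_3(0) = c\,b,\quad p_3(\sqrt{c}) = -a\sqrt{c},\quad p_3(b) = -a\,b,
\end{equation*}
which, since $a,b,c>0$, are positive, positive, negative, negative, respectively. Combined with $p_3(\lambda)\to\pm\infty$ as $\lambda\to\pm\infty$, the intermediate-value theorem produces a real root in each of $(-\infty,-\sqrt{c})$, $(0,\sqrt{c})$ and $(\sqrt{c},\infty)$. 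Because a cubic has at most three real roots, these are all of them, and I label them $\lambda_1<\lambda_2<\lambda_0$ so that the intervals claimed in (ii) hold with respect to $\sqrt{c}$. The same signs at $y=b$ yield $\lambda_2<b<\lambda_0$ (the sign change from $p_3(0)>0$ to $p_3(b)<0$ to $p_3(+\infty)=+\infty$ places exactly one root in $(0,b)$ and one root in $(b,\infty)$), which upgrades the bounds to the $\max(b,\sqrt{c})$ and $\min(b,\sqrt{c})$ form stated in \eref{eq:root-interval}.

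\emph{Step 2: Vieta and the comparison estimates.} Since $p_3$ is monic and has the three real roots identified in Step~1, expanding $(\lambda-\lambda_0)(\lambda-\lambda_1)(\lambda-\lambda_2)$ and matching with \eref{eq:cubic-polynomial} immediately gives the three relations in \eref{eq:roots-relations}, proving (iii). The first line of \eref{eq:estimate-roots-a} is then a rephrasing of Step~1: $\lambda_0>\sqrt{c}$ and $|\lambda_1|>\sqrt{c}$ while $0<\lambda_2<\sqrt{c}$. For \eref{eq:estimate-roots-b}, the bound $\lambda_2<|\lambda_1|$ is immediate from $\lambda_2<\sqrt{c}<|\lambda_1|$, and the bound $|\lambda_1|<\lambda_0$ follows from the first Vieta relation: $\lambda_0-|\lambda_1|=\lambda_0+\lambda_1=b-\lambda_2>0$ because $\lambda_2<b$ was shown in Step~1. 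Thus (iv) is derived from (ii) and (iii) with no further work.

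\emph{Step 3: Cardano's formula.} With the three real roots secured, I obtain (i) by the classical trigonometric form of Cardano. I substitute $\lambda=t+b/3$ to depress the cubic to $t^3+p\,t+q=0$ with $p=-\bigl(3(a+c)+b^2\bigr)/3$ and $q=-\bigl(2b^3+9(a+c)b-27cb\bigr)/27$. Since all roots are real, the discriminant satisfies $4p^3+27q^2\le 0$, so $-p/3>0$ and $(3q/(2p))\sqrt{-3/p}\in[-1,1]$. The standard formula $t_k=2\sqrt{-p/3}\cos(\theta/3+2k\pi/3)$ with $\cos\theta=-q/2\cdot(-3/p)^{3/2}$, after substituting the explicit $p$ and $q$ and undoing the shift $\lambda=t+b/3$, reproduces exactly \eref{eq:roots-cardano}.

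\emph{Main obstacle.} Steps~1, 2, and (iv) are short sign/algebra arguments. The only place where care is required is Step~3: checking that the coefficients of the depressed cubic fit the hypothesis of the trigonometric Cardano formula (in particular verifying that $\cos\theta$ as written is an honest cosine, i.e.\ has absolute value at most one) and keeping track of the shift $b/3$ so that the three branches $k=0,1,2$ are labelled consistently with the ordering $\lambda_2<|\lambda_1|<\lambda_0$ already established in Steps~1--2.
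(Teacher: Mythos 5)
Your proposal is correct and follows essentially the same route as the paper: locate the three real roots via the sign pattern of $p_3$ at $-\sqrt{c},0,\sqrt{c},b$ and the intermediate value theorem, read off the Vieta relations, deduce the estimates (iv) from the intervals and $\lambda_0+\lambda_1=b-\lambda_2>0$, and obtain the explicit formula from the trigonometric Cardano form of the depressed cubic. Your ordering of the steps (intervals and Vieta before Cardano) and the slightly more direct derivation of $|\lambda_1|<\lambda_0$ are only cosmetic differences from the paper's proof.
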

\begin{proof}
First of all, we check the existence of the three distinct real roots. For this purpose, we observe that the polynomial changes its sign in three disjoint intervals:
\begin{align}
\label{eq:intervals}
  & p_3(-\infty)=-\infty,\ p_3(-\sqrt{c})=a\,\sqrt{c}>0, \nonumber\\[2mm]
  & p_3(0)=c\,b>0, \  p_3(\sqrt{c})=-a\,\sqrt{c}<0, \\[2mm]
  & p_3(b)=-a\,b<0,\ p_3(+\infty)=+\infty. \nonumber
\end{align}
By continuity of the polynomial and the fundamental theorem of linear algebra, we conclude that there exist three distinct real roots in each of the intervals according to \eref{eq:root-interval}.
These roots can be determined by Cardano's formula
\[
  \lambda_k = 2 \varrho^{1/3} \cos\left(\frac{\theta}{3}+\frac{2\,k\,\pi}{3} \right) - \frac{r}{3},\
  \varrho=\sqrt{-\frac{p^3}{27}},\
  \cos\,\theta = -\frac{q}{2\,\varrho},\ k=0,1,2
\]
with
\[
  p=\frac{3\,s-r^2}{3},\
  q=\frac{2\,r^3}{27}-\frac{r\,s}{3} + t,\
  r=-b,\ s=-(a+c),\ t=c\,b.
\]
In terms of our coefficients, these are given by \eref{eq:roots-cardano}.
To determine the corresponding sign of the roots, we note that
by definition the angle $\theta\in [0,\pi]$. This implies for the angles $\theta_k:= (\theta+2\,k\,\pi)/3\in[2\,k,2\,k+1]\,\pi/3$ and, thus,
%$\theta_0\in[0.5,1]$, $\theta_1 \in [-1,-0.5]$ and $\theta_2\in[-0.5,0.5]$.
$\theta_0\in[0,\pi/3]$, $\theta_1 \in [2\,\pi/3,\pi]$ and $\theta_2\in[4\,\pi/3,5\,\pi/3]$.
For the cosine factors, we conclude $\cos(\theta_0)\in [0.5,1]$, $\cos(\theta_1)\in [-1,-0.5]$ and $\cos(\theta_2)\in [-0.5,0.5]$. This corresponds to \eref{eq:root-interval}.
\\
By the fundamental theorem of algebra, it holds
$p_3(\lambda)=(\lambda-\lambda_0)(\lambda-\lambda_1)(\lambda-\lambda_2)$.
By comparison of coefficients, the relations \eref{eq:roots-relations} must hold.\\
The estimate \eref{eq:estimate-roots-a} follows by \eref{eq:root-interval}. For the estimate \eref{eq:estimate-roots-b}, we conclude from $\lambda_0+\lambda_1+\lambda_2=b$ and $\lambda_0>b$ that $\lambda_1+\lambda_2<0$ holds. Because of
$\lambda_1<0$ and $\lambda_2>0$, this implies $\lambda_2<|\lambda_1|$ and, thus, $\lambda_0+\lambda_1>0$.
%\qed
\end{proof}
\begin{remark}[Estimate of the roots]~\\[2mm]
\label{rem:signroots}
Note that the estimate $\lambda_0>b$ is  stronger than $\lambda_0>\sqrt{c}$ because $b$ is independent of $\hv$. Therefore,
the root $\lambda_0$ is always large  due to physical reasons. Then we may neglect terms $e^{-\lambda_0 L}$ later on.
\end{remark}
The fourth eigenvalue is $\lambda_3=0$ that is distinct to the others. Thus, there exist four linearly independent eigenvectors
\[
  \bv_k = c^{-1}\,(c-\lambda_k^2,(c-\lambda_k^2)\lambda_k, c, c\,\lambda_k)^T,\ k=0,1,2,\quad
  \bv_3 = (1,0,1,0)^T.
\]
Then the unique solution of the parameter-dependent initial value problem  is determined by
\begin{equation}
\label{eq:temp-solution-complex}
  \bz(y;s_1,s_2) =  \bW(y) \bW(0)^{-1} \bz(0;s_1,s_2) = \bV \bD(y) \bV^{-1} \bz(0;s_1,s_2)
\end{equation}
with Wronski matrix
\begin{align}
\label{eq:temp-system-Wronski-full}
%  \bW(x) = \left(\bv_0 e^{\lambda_0 x},\bv_1 e^{\lambda_1 x}, \bv_2 e^{\lambda_2 x},\bv_3 e^{\lambda_3 x}\right).
 \bW(y) = \bV\,\bD(y),\quad
  \bV:= \left( \bv_0,\bv_1,\bv_2,\bv_3\right),\
  \bD(y) := \diag \left( e^{\lambda_0 y}, e^{\lambda_1 y},e^{\lambda_2 y},e^{\lambda_3 y}\right).
\end{align}
%defined by the matrix of  eigenvectors $\bV$ and the  diagonal matrix $\bD$
%\begin{align}
% \bV:=(\bv_0,\bv_1,\bv_2,\bv_3),\
% \bD(x):=\diag(e^{\lambda_0 x},e^{\lambda_1 x},e^{\lambda_2 x},1) .
%\end{align}
Here, $\bV$ denotes the matrix of right eigenvectors of the matrix  $\bA$
\[
\bV =
\left(
\begin{matrix}
 \frac{c-\lambda_0^2}{c} & \frac{c-\lambda_1^2}{c} & \frac{c-\lambda_2^2}{c} & 1 \\[2mm]
 \frac{(c-\lambda_0^2)\lambda_0}{c} & \frac{(c-\lambda_1^2)\lambda_1}{c} & \frac{(c-\lambda_2^2)\lambda_2}{c} & 0 \\[2mm]
 1 & 1 & 1 & 1 \\[2mm]
 \lambda_0 & \lambda_1 & \lambda_2 &  0
\end{matrix}
\right)
\]
with its inverse determined by
\[
  \bV^{-1} =
%  \frac{1}{b}
%\left(
%\begin{matrix}
%  \frac{\lambda_1\lambda_2}{((\lambda_0 - \lambda_2)(\lambda_0 - \lambda_1))} & 0 & 0 & 0 \\
%  0 & \frac{\lambda_0\lambda_2}{((\lambda_1 - \lambda_2)(\lambda_0 - \lambda_1))} & 0 & 0 \\
%  0 & 0 & \frac{\lambda_0\lambda_1}{((\lambda_1 - \lambda_2)(\lambda_0 - \lambda_2))} & 0 \\
%  0 & 0 & 0 & \frac{1}{(\lambda_0\lambda_1\lambda_2)}
%\end{matrix}
%\right) \times
\left(
\begin{matrix}
  \sigma_0 & 0 & 0 & 0 \\
  0 & \sigma_1 & 0 & 0 \\
  0 & 0 & \sigma_2 & 0 \\
  0 & 0 & 0 & \sigma_3
\end{matrix}
\right) \times
%\left(
%\begin{matrix}
%  -(\lambda_1+\lambda_2) &  1 &   \lambda_1+\lambda_2  &  \frac{\lambda_1+\lambda_2}{\lambda_0} \\
%    \lambda_0+\lambda_2  & -1 & -(\lambda_0+\lambda_2) & -\frac{\lambda_0+\lambda_2}{\lambda_1} \\
%  -(\lambda_0+\lambda_1) &  1 &   \lambda_0+\lambda_1  &  \frac{\lambda_0+\lambda_1}{\lambda_2} \\
%  -b^2 c & b c  & 0 & -(\lambda_1+\lambda_2)(\lambda_0+\lambda_1)(\lambda_0+\lambda_2)
%\end{matrix}
%\right)
\left(
\begin{matrix}
  -(b-\lambda_0) &  1 &   b-\lambda_0  &  \frac{b-\lambda_0}{\lambda_0} \\
    b-\lambda_1  & -1 & -(b-\lambda_1) & -\frac{b-\lambda_1}{\lambda_1} \\
  -(b-\lambda_2) &  1 &   b-\lambda_2  &  \frac{b-\lambda_2}{\lambda_2} \\
  -b^2 c & b c  & 0 & -(b-\lambda_0)(b-\lambda_2)(b-\lambda_1)
\end{matrix}
\right)
\]
and scaling factors
\[
\sigma_0:= \frac{\lambda_1\lambda_2}{b(\lambda_0 - \lambda_2)(\lambda_0 - \lambda_1)},\
\sigma_1:= \frac{\lambda_0\lambda_2}{b(\lambda_1 - \lambda_2)(\lambda_0 - \lambda_1)},\
\sigma_2:= \frac{\lambda_0\lambda_1}{b(\lambda_1 - \lambda_2)(\lambda_0 - \lambda_2)},\
%\sigma_3:= \frac{1}{b(\lambda_0\lambda_1\lambda_2)}
\sigma_3:= -\frac{1}{b^2 c} .
\]
Thus, the solution of the initial value problem  determined by
\eref{eq:temp-system-full} --  \eref{eq:temp-system-ic-full} 
reads
\begin{subequations}
\label{eq:temp-solution-complex-3}%
\begin{align}
\label{eq:temp-solution-complex-3a}
 & z_1(y;s_1,s_2) = \sum_{k=0}^2 \frac{c-\lambda_k^2}{c} \sigma_k (-1)^k \left(s_1 +\frac{b-\lambda_k}{\lambda_k} s_2\right) e^{\lambda_k y} - \sigma_3 \left( b^2 c T_b -b c s_1 +s_2 \prod_{k=0}^2 (b-\lambda_k) \right)
\\[2mm]
 & z_2(y;s_1,s_2) = \sum_{k=0}^2 \frac{c-\lambda_k^2}{c} \lambda_k \sigma_k (-1)^k \left(s_1 +\frac{b-\lambda_k}{\lambda_k} s_2\right) e^{\lambda_k y} ,
\\[2mm]
\label{eq:temp-solution-complex-3c}
 & z_3(y;s_1,s_2) = \sum_{k=0}^2  \sigma_k (-1)^k \left(s_1 +\frac{b-\lambda_k}{\lambda_k} s_2\right) e^{\lambda_k y} - \sigma_3 \left( b^2 c T_b -b c s_1 +s_2 \prod_{k=0}^2 (b-\lambda_k) \right),
\\[2mm]
 & z_4(y;s_1,s_2) = \sum_{k=0}^2 \lambda_k \sigma_k (-1)^k \left(s_1 +\frac{b-\lambda_k}{\lambda_k} s_2\right) e^{\lambda_k y}.
\end{align}
\end{subequations}
The parameters $(s_1,s_2)$ are now chosen such that  the boundary conditions \eref{eq:bc-complex} hold at $y=L$ with $\bz(L;s_1,s_2)$ determined by \eref{eq:temp-solution-complex-3}.
This is equivalent to solving the linear $2\times2$  system
\begin{align}
\label{eq:s-system}
  \bC \bs = \bb
\end{align}
for $\bs = (s_1,s_2)^T$ with matrix $\bC$ and right-hand side $\bb$ determined by
\[
\bC:=
\left(
  \begin{matrix}
    \sum\limits_{k=0}^2 (-1)^k \lambda_k^2\, \frac{d_k^-}{d_k^+} \, e^{\lambda_k L} &
    \sum\limits_{k=0}^2 (-1)^k \lambda_k\, d_k^- \, e^{\lambda_k L} \\[2mm]
    \sum\limits_{k=0}^2 (-1)^k (a+b\,\lambda_k)\, d_k^- \, e^{\lambda_k L}  &
    \sum\limits_{k=0}^2 (-1)^k\, \frac{a+b\,\lambda_k}{\lambda_k} \,d_k^-\, d_k^+\, e^{\lambda_k L}
  \end{matrix}
\right)
%\]
%and right-hand side
%\[
,\quad
\bb:=
\left(
\begin{matrix}
  0\\
%  \gamma \frac{d_0^+\,d_1^+\,d_2^+}{b^2\, c}
 \gamma \frac{d_0^+\,d_1^+\,d_2^+\,d_0^-\,d_1^-\,d_2^-}{b\, c}
\end{matrix}
\right)
\]
with
\begin{align}
\label{eq:dk_pm}
  d_0^\pm:= \lambda_1 \pm \lambda_2,\
  d_1^\pm:= \lambda_0 \pm \lambda_2,\
  d_2^\pm:= \lambda_0 \pm \lambda_1 .
\end{align}
Note  that by means of \eref{eq:roots-relations} it holds 
\begin{align}
\label{eq:relation-dp_k}
&d_k^+ = b-\lambda_k =\frac{a\,\lambda_k}{c-\lambda_k^2},\ k=0,1,2,
\\[2mm]
\label{eq:relation-dp012}
  & d_0^+\,d_1^+\,d_2^+ = -b\,a ,\\
  & d_0^-\,d_1^-\,d_2^- = - \frac{c\,d_k^-}{\lambda_k\,\sigma_k}.
\end{align}
Hence, the right-hand side of \eref{eq:s-system} simplifies to 
$\bb = (0,-\gamma\, a\,d_0^-\,d_1^-\,d_2^-/c)^T$.
Assuming that $\bC^0:=(\bc_1,\bc_2)$ is a regular matrix, the parameters $(s_1,s_2)$ are determined by
\begin{subequations}
\label{eq:s-param}%
\begin{align}
  &s_1 = \frac{\det(\bC^1)}{\det(\bC^0)} = \frac{c_{22}b_1-c_{12}b_2}{c_{11}c_{22}-c_{21}c_{12}},\
     \bC^1:=(\bb,\bc_2),\\[2mm]
  &s_2 = \frac{\det(\bC^2)}{\det(\bC^0)} = \frac{c_{11}b_2-c_{21}b_1}{c_{11}c_{22}-c_{21}c_{12}},\
   \bC^2:=(\bc_1,\bb)
\end{align}
\end{subequations}
with
\begin{subequations}
\label{eq:det}%
\begin{align}
  \det(\bC^0) = &
  \sum_{j,k=0}^2 C^0_{k,j}  e^{\lambda_k L}e^{\lambda_j L} ,
  C^0_{k,j}:= (-1)^{k+j} (a+b\,\lambda_j) \frac{\lambda_k}{\lambda_j}\frac{d_j^-\, d_k^-}{d_k^+}\left( \lambda_k\, d_j^+ - \lambda_j\, d_k^+ \right) \\[2mm]
  \det(\bC^1) = &
  \sum\limits_{k=0}^2  c^1_k \,e^{\lambda_k L} ,\
 % c^1_k:= - \gamma \frac{d_0^+\,d_1^+\,d_2^+}{b^2\, c}  (-1)^k \lambda_k\, d_k^-,  \\[2mm]
  c^1_k:= - \gamma \frac{d_0^+\,d_1^+\,d_2^+\,d_0^-\,d_1^-\,d_2^-}{b\, c}  (-1)^k \lambda_k\, d_k^- =
           (-1)^k \frac{\gamma \,a\,}{c} d_0^-\,d_1^-\,d_2^-\,\lambda_k\, d_k^-
,  \\[2mm]
  \det(\bC^2) = &
  \sum\limits_{k=0}^2 c^2_k\, e^{\lambda_k L} ,\
%  c^2_k:= \gamma \frac{d_0^+\,d_1^+\,d_2^+}{b^2\, c} (-1)^k \lambda_k^2\, \frac{d_k^-}{d_k^+}  .
  c^2_k:= \gamma \frac{d_0^+\,d_1^+\,d_2^+\,d_0^-\,d_1^-\,d_2^-}{b\, c} (-1)^k \lambda_k^2\, \frac{d_k^-}{d_k^+} =
- (-1)^k \,\frac{\gamma\,a}{ c} \,d_0^-\,d_1^-\,d_2^-\,\lambda_k^2\, \frac{d_k^-}{d_k^+}  .
\end{align}
\end{subequations}
Computing the parameters $s_1$ and $s_2$ using the formulae \eref{eq:s-param} and \eref{eq:det} is not feasible because $a,b\gg 1$ and, thus, the positive eigenvalues are large, i.e., $\lambda_0,\lambda_2\gg 1$.
%\todo{Das habe ich noch nicht allgemein verifiziert. Spielt aber auch keine Rolle. Gegebenenfalls numerieren wir die Eigenwerte anders. Geht immer, da wir ja schon gezeigt haben, dass ein Eigenwert negativ ist und die anderen beiden positiv sind.}
For a stable evaluation, it is recommended to 
%extract 
factorize
the largest exponential term, $\exp((\lambda_0+\lambda_2)\,L)$, from the determinants.
For this purpose, we first note that
the number of summands in the determinant of the matrix $\bC^0$ can be reduced employing symmetry in the coefficients
\begin{align}
\label{eq:oC0_kj}
  \oC^0_{kj}:= C^0_{kj}+C^0_{jk} =
% &=
%  (-1)^{k+j}\frac{d^-_j\,d^-_k}{\lambda_j\,\lambda_k\,d_j^+\,d_k^+}
%   (\lambda_k\,d_j^+ - \lambda_j d_k^+)
%   (\lambda_j^2\,d_k^+\,(a+b\lambda_k) - \lambda_k^2\,d_j^+ (a+b\lambda_j) \nonumber\\[2mm]
%   & = (-1)^{k+j}\frac{b}{c} \frac{d_j^-\,d_k^- \, (d_i^-)^2}{d_j^+\,d_k^+} \left( a(a+\lambda_j\,\lambda_k) + b^2 c\right)
%\nonumber\\[2mm]  
%  &\stackrel{\eref{eq:relation-dp012}}{=}
(-1)^{k+j+1} \frac{d_0^-\,d_1^- \, d_2^-}{a\,c} \left( a(a+\lambda_j\,\lambda_k)  + b^2 c\right) d_i^+\,d_i^-
\end{align}
for $k,i,j\in\{0,1,2\}$, $k<j$ and $i\ne k,j$.
In particular, we make use of the relations \eref{eq:roots-relations} and \eref{eq:cubic-polynomial}.
%Extracting 
Factorizing
the largest exponential term from the determinants yields
\begin{align}
\label{eq:determinant}
% \det(\bC^0)   = e^{(\lambda_0+\lambda_2)\,L}\, \left( \ooC^0_{02} + c^0_R\right),\
 \det(\bC^0)   = e^{(\lambda_0+\lambda_2)L}\, \left( \oC^0_{02} + c^0_R\right),\
 \det(\bC^1)   = e^{(\lambda_0+\lambda_2)L} c^1_R,\
 \det(\bC^2)   = e^{(\lambda_0+\lambda_2)L} c^2_R
\end{align}
with coefficients
\begin{subequations}
\label{eq:det_coeff}%
\begin{align}
\label{eq:det_coeff_0}
 &c^0_R:=
%\frac{c}{b}
\left( \oC^0_{01}\,e^{(\lambda_1-\lambda_2)L} +
                               \oC^0_{12}\,e^{(\lambda_1-\lambda_0)L}  \right),  \\[2mm]
\label{eq:det_coeff_1}
 &c^1_R:=
%\frac{c}{b}
\left( c^1_0\,e^{-\lambda_2 L} +
                               c^1_1\,e^{(\lambda_1-\lambda_0-\lambda_2) L} +
                               c^1_2\,e^{-\lambda_0 L}  \right),  \\[2mm]
\label{eq:det_coeff_2}
 &c^2_R:=
%\frac{c}{b}
\left( c^2_0\,e^{-\lambda_2 L} +
                               c^2_1\,e^{(\lambda_1-\lambda_0-\lambda_2) L} +
                               c^2_2\,e^{-\lambda_0 L}  \right) .
\end{align}
\end{subequations}
Finally, we can rewrite the parameters $s_1$ and $s_2$ as
\begin{align}
\label{eq:param-stable}
%  s_1 = \frac{c^1_R}{\ooC^0_{02} + c^0_R},\
%  s_2 = \frac{c^2_R}{\ooC^0_{02} + c^0_R} .
  s_1 = \frac{c^1_R}{\oC^0_{02} + c^0_R},\quad
  s_2 = \frac{c^2_R}{\oC^0_{02} + c^0_R} .
\end{align}
%Since by l'Hospital it holds
%\[
%  \lim_{x\to+\infty} x^n\,e^{-\alpha\,x} = 0,\qquad \alpha>0,\ n\in\N ,
%\]
%\todo{Das nachfolgende stimmt fuer unsere Parameter nicht!}
%the coefficients $c^k_R$ are small and the parameters can be approximated by
%\begin{align}
%  \label{eq:param-stable}
%  s_1\approx \frac{\ooC^1_{02} }{\ooC^0_{02}} = \frac{a}{\lambda_0+\lambda_2}\, (T_b-T_c),\
%  s_2\approx \frac{\ooC^2_{02} }{\ooC^0_{02}} = \frac{c+\lambda_0\,\lambda_2}{\lambda_0+\lambda_2}\, (T_c-T_b).
%\end{align}
%\begin{remark}[Regularity of matrix $\bC\equiv\bC^0$]~\\[2mm]
%For $L$ sufficiently large, the matrix $\bC$ is regular. To see this, we note that by Remark \ref{rem:signroots}
%\[
%  \ooC_{02}^0 = \frac{c}{b}\,\oC^0_{02}=
%  \frac{c}{b}\, \frac{c-\lambda_0^2}{\os_0} \, \frac{c-\lambda_2^2}{\os_2}\, (\lambda_0-\lambda_2)^2\,(\lambda_0+\lambda_2) <0
%\]
%and $c_R^0\to 0$ for $L\to\infty$. This implies $\det(\bC)<0$.
%\end{remark}
So far we assume regularity and, thus, the existence of the parameters $(s_1,s_2)$. We now verify regularity of the matrix $\bC^0$.
% at least for small $L$.
\begin{lemma}[Regularity of matrix $\bC^0$]~\\[2mm]
\label{lem:reg-determinant}
The determinant of the matrix $\bC^0$ is positive, i.e., $\bC^0$ is  regular, for all $L\ge 0$. In particular, the linear system of equations \eref{eq:s-system} has a unique solution determined by \eref{eq:s-param} or, equivalently, \eref{eq:param-stable}.
\end{lemma}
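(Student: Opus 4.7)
The plan is to reduce the claim to showing $M(L):=\oC^0_{02}+c^0_R(L)>0$ for all $L\ge 0$, via the factorization \eqref{eq:determinant} and the fact that the prefactor $e^{(\lambda_0+\lambda_2)L}>0$. By Proposition~\ref{prop:cubicpolynomial}, the exponents $\lambda_1-\lambda_2$ and $\lambda_1-\lambda_0$ appearing in $c^0_R$ are strictly negative with $\lambda_1-\lambda_0<\lambda_1-\lambda_2$, so the two exponentials in $c^0_R$ decay monotonically from $1$ to $0$ on $[0,\infty)$, the second being the faster.

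I would first compute the endpoint values of $M$. As $L\to\infty$, $M\to\oC^0_{02}$, and a direct sign analysis of \eqref{eq:oC0_kj} gives $\oC^0_{02}>0$: the outer sign $(-1)^{0+2+1}=-1$, the factor $d_0^-d_1^-d_2^-/(ac)<0$ (from the root signs in Proposition~\ref{prop:cubicpolynomial}), $d_1^+d_1^-=\lambda_0^2-\lambda_2^2>0$, and $a(a+\lambda_0\lambda_2)+b^2c>0$ since $\lambda_0\lambda_2>0$. At $L=0$, $M(0)=\oC^0_{02}+\oC^0_{01}+\oC^0_{12}$; using $d_i^+d_i^-=\lambda_j^2-\lambda_k^2$ for $\{i,j,k\}=\{0,1,2\}$, the $(a^2+b^2c)$-contributions in \eqref{eq:oC0_kj} telescope to zero, and the remaining $a\lambda_j\lambda_k$-contributions reduce via the Schur--Vandermonde identity
\[
\lambda_0\lambda_1(\lambda_0^2-\lambda_1^2)-\lambda_0\lambda_2(\lambda_0^2-\lambda_2^2)+\lambda_1\lambda_2(\lambda_1^2-\lambda_2^2) = b\,d_0^-d_1^-d_2^-,
\]
yielding the closed form $M(0)=b(d_0^-d_1^-d_2^-)^2/c>0$.

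Next I would analyze the signs of $\oC^0_{01}$ and $\oC^0_{12}$. From the relation $\lambda_0\lambda_1\lambda_2=-bc$ in \eqref{eq:roots-relations} one gets $a+\lambda_j\lambda_k=a-bc/\lambda_i$, hence
\[
a(a+\lambda_j\lambda_k)+b^2c = a^2+b^2c-\frac{abc}{\lambda_i}.
\]
Setting $\Lambda:=abc/(a^2+b^2c)>0$, this expression is non-negative iff $\lambda_i\ge\Lambda$. Combined with the outer factors in \eqref{eq:oC0_kj} (which amount to an overall minus sign for $(k,j)\in\{(0,1),(1,2)\}$), this gives $\oC^0_{01}\le 0\Leftrightarrow\lambda_2\ge\Lambda$ and $\oC^0_{12}\le 0\Leftrightarrow\lambda_0\ge\Lambda$. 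Because $\lambda_0>\lambda_2>0$ by Proposition~\ref{prop:cubicpolynomial}, the combination $\oC^0_{01}<0$ together with $\oC^0_{12}>0$ would require $\lambda_2>\Lambda$ and $\lambda_0<\Lambda$ simultaneously, which is impossible.

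Exactly three cases therefore remain. If $\oC^0_{01},\oC^0_{12}\le 0$, then $M'(L)\ge 0$ and $M(L)\ge M(0)>0$. If $\oC^0_{01},\oC^0_{12}\ge 0$, then $M'(L)\le 0$ and $M(L)\ge M(\infty)=\oC^0_{02}>0$. In the mixed case $\oC^0_{01}>0>\oC^0_{12}$, the unique critical point $L^*$ of $M$ in $[0,\infty)$ (if it exists) turns out to be a local maximum: with $r_1:=\lambda_2-\lambda_1$ and $r_2:=\lambda_0-\lambda_1$, a second-derivative check at $M'(L^*)=0$ yields $M''(L^*)=r_2|\oC^0_{12}|e^{-r_2L^*}(r_1-r_2)<0$ since $r_1<r_2$; hence $M(L)\ge\min(M(0),M(\infty))>0$. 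Combining all cases, $\det(\bC^0)>0$ for all $L\ge 0$, and \eqref{eq:s-system} has the unique solution given by \eqref{eq:s-param}. The main obstacle is the case-by-case bookkeeping, in particular ruling out the bad sign combination via the root ordering and verifying that the mixed case produces a maximum, not a minimum, of $M$.
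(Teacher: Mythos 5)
Your proof is correct and follows essentially the same route as the paper's: factor out $e^{(\lambda_0+\lambda_2)L}$, establish $\oC^0_{02}>0$ and $\oC^0_{02}+\oC^0_{01}+\oC^0_{12}=b\,(d_0^-d_1^-d_2^-)^2/c>0$, and run the same four-case sign analysis on $\oC^0_{01}$, $\oC^0_{12}$, excluding the combination $\oC^0_{01}<0<\oC^0_{12}$ via the ordering $\lambda_0>\lambda_2$. The only differences are in execution: the paper bounds the exponentials by $1$ directly in each surviving case where you instead argue via $M'$ and a second-derivative test, and your explicit threshold $\Lambda=abc/(a^2+b^2c)$ spells out the detail the paper leaves implicit in its Case~4.
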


\begin{proof}
According to \eref{eq:determinant}, \eref{eq:oC0_kj} and \eref{eq:det_coeff_0}, the determinant of the matrix $\bC^0$ can be written as
\begin{align}
\label{eq:detC0-help}
   \det(\bC^0) =  
- e^{(\lambda_0+\lambda_2) L}\, \frac{d_0^-\,d_1^-\,d_2^-}{a\,c}
\left( 
\ooC^0_{02} + \ooC^0_{01}\,e^{(\lambda_1-\lambda_2) L} +  \ooC^0_{12}\,e^{(\lambda_1-\lambda_0) L}
\right)
\end{align}
with coefficients
\begin{align}
\label{eq:ooC_kj}
  \ooC_{kj}^0 := - \frac{a\,c}{d_0^-\,d_1^-\,d_2^-}\, \oC_{kj}^0 = 
  (-1)^{k+j} \left( a(a+\lambda_j\,\lambda_k) + b^2 c\right) d_i^+\,d_i^-,\
  i,k,j\in\{0,1,2\},\ k\ne j\ne i.
\end{align}
To estimate the sign of the determinant, we will employ the following estimates
\begin{subequations}
\label{eq:estimates}%
\begin{align}
\label{eq:estimates-dm_k}
  &d_0^- = \lambda_1 - \lambda_2 <0,\
   d_1^- = \lambda_0 - \lambda_2 >0,\
   d_2^- = \lambda_0 - \lambda_1 >0,\\
  &d_0^+ = \lambda_1 + \lambda_2 <0,\
   d_1^+ = \lambda_0 + \lambda_2 >0,\
   d_2^+ = \lambda_0 + \lambda_1 >0,\\
% & c- \lambda_0^2 <0,\
%   c- \lambda_1^2 <0 ,\
%   c- \lambda_2^2 >0,\\
 & \lambda_0 >0,\
   \lambda_1 <0 ,\
   \lambda_2 >0
% & \lambda_0\ \lambda_1 <0,\
%   \lambda_0\ \lambda_2 >0 ,\
%   \lambda_1\ \lambda_2 <0
\end{align}
\end{subequations}
that hold because of Proposition \ref{prop:cubicpolynomial}. Then the prefactor in \eref{eq:detC0-help} is positive. Obviously, the coefficient $\ooC_{02}^0$ is also positive. To determine the sign of the determinant, we now distinguish four cases for the the sign of the remaining two coefficients to estimate the sum in the parentheses.\\
\textbf{Case 1}: If both $\ooC_{01}^0$ and $\ooC_{12}^0$ are positive, then all the terms in the parentheses of  \eref{eq:detC0-help} are positive.\\[0.1cm]
\textbf{Case 2}: If both $\ooC_{01}^0$ and $\ooC_{12}^0$ are negative, then we deduce by means of \eref{eq:estimates}
\begin{align*}
\ooC^0_{02} + \ooC^0_{01}\,e^{(\lambda_1-\lambda_2) L} +  \ooC^0_{12}\,e^{(\lambda_1-\lambda_0) L} \ge
%\ooC^0_{02} + e^{(\lambda_1-\lambda_2)\,L} \left( \ooC^0_{01} +  \ooC^0_{12}\,e^{(\lambda_2-\lambda_0)\,L} \right) \ge
\ooC^0_{02} + \ooC^0_{01} +  \ooC^0_{12} = 
% - \frac{a\,c}{d_0^-\,d_1^-\,d_2^-}\, \left(  \oC^0_{02} + \oC^0_{01} + \oC^0_{12} \right)  =
%  - \frac{a\,c}{d_0^-\,d_1^-\,d_2^-}\,  \frac{b}{c} (d_0^-)^2 \, (d_1^-)^2 \, (d_2^-)^2  =
  - a\,b\,   d_0^- \, d_1^- \, d_2^- > 0 .
\end{align*}
\textbf{Case 3}: If  $\ooC_{01}^0$ is positive and $\ooC_{12}^0$ is negative, then the following estimate holds
\begin{align*}
%\ooC^0_{02} + \ooC^0_{01}\,e^{(\lambda_1-\lambda_2)\,L} +  \ooC^0_{12}\,e^{(\lambda_1-\lambda_0)\,L} =
\ooC^0_{02} + e^{(\lambda_1-\lambda_2) L}\, \left( \ooC^0_{01} +  \ooC^0_{12}\,e^{(\lambda_2-\lambda_0) L} \right) \ge
e^{(\lambda_1-\lambda_2) L}\, \left(\ooC^0_{02} +  \ooC^0_{01} +  \ooC^0_{12} \right) \ge 0.
\end{align*}
\textbf{Case 4}: Finally, assuming that $\ooC_{01}^0$ is negative and $\ooC_{12}^0$ is positive, the definition \eref{eq:ooC_kj} of the coefficients implies $\lambda_2>\lambda_0$. This contradicts $\eref{eq:estimates-dm_k}$. Thus, this case cannot occur.\\
Finally, we conclude that the determinant of the matrix $\bC^0$ is positive for all $L>0$.
%\qed
\end{proof}

Since now the parameters can be computed,  we  consider next the evaluation of the temperatures
$T_f(y)\equiv z_1(y;s_1,s_2)$ and $T_s(y)\equiv z_3(y;s_1,s_2)$. Again, the formulae
\eref{eq:temp-solution-complex-3a} and \eref{eq:temp-solution-complex-3c} are not feasible because of the large exponential factors. Therefore, we proceed as before and 
%extract
factorize
the largest factor, where we split $T_{f,s}$ in a non-constant and a constant part.
We do this separately for the terms
\begin{subequations}
\label{eq:temp-complex-split}%
\begin{align}
%\label{eq:temp-complex-split-a}
%&T_{f,s}(x) = \oT_{f,s}(x) - \oT :=
%             \sum_{k=0}^2 c_{f,s}^k b_k e^{\lambda_k\,x}
%             -\sigma_3\left( c\,b^2\, T_b - c\, b\, s_1 + s_2 \prod_{k=0}^2 (b-\lambda_k) \right),\\[2mm]
\label{eq:temp-complex-split-a}
&T_{f,s}(y) = \oT_{f,s}(y) + \oT :=
             \sum_{k=0}^2 c_{f,s}^k b_k e^{\lambda_k y}
             -\sigma_3\left( c\,b^2\, T_b - c\, b\, s_1 + s_2 \prod_{k=0}^2 (b-\lambda_k) \right),\\[2mm]
\label{eq:temp-complex-split-b}
&c_{f}^k:=(-1)^k \frac{\sigma_k}{\lambda_k}\frac{c-\lambda_k^2}{c},\
 c_{s}^k:=(-1)^k \frac{\sigma_k}{\lambda_k},\
 b_k:= \lambda_k\,s_1 + (b-\lambda_k)\,s_2 .
\end{align}
\end{subequations}
Inserting \eref{eq:param-stable} into the term $\oT$ and employing the splitting $\ooC_{02}^k+c_R^k$ into large and small contributions, we obtain a similar splitting
\begin{align}
  & \oT := \frac{1}{1+\oc_R^0}\, \left( \oT_1 +  \oT_2 \right), \nonumber \\[2mm]
  & \oT_1:=  -\sigma_3\,b^2\,c\,T_b = T_b, \nonumber\\[2mm]
  & \oT_2:=  -\sigma_3\,\frac{1}{\oC^0_{02}}\, \left( b^2\,c\,T_b\, c^0_R - b\,c\,c_R^1 + c_R^2 (b-\lambda_0)(b-\lambda_1) (b-\lambda_2) \right)
\nonumber\\[2mm]
&\hphantom{\oT_2}
 = \oT_{01} e^{(\lambda_1-\lambda_2) L} + \oT_{12} e^{(\lambda_1-\lambda_0) L} +
  \ooT_0\, e^{-\lambda_2 L} + \ooT_1\, e^{(\lambda_1-\lambda_0-\lambda_2) L} + \ooT_2\, e^{-\lambda_0 L}, \nonumber\\[2mm]
&\oT_{kj}:=  -\sigma_3\,\frac{\oC^0_{kj}}{\oC^0_{02}}\, \,b^2\, c \, T_b
= \frac{\oC^0_{kj}}{\oC^0_{02}}  \, T_b = (-1)^{k+j} \frac{d_i^-\,d_i^+}{d_1^-\,d_1^+} \frac{a\,(a+\lambda_j\,\lambda_k)+b^2\,c}{a\,(a+\lambda_0\,\lambda_2)+b^2\,c}\,T_b
, \nonumber \\[2mm]
&\ooT_k:= \sigma_3\,\frac{1}{\oC^0_{02}}\,  \left(  b\,c\,\,c^1_k - (b-\lambda_0)(b-\lambda_1)(b-\lambda_2)\,c_k^2 \right) =
%\nonumber \\[2mm]
%&\hphantom{\ooT_k}
% = (-1)^{k+1} \sigma_3\,\frac{1}{\oC^0_{02}}\, \gamma\,\frac{d_0^+\,d_1^+\,d_2^+\,d_0^-\,d_1^-\,d_2^-\,\lambda_k\,d_k^-}{b\,c\,d_k^+} \left(  b\,c\,\,d_k^+ + (b-\lambda_0)(b-\lambda_1)(b-\lambda_2)\,\lambda_k \right) \nonumber \\[2mm]
%&\hphantom{\ooT_k}
% \stackrel{\eref{eq:bc-complex-coeff}}{=}
%   (-1)^{k+1}\sigma_3\,\frac{1}{\oC^0_{02}}\, \gamma\,\frac{d_0^+\,d_1^+\,d_2^+\,d_0^-\,d_1^-\,d_2^-\,\lambda_k^3\,d_k^-}{c}
%\nonumber \\[2mm]
%&\hphantom{\ooT_k}
%= (-1)^{k}\,\frac{1}{\oC^0_{02}}\, \gamma\,\frac{d_0^+\,d_1^+\,d_2^+\,d_0^-\,d_1^-\,d_2^-\,\lambda_k^3\,d_k^-}{b^2\,c^2}
%\nonumber \\[2mm]
%&\hphantom{\ooT_k}
%\stackrel{\eref{eq:relation-dp012}}{=}
(-1)^{k}\,\frac{\gamma\, a^2}{c\,b}\,\frac{1}{a\,(a+\lambda_0\,\lambda_2)+c\,b^2}\,\frac{d_k^-\,\lambda_k^3}{d_1^-\,d_1^+},\nonumber \\[2mm]
&\oc_R^0 :=\frac{c_R^0}{\oC^0_{02}} = 
\oc^0_{01}\,e^{(\lambda_1-\lambda_2) L} +
                               \oc^0_{12}\,e^{(\lambda_1-\lambda_0) L}, \nonumber\\[2mm]
&\oc^0_{kj}:= \frac{\oC^0_{kj}}{\oC^0_{02}} = 
(-1)^{k+j} \frac{d_i^-\,d_i^+}{d_1^-\,d_1^+} \frac{a\,(a+\lambda_j\,\lambda_k)+b^2\,c}{a\,(a+\lambda_0\,\lambda_2)+b^2\,c} = \frac{\oT_{kj}}{T_b}. \nonumber
\end{align}
Next we consider $\oT_{f,s}$. We start with separating small and large contributions in the coefficients $b_k$:
\begin{align}
 & b_k := \frac{ b_k^2}{1+\oc_R^0}, \nonumber \\[2mm]
  & b_k^2 := \frac{1}{\oC^0_{02}}\, \left(  \lambda_k\,c_R^1+(b-\lambda_k)\,c_R^2 \right) \nonumber
=
  \ob_k^0\, e^{-\lambda_2 L} + \ob_k^1\, e^{(\lambda_1-\lambda_0-\lambda_2) L} + \ob_k^2\, e^{-\lambda_0 L},\nonumber \\[2mm]
& \ob_k^l := \frac{1}{\oC^0_{02}}\, \left( \lambda_k\,c^1_l + (b-\lambda_k)\,c^2_l\right) =
%=(-1)^l \, \frac{1}{\oC^0_{02}}\, \gamma\,\frac{d_0^+\,d_1^+\,d_2^+\,d_0^-\,d_1^-\,d_2^-\,\lambda_l\,d_l^-}{c\,d_l^+} \left( \lambda_l -  \lambda_k \right) =
%\nonumber \\[2mm]
%&\hphantom{\ob_k^l}
%=(-1)^{l+1} \, \frac{\gamma\,a}{a\,(a+\lambda_0\lambda_2)+c\,b^2}\, \frac{d_0^+\,d_1^+\,d_2^+}{d_1^-\,d_1^+}\,
%\frac{\lambda_l\,d_l^-}{d_l^+}  ( \lambda_l -  \lambda_k)  
%\nonumber \\[2mm]
%&\hphantom{\ob_k^l}
%\stackrel{\eref{eq:relation-dp012}}{=} 
(-1)^l \,  \frac{\gamma\,b\, a^2}{a\,(a+\lambda_0\lambda_2)+c\,b^2} \frac{1}{d_1^-\,d_1^+}
\frac{\lambda_l\,d_l^-}{d_l^+}  ( \lambda_l -  \lambda_k).
\nonumber 
\end{align}
%\revsm{Achtung: In $\ob_k^l$ scheint der Nenner $\oC^0_{02}$ zu fehlen. Mit Maple ueberpruefen!}
Incorporating the splitting of $b_k$ in $\oT_{f,s}$, we obtain
\begin{align}
  &\oT_{f,s}(y) := \frac{1}{1+\oc_R^0}\,  \oT_{f,s}^2 (y) ,
\nonumber\\[2mm]
 & \oT_{f,s}^2(y) := \sum_{k=0}^2 c_{f,s}^k\,b_k^2\, e^{\lambda_k y} =
\sum_{k=0}^2 \oT^{2,k}_{f,s}(y), \nonumber \\[2mm]
  & \oT^{2,0}_{f,s}(y) :=
   \sum_{k=0}^2 c_{f,s}^k\,\ob_k^0\, e^{-\lambda_2 L+\lambda_k y}
  = \sum_{k=1,2} c_{f,s}^k\,\ob_k^0\, e^{-\lambda_2 L+\lambda_k y}
  = \sum_{k=1,2} \oc_{f,s}^{0,k}\, e^{-\lambda_2 L+\lambda_k y}, \nonumber \\[2mm]
  & \oT^{2,1}_{f,s}(y) :=
   \sum_{k=0}^2 c_{f,s}^k\,\ob_k^1\, e^{(\lambda_1-\lambda_0-\lambda_2) L+\lambda_k y}
  = \sum_{k=0,2} c_{f,s}^k\,\ob_k^1\, e^{(\lambda_1-\lambda_0-\lambda_2) L+\lambda_k y}  
 = \sum_{k=0,2} \oc_{f,s}^{1,k}\, e^{(\lambda_1-\lambda_0-\lambda_2) L+\lambda_k y}  
, \nonumber \\[2mm]
  & \oT^{2,2}_{f,s}(y) :=
   \sum_{k=0}^2 c_{f,s}^k\,\ob_k^2\, e^{-\lambda_0 L+\lambda_k y}
  = \sum_{k=0,1} c_{f,s}^k\,\ob_k^2\, e^{-\lambda_0 L+\lambda_k y}
  = \sum_{k=0,1} \oc_{f,s}^{2,k}\, e^{-\lambda_0 L+\lambda_k y}, \nonumber \\[2mm]
%  & \oc_f^{l,k}:=c_f^k\,\ob_k^l = s_{l,k} \gamma \,\frac{1}{\oC^0_{02}}\, \frac{d_k^+\, d_l^+\, (d_i^+)^2}{ \lambda_k^2\,\lambda_i} d_0^-\,d_1^-\,d_2^-,\ s_{0,1}= s_{1,2}=s_{2,0}=1,\ s_{0,2}=s_{1,0}=s_{2,1}=-1, \nonumber\\[2mm]
%  & \oc_s^{l,k}:=c_s^k\,\ob_k^l = s_{l,k} \gamma\,\frac{1}{\oC^0_{02}}\,  \frac{d_k^+\,  d_i^+ \, \lambda_l}{ \lambda_k^2} d_0^-\,d_1^-\,d_2^-,\ s_{0,1}= s_{1,2}=s_{2,0}=1,\ s_{0,2}=s_{1,0}=s_{2,1}=-1, \nonumber\\[2mm]
  & \oc_f^{l,k}:=c_f^k\,\ob_k^l = 
%(-1)^{l+k+1} \frac{\gamma}{b} \frac{(c-\lambda_k^2)\,(\lambda_l-\lambda_k)}{a\,(a+\lambda_0\lambda_2)+c\,b^2}\,\frac{\lambda_l}{\lambda_k^2} \, 
% \frac{\left(d_0^+\,d_1^+\,d_2^+\right)^2}{d_0^-\,d_1^-\,d_2^-}     
%\frac{d_l^-\, d_k^-}{d_l^+\, d_1^+\, d_1^-},\  
%\nonumber\\[2mm]
%&\hphantom{ \oc_f^{l,k}:=c_f^k\,\ob_k^l}
%\stackrel{\eref{eq:relation-dp012}}{=} 
(-1)^{l+k+1} \frac{\gamma\,b\,a^2}{a\,(a+\lambda_0\lambda_2)+c\,b^2} \, 
 \frac{1}{d_0^-\,d_1^-\,d_2^-}  \,  \frac{1}{ d_1^+\, d_1^-}\,  
\frac{d_l^-\, d_k^-}{d_l^+}\,\frac{\lambda_l}{\lambda_k^2}\,(c-\lambda_k^2)\,(\lambda_l-\lambda_k),\  
\nonumber\\[2mm]
  & \oc_s^{l,k}:=c_s^k\,\ob_k^l = 
%(-1)^{l+k+1} \frac{\gamma\, c}{b} \frac{(\lambda_l-\lambda_k)}{a\,(a+\lambda_0\lambda_2)+c\,b^2}\,\frac{\lambda_l}{\lambda_k^2} \, 
 %\frac{\left(d_0^+\,d_1^+\,d_2^+ \right)^2 }{d_0^-\,d_1^-\,d_2^-}   
%\frac{d_k^-\,d_l^-}{d_l^+\, d_1^+\, d_1^-},\  
%\nonumber \\[2mm]
%&\hphantom{\oc_s^{l,k}:=c_s^k\,\ob_k^l}
%\stackrel{\eref{eq:relation-dp012}}{=}
(-1)^{l+k+1} \frac{\gamma\,b\,a^2}{a\,(a+\lambda_0\lambda_2)+c\,b^2}\, 
 \frac{1 }{d_0^-\,d_1^-\,d_2^-} \, \frac{1}{ d_1^+\, d_1^-}\, \frac{\lambda_l}{\lambda_k^2} \,
\frac{d_k^-\,d_l^-}{d_l^+}\,c\,(\lambda_l-\lambda_k).
\nonumber
\end{align}
Combining the above formulae, we obtain the following representation for the temperatures
\begin{align}
\label{eq:Tfs}
&T_{f,s}(y) = 
T_b - \frac{\gamma\,a}{c}
\frac{d_0^-\,d_1^-\,d_2^-}{\oC^0_{02}+c_R^0}
\left( 
 \ooc^{0,1}_{f,s} e^{-\lambda_2 L+\lambda_1 y} +
 \ooc^{0,2}_{f,s} e^{-\lambda_2 L+\lambda_2 y} +
 \ooc^{1,0}_{f,s} e^{(\lambda_1-\lambda_0-\lambda_2) L+\lambda_0 y} + \right.
\nonumber\\[2mm]
&\hspace*{45mm}
 \ooc^{1,2}_{f,s} e^{(\lambda_1-\lambda_0-\lambda_2) L+\lambda_2 y} +
 \ooc^{2,0}_{f,s} e^{-\lambda_0 L+\lambda_0 y} +
 \ooc^{2,1}_{f,s} e^{-\lambda_0 L+\lambda_1 y} +
 \nonumber\\[2mm]
&\hspace*{44mm}
\left. 
\frac{1}{c\, b}
\left(
  \oooT_0\, e^{-\lambda_2 L} + \oooT_1\, e^{(\lambda_1-\lambda_0-\lambda_2) L} + \oooT_2\, e^{-\lambda_0 L}
\right)
\right)
\end{align}
with coefficients for $l,k,i\in \{0,1,2\}$, $l\ne k\ne i$,
\begin{subequations}
\begin{align}
\ooc^{l,k}_{f}&:= -\frac{c}{\gamma\, a} \oC^0_{02}\, \frac{1}{d_0^-\,d_1^-\,d_2^-}\,\oc^{l,k}_{f} =
%(-1)^{l+k+1} \, \sign(k-l)\,
%\frac{b}{d_l^+}\,\frac{\lambda_l}{\lambda_k^2}\,(c-\lambda_k^2) \nonumber\\[2mm]
%&=
(-1)^{l+k} \, \sign(k-l)\,\frac{d_i^+\,\lambda_l}{\lambda_k},\\[2mm]
\ooc^{l,k}_{s}&:= -\frac{c}{\gamma\, a} \oC^0_{02}\, \frac{1}{d_0^-\,d_1^-\,d_2^-}\,\oc^{l,k}_{s} =
%(-1)^{l+k+1}   \, \sign(k-l)\,
%\frac{b}{d_l^+}\,\frac{\lambda_l}{\lambda_k^2}\,c \nonumber\\[2mm]
%&=
(-1)^{l+k}   \, \sign(k-l)\,\frac{\lambda_i\,\lambda_l^2}{\lambda_k\,d_l^+},\\[2mm]
\oooT_k &:= -\frac{c}{\gamma\,b\, a} \oC^0_{02}\, \frac{c\,b^2}{d_0^-\,d_1^-\,d_2^-} \, \ooT_k =
(-1)^k\,\lambda_k^3\,d_k^-
\end{align}
\end{subequations}
and $\oC^0_{02}$ , $c_R^0$, $d_k^\pm$ defined by \eref{eq:det_coeff_0}, \eref{eq:oC0_kj} and \eref{eq:dk_pm}, respectively. Furthermore, the parameters $a$, $b$, $c$ and $\gamma$ are defined in \eref{eq:pos-matrix-entries} and \eref{eq:bc-complex-coeff-gamma}. Finally, the eigenvalues $\lambda_k$ are determined by \eref{eq:roots-cardano}.

To derive  properties of the solution of the boundary value problem 
%\eref{eq:temp-system-full}, \eref{eq:temp-system-ic-full} and \eref{eq:bc-complex} 
\eref{eq:temp-system-full} -- \eref{eq:bc-complex},
the monotonicity of the temperatures is of key importance. 
For this purpose, we consider the derivative of the temperatures given by
\begin{align}
\label{eq:Tfs_prime}
&T'_{f,s}(y) = 
 - \frac{\gamma\,a}{c}
\frac{d_0^-\,d_1^-\,d_2^-}{\oC^0_{02}+c_R^0}
\left( 
 \ood^{0,1}_{f,s} e^{-\lambda_2 L+\lambda_1 y} +
 \ood^{0,2}_{f,s} e^{-\lambda_2 L+\lambda_2 y} +
 \ood^{1,0}_{f,s} e^{(\lambda_1-\lambda_0-\lambda_2) L+\lambda_0 y} + 
\right.
\nonumber\\[2mm]
&\hspace*{43mm}
\left. 
 \ood^{1,2}_{f,s} e^{(\lambda_1-\lambda_0-\lambda_2) L+\lambda_2 y} +
 \ood^{2,0}_{f,s} e^{-\lambda_0 L+\lambda_0 y} +
 \ood^{2,1}_{f,s} e^{-\lambda_0 L+\lambda_1 y} 
\right) 
\end{align}
with coefficients $\ood^{l,k}_{f,s}:= \ooc^{l,k}_{f,s}\, \lambda_k $. 

\begin{lemma}[Estimates for the derivatives of the temperatures]~\\[2mm]
\label{lem:temp-der}
%Let be $L>0$ such that the determinant of the matrix $\bC^0$ is positive. 
Let be $L>0$.
Then the derivatives of the temperatures can be estimated by
\begin{align}
\label{eq:estimate_Tfs_prime}
T'_f(y) \ge 
% - \frac{\gamma\,a}{c} \frac{d_0^-\,d_1^-\,d_2^-}{\oC^0_{02}+c_R^0}
% e^{-\lambda_2\,L+\lambda_2\,x} \frac{-d_0^-\,d_1^-\,d_2^-}{c} =
 \frac{\gamma\,a}{c^2} \frac{(d_0^-\,d_1^-\,d_2^-)^2}{\oC^0_{02}+c_R^0}
 e^{-\lambda_2 L+\lambda_2 y}  \ge 0,\
T'_s(y) \ge 
% - \frac{\gamma\,a}{c} \frac{d_0^-\,d_1^-\,d_2^-}{\oC^0_{02}+c_R^0}
% e^{(\lambda_1-\lambda_0-\lambda_2)\,L+\lambda_2\,x}\frac{-d_0^-\,d_1^-\,d_2^-}{c} =
 \frac{\gamma\,a}{c^2} \frac{(d_0^-\,d_1^-\,d_2^-)^2}{\oC^0_{02}+c_R^0}
 e^{(\lambda_1-\lambda_0-\lambda_2) L+\lambda_2 y}  \ge 0 
\end{align}
for $0\le y \le L$.
In particular, the derivatives vanish if and only if $\gamma=0$.
\end{lemma}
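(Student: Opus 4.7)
The plan is to establish the asserted lower bounds by first signing the scalar prefactor in the representation \eref{eq:Tfs_prime} and then regrouping the six exponential summands so that a single dominant positive contribution survives.

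As a first step I would check that the overall factor $-(\gamma a/c)(d_0^-d_1^-d_2^-)/(\oC^0_{02}+c^0_R)$ is nonnegative. The sign estimates \eref{eq:estimates} gathered in the proof of Lemma~\ref{lem:reg-determinant} give $d_0^-<0$ and $d_1^-,d_2^->0$, so the triple product is negative; the denominator $\oC^0_{02}+c^0_R$ is positive by Lemma~\ref{lem:reg-determinant}; and $\gamma\ge 0$ by \eref{eq:bc-complex-coeff-gamma}. Hence the scalar factor is nonnegative and vanishes precisely when $\gamma=0$. Using the explicit formulas for $\ood^{l,k}_{f,s}=\ooc^{l,k}_{f,s}\lambda_k$, combined with the full sign pattern $\lambda_0,\lambda_2>0$, $\lambda_1<0$, $d_0^+<0$, $d_1^+,d_2^+>0$ recorded in \eref{eq:estimates}, I would then enumerate the six coefficients and verify that on both the fluid and the solid side only $\ood^{0,2}_{f,s}$ is positive, while the remaining five carry the opposite sign.

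The next step is to regroup the signed sum by pairs that share the same $L$-exponential: $\{(0,1),(0,2)\}$ with $e^{-\lambda_2 L}$, $\{(1,0),(1,2)\}$ with $e^{(\lambda_1-\lambda_0-\lambda_2)L}$, and $\{(2,0),(2,1)\}$ with $e^{-\lambda_0 L}$. Inside each pair I would use the elementary monotonicity $e^{\lambda_j(y-L)}\ge e^{\lambda_i(y-L)}$ whenever $0<\lambda_j\le\lambda_i$ and $y\le L$, together with the corresponding inequality for negative exponents, to bound the negative summand against its positive partner. The combined contribution of each pair then takes the form $\lambda_l\,d_i^+\,(\text{difference of exponentials})$, whose sign is determined by the sign table from the first step.

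The main obstacle will be to turn the three pair bounds into the closed form asserted in \eref{eq:estimate_Tfs_prime}. Here I would repeatedly invoke the Vieta relations \eref{eq:roots-relations}, the identity $d_k^+(c-\lambda_k^2)=a\lambda_k$ from \eref{eq:relation-dp_k}, and the product identity $d_0^+d_1^+d_2^+=-ab$ to rewrite the products of $d_i^\pm$ consistently in terms of $d_0^-d_1^-d_2^-$, $c$, and $\oC^0_{02}+c^0_R$. The solid case proceeds in the same way, except that the dominant positive pair produces the weaker exponential $e^{(\lambda_1-\lambda_0-\lambda_2)L+\lambda_2 y}$ instead of $e^{-\lambda_2 L+\lambda_2 y}$, which explains the different exponents in the two inequalities. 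Finally, the ``if and only if'' claim is immediate: the $\gamma$-linearity of the right-hand side of \eref{eq:Tfs_prime} makes $\gamma=0$ sufficient for $T'_{f,s}\equiv 0$, while for $\gamma>0$ every remaining factor in the lower bound is strictly positive by Lemma~\ref{lem:reg-determinant} and Proposition~\ref{prop:cubicpolynomial}, so $T'_{f,s}$ cannot vanish identically.
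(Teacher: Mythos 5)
Your first step (signing the prefactor via Lemma~\ref{lem:reg-determinant} and the estimates \eref{eq:estimates}) matches the paper, but the argument then goes off track in two places. First, the sign table you assert for the solid coefficients is wrong: from $\ood^{l,k}_s=(-1)^{l+k}\,\sign(k-l)\,\lambda_i\lambda_l^2/d_l^+$ one finds that \emph{five} of the six coefficients $\ood^{l,k}_s$ are positive and only $\ood^{1,2}_s$ is negative, whereas for the fluid only $\ood^{0,2}_f$ is positive. This asymmetry is not cosmetic: it dictates that for $T_f'$ one factors out the \emph{slowest}-decaying exponential $e^{-\lambda_2 L+\lambda_2 y}$ (so the remaining exponents are $\le 0$ and the exponentials, multiplying negative coefficients, can be replaced by $1$ from above), while for $T_s'$ one factors out the \emph{fastest}-decaying exponential $e^{(\lambda_1-\lambda_0-\lambda_2)L+\lambda_2 y}$ (so the remaining exponents are $\ge 0$ and the exponentials, multiplying positive coefficients, can be replaced by $1$ from below). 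Your ``proceeds in the same way'' cannot produce the two different exponentials in \eref{eq:estimate_Tfs_prime} from an identical sign pattern.

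Second, and more seriously, the pairwise regrouping by common $L$-exponential cannot deliver the bound. For the fluid one has $\ood^{1,0}_f=d_2^+\lambda_1<0$ \emph{and} $\ood^{1,2}_f=-d_0^+\lambda_1<0$, and likewise $\ood^{2,0}_f<0$ and $\ood^{2,1}_f<0$: two of your three pairs contain no positive partner, so their combined contribution is strictly negative for $\gamma>0$ and admits no nonnegative lower bound in isolation. The cancellation structure $\cc^{0,2}=-\cc^{0,1}$, etc., which makes exactly this pairing work for $T_s-T_f$ in Lemma~\ref{lem:temp-diff}, does not carry over to the $\ood^{l,k}_{f,s}$. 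The paper instead pools all six summands: after factoring out the dominant exponential, each summand is bounded by its coefficient as described above, so the whole bracket is bounded below by the total sum $\sum_{l\ne k}\ood^{l,k}_{f,s}$, i.e., the single exceptional term is played off against the other five simultaneously rather than within a pair. You would need to replace your pair bounds by this global comparison (and then evaluate the resulting coefficient sum) to close the argument; the concluding ``if and only if'' remark is fine once the main estimate is in place.
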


\begin{proof}
%For $L$ sufficiently small we conclude that the determinant of the matrix $\bC^0$ is positive 
%Since by assumption $L>0$ is chosen such that the determinant of the matrix $\bC^0$ is positive,
Since by Lemma \ref{lem:reg-determinant} the determinant of the matrix $\bC^0$ is positive,
the term $\oC^0_{02}+c_R^0$ is positive as well. By means of the estimates \eref{eq:estimates}, the prefactor in \eref{eq:Tfs_prime} is positive and it holds
\begin{align*}
  & \ood^{0,1}_{f}<0,\ \ood^{0,2}_{f}>0,\ \ood^{1,0}_{f}<0,\ \ood^{1,2}_{f}<0,\ 
    \ood^{2,0}_{f}<0,\ \ood^{2,1}_{f}<0,   \\[2mm]
 & \ood^{0,1}_{s}>0,\ \ood^{0,2}_{s}>0,\ \ood^{1,0}_{s}>0,\ \ood^{1,2}_{s}<0,\ 
    \ood^{2,0}_{s}>0,\ \ood^{2,1}_{s}>0.
\end{align*}
For the sum in the parenthesis in \eref{eq:Tfs_prime}, we consider the fluid and the solid case separately.
According to the eigenvalues, the term $e^{-\lambda_2 L+\lambda_2 y} $ decays much slower than the other exponential terms. For the fluid temperature, we factorize this term and estimate the remaining part by
\begin{align*}
&
 \ood^{0,1}_{f,s} e^{(\lambda_1-\lambda_2) y} +
 \ood^{0,2}_{f,s}  +
 \ood^{1,0}_{f,s} e^{(\lambda_1-\lambda_0) L+(\lambda_0-\lambda_2) y} + 
\nonumber\\[2mm]
&
 \ood^{1,2}_{f,s} e^{(\lambda_1-\lambda_0) L} +
 \ood^{2,0}_{f,s} e^{(\lambda_2-\lambda_0) (L-y)} +
 \ood^{2,1}_{f,s} e^{(\lambda_2-\lambda_0) L+(\lambda_1-\lambda_2) y} 
 \ge\\[2mm]
&\ood^{0,1}_f  + \ood^{0,2}_f + \ood^{1,0}_f + \ood^{1,2}_f  + \ood^{2,0}_f + \ood^{2,1}_f =
\frac{d_0^-\,d_1^-\,d_2^-\,(\lambda_0+\lambda_1+\lambda_2)}{\lambda_0\,\lambda_1\,\lambda_2} =
-\frac{d_0^-\,d_1^-\,d_2^-}{c} .
\end{align*}
Here we use that for $0\le y\le L$ all exponents are non-positive and, thus, the exponential functions can be estimated by 1 from above, whereas the factors of these exponential terms are all negative.\\
For the solid temperature, we proceed similarly. According to the eigenvalues, the term $e^{(\lambda_1-\lambda_0-\lambda_2) L+\lambda_2 y}$ decays much faster than the other exponential terms. Again, we factorize this term from the parenthesis term  in \eref{eq:Tfs_prime} and estimate the remaining part by
\begin{align*}&
 \ood^{0,1}_{f,s} e^{(\lambda_0-\lambda_1) L+(\lambda_1-\lambda_2) y} +
 \ood^{0,2}_{f,s} e^{(\lambda_0-\lambda_1) L} +
 \ood^{1,0}_{f,s} e^{(\lambda_0-\lambda_2) L} + 
\nonumber\\[2mm]
& \ood^{1,2}_{f,s} +
 \ood^{2,0}_{f,s} e^{(\lambda_2-\lambda_1) L+(\lambda_0-\lambda_2) y} +
 \ood^{2,1}_{f,s} e^{(\lambda_2-\lambda_1) (L-y)} 
\ge\\[2mm]
&\ood^{0,1}_s  + \ood^{0,2}_s + \ood^{1,0}_s + \ood^{1,2}_s  + \ood^{2,0}_s + \ood^{2,1}_s =
\frac{d_0^-\,d_1^-\,d_2^-\,(\lambda_0+\lambda_1+\lambda_2)}{\lambda_0\,\lambda_1\,\lambda_2} =
-\frac{d_0^-\,d_1^-\,d_2^-}{c} .
\end{align*}
Here we use that for $0\le y\le L$ all exponents are non-negative and, thus, the exponential functions can be estimated by 1 from below, whereas the factors of these exponential terms are all positive.\\
Finally, combining the above findings we conclude with the estimates \eref{eq:estimate_Tfs_prime}.
Note that equality only holds if $\gamma=0$.
%\qed
\end{proof}

%Then we conclude from the eigenvalues that the term $e^{-\lambda_2\,L+\lambda_2\,x} $ decays much slower  whereas the term $e^{(\lambda_1-\lambda_0-\lambda_2)\,L+\lambda_2\,x}$ decays much faster than the other exponential terms. Thus, the product $\ood^{0,2}_{f} e^{-\lambda_2\,L+\lambda_2\,x}$ dominates the other terms in the derivative of $T_f$. On the other hand, the product $\ood^{1,2}_{f,s} e^{(\lambda_1-\lambda_0-\lambda_2)\,L+\lambda_2\,x}$ will be dominated by the other terms in the derivative of $T_s$. This indicates that the derivatives may be positive. We cannot prove this to hold true for arbitrary $L>0$ but at least for $L$ sufficiently small.
%
%For this purpose, we first compute the derivatives of the temperatures at $x=L=0$.

Next we investigate the temperature difference $T_s-T_f$ that according to \eref{eq:Tfs} is given by
\begin{align}
\label{eq:diff_Ts_Tf}
&T_s(y)- T_f(y) = 
 - \frac{\gamma\,a}{c}
\frac{d_0^-\,d_1^-\,d_2^-}{\oC^0_{02}+c_R^0}
\left( 
 \cc^{0,1} e^{-\lambda_2 L+\lambda_1 y} +
 \cc^{0,2} e^{-\lambda_2 L+\lambda_2 y} +
 \cc^{1,0} e^{(\lambda_1-\lambda_0-\lambda_2) L+\lambda_0 y} + \right.
\nonumber\\[2mm]
&\hspace*{51mm}
\left. 
 \cc^{1,2} e^{(\lambda_1-\lambda_0-\lambda_2) L+\lambda_2 y} +
 \cc^{2,0} e^{-\lambda_0 L+\lambda_0 y} +
 \cc^{2,1} e^{-\lambda_0 L+\lambda_1 y} 
\right) 
\end{align}
with coefficients 
\begin{subequations}
\begin{align}
\cc^{l,k}&:= \ooc^{l,k}_s - \ooc^{l,k}_f =
%(-1)^{l+k}   \, \sign(k-l)\,
%\left( \frac{\lambda_i\,\lambda_l^2}{\lambda_k\,d_l^+} - 
%\frac{d_i^+\,\lambda_l}{\lambda_k} \right) =
(-1)^{l+k}   \, \sign(k-l)\,\frac{\lambda_l\,}{\lambda_k\,d_l^+}
\left( \lambda_i\,\lambda_l - d_i^+\,d_l^+ \right) =
(-1)^{l+k+1}\,\sign(k-l) \, \frac{\lambda_l\,b}{d_l^+} 
\end{align}
\end{subequations}
for $l,k,i\in \{0,1,2\}$, $l\ne k\ne i$.

\begin{lemma}[Estimates of the temperature difference]~\\[2mm]
\label{lem:temp-diff}
%Let be $L>0$ such that the determinant of the matrix $\bC^0$ is positive. 
Let be $L>0$.
Then the difference of the temperatures can be estimated by
\begin{align}
\label{eq:estimate_Ts_Tf}
T_s(y) \ge T_f(y) ,\quad y\in[0,L].
\end{align}
In particular, the temperatures coincide if and only if $\gamma=0$.
\end{lemma}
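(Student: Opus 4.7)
The plan is to mirror the strategy used in the proof of Lemma \ref{lem:temp-der}: I start from the explicit representation \eqref{eq:diff_Ts_Tf}, check the sign of the prefactor, and then show that the bracketed sum of six exponentials is non-negative on $[0,L]$. The prefactor $-\frac{\gamma a}{c}\,\frac{d_0^-d_1^-d_2^-}{\oC^0_{02}+c_R^0}$ is non-negative because $\gamma,a,c>0$, because Lemma \ref{lem:reg-determinant} guarantees $\oC^0_{02}+c_R^0>0$, and because the estimates \eqref{eq:estimates} give $d_0^-<0<d_1^-,d_2^-$; so the main task is to control the sign of the bracket.

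The decisive observation is that the six terms pair naturally by the first index $l\in\{0,1,2\}$. From the closed form $\cc^{l,k}=(-1)^{l+k+1}\sign(k-l)\,\frac{\lambda_l b}{d_l^+}$, a direct computation yields $\cc^{l,k_1}+\cc^{l,k_2}=0$ whenever $\{k_1,k_2\}=\{0,1,2\}\setminus\{l\}$. Moreover, the two exponentials with the same $l$ share a common prefactor (namely $e^{-\lambda_2 L}$, $e^{(\lambda_1-\lambda_0-\lambda_2)L}$, $e^{-\lambda_0 L}$ for $l=0,1,2$, respectively) and differ only in the $e^{\lambda_k y}$ factor. Collecting each pair therefore yields three terms of the shape $\cc^{l,k_1}\,e^{F_l}\,(e^{\lambda_{k_1}y}-e^{\lambda_{k_2}y})$, and it remains only to check the sign of each. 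For $l=0$, $\cc^{0,1}<0$ because $d_0^+=\lambda_1+\lambda_2<0$, while $e^{\lambda_1 y}-e^{\lambda_2 y}\le 0$; for $l=1$, $\cc^{1,0}>0$ and $e^{\lambda_0 y}-e^{\lambda_2 y}\ge 0$ since $\lambda_0>\lambda_2>0$; for $l=2$, $\cc^{2,0}>0$ and $e^{\lambda_0 y}-e^{\lambda_1 y}\ge 0$. Each pair is thus non-negative on $[0,L]$, which proves \eqref{eq:estimate_Ts_Tf}. Each pair is in fact strictly positive when $y>0$, so for $\gamma>0$ one obtains strict inequality $T_s(y)>T_f(y)$ on $(0,L]$.

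For the equivalence, if $\gamma=0$ the prefactor vanishes and $T_s\equiv T_f$. Conversely, if $T_s\equiv T_f$ on $[0,L]$, then $T_s(L)=T_f(L)$ together with the interface condition \eqref{eq:Tf_int_1D} forces $T_f^{\prime}(L)=0$, and the lower bound of Lemma \ref{lem:temp-der} then implies $\gamma=0$. The main obstacle, I expect, is spotting the pairing identity $\cc^{l,k_1}+\cc^{l,k_2}=0$: without it, one confronts a six-term sum with mixed signs and no obvious dominant exponential, and attempts to bound termwise (e.g.\ by estimating each exponential by $1$, as in the proof of Lemma \ref{lem:temp-der}) fail because the positive and negative coefficients do not sum to a useful quantity. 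Once the pairing is identified, the proof collapses to three short sign checks, each a direct consequence of Proposition \ref{prop:cubicpolynomial}.
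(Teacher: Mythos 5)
Your proof is correct and follows essentially the same route as the paper: the same positivity check of the prefactor via Lemma~\ref{lem:reg-determinant} and \eref{eq:estimates}, the same pairing identity $\cc^{0,2}=-\cc^{0,1}$, $\cc^{1,0}=-\cc^{1,2}$, $\cc^{2,0}=-\cc^{2,1}$, and the same termwise sign check of the three resulting differences of exponentials. The only (harmless) difference is that for the ``only if'' direction the paper argues directly from strict positivity of the bracket in \eref{eq:diff_Ts_Tf} when $\gamma>0$, whereas you detour through the interface condition and Lemma~\ref{lem:temp-der}; both work.
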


\begin{proof}
%Since by assumption $L>0$ is chosen such that 
Since by Lemma \ref{lem:reg-determinant} 
the determinant of the matrix $\bC^0$ is positive,
the term $\oC^0_{02}+c_R^0$ is positive as well. By means of the estimates \eref{eq:estimates}, the prefactor in \eref{eq:diff_Ts_Tf} is positive and it holds
\begin{align*}
   \cc^{0,2} = -\cc^{0,1} >0,\ \cc^{1,0} = - \cc^{1,2} >0,\ \cc^{2,0} = - \cc^{2,1} >0.
\end{align*}
Then the sum in the parenthesis of \eref{eq:diff_Ts_Tf} can be estimated by
\begin{align*}
& \cc^{0,2} \left( e^{-\lambda_2 L + \lambda_2 y} - e^{-\lambda_2 L + \lambda_1 y} \right) +
  \cc^{1,0} \left( e^{(\lambda_1 - \lambda_0 -\lambda_2) L + \lambda_0 y} - 
                   e^{(\lambda_1 - \lambda_0 -\lambda_2) L + \lambda_2 y} \right) +
\nonumber\\[2mm]
&
   \cc^{2,0} \left( e^{-\lambda_0 L + \lambda_0 y} - e^{-\lambda_0 L + \lambda_1 y} \right)
 \ge 0 ,
\end{align*}
where we employ positivity of the coefficients and \eref{eq:estimates-dm_k}.
Finally, combining the above findings we conclude with the estimate \eref{eq:estimate_Ts_Tf}.
Note that due to \eref{eq:diff_Ts_Tf} equality only holds if $\gamma=0$.
%\qed
\end{proof}

%As has been verified by numerical experiments  the simplified model is a good approximation of the complex model. Therefore it seems reasonable to assume that the complex problem has similar properties. Therefore we make the following conjecture motivated by Proposition \ref{prop:temp-simple}
%Since the solution of the temperature problem is explicitly known, we can derive some physically relevant properties.

%Similar to Proposition \ref{prop:A-temp-simple-III} in case of the simplified model we derive analogous properties for the boundary value problem of the complex model.
By means of Lemma \ref{lem:reg-determinant}, \ref{lem:temp-der} and \ref{lem:temp-diff}, we may now conclude on the following properties of the temperatures.

%
%\begin{proposition}[Properties of the boundary value problem \eref{eq:temp-system-full}, \eref{eq:temp-system-ic-full} and \eref{eq:bc-complex}]~\\[2mm]
\begin{proposition}[Properties of the temperature system with fluid heat conduction]~\\[2mm]
%\label{prop:temp-complex}
\label{prop:A-temp-full-III}
%%Let $L>0$ be sufficiently small. 
%Let be $L>0$. 
%Then it holds:
The following properties hold true for the temperature system determined by the boundary value problem %\eref{eq:temp-system-full}, \eref{eq:temp-system-ic-full} and \eref{eq:bc-complex} and arbitrary $0<L<\infty$:
\eref{eq:temp-system-full} -- \eref{eq:bc-complex} and arbitrary $0<L<\infty$:
\begin{enumerate}
  \item There exists a unique solution of the boundary value problem 
%        \eref{eq:temp-system-full}, \eref{eq:temp-system-ic-full} and \eref{eq:bc-complex}
        \eref{eq:temp-system-full} -- \eref{eq:bc-complex}
        determined by \eref{eq:temp-solution-complex-3} with $(s_1,s_2)$ the solution of
        \eref{eq:s-system}.
  \item If  $\gamma>0$, i.e., 
%        $q_{HG}>0$, 
        \begin{align}
          \label{eq:A-assumption-temp-III} 
           q_{\text{HG}}>0,
        \end{align}
        then it holds:
        \begin{enumerate}
          \item The temperature difference $T_s-T_f$ is positive.
         \item The coolant temperature $T_f$ is monotonically increasing and
                %$T_f(x) \ge T_c$, $x\in[0,L]$.
                \begin{align}
                  \label{eq:tfgetc}
                   T_f(y) \ge T_b, \quad y\in[0,L] .
                \end{align}
          \item The solid temperature $T_s$ is monotonically increasing and
                %$T_s(x) \ge T_b$, $x\in[0,L]$.
                \begin{align}
                  \label{eq:tsgetc}
                  T_s(y) \ge T_b, \quad y\in[0,L] .
                \end{align}
        \end{enumerate}
  \item If  $\gamma=0$, i.e., $q_{\text{HG}}=0$, then the temperatures are constant and it holds:
                \begin{align}
%%%%                   \label{eq:tsgetc}
                  T_f(y) = T_s(y) = T_b, \quad y\in[0,L] .
                \end{align}
\end{enumerate}
\end{proposition}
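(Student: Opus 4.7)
The plan is to stitch together the three preceding lemmas without introducing new machinery. Part (i) would follow directly from Lemma~\ref{lem:reg-determinant}: regularity of $\bC^0$ gives a unique $(s_1,s_2)$ via \eref{eq:s-param} (equivalently \eref{eq:param-stable}), and substituting into \eref{eq:temp-solution-complex-3} yields the solution of the BVP. Uniqueness transfers from the shooting problem to the BVP because the shooting residual map $(s_1,s_2)\mapsto \bC\bs-\bb$ is affine with invertible $\bC = \bC^0$.

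For part (ii) with $\gamma>0$, statement (a) is an immediate corollary of Lemma~\ref{lem:temp-diff}, whose ``if and only if'' clause upgrades $T_s\ge T_f$ to strict inequality whenever $\gamma\ne 0$. Statements (b) and (c) would follow from Lemma~\ref{lem:temp-der}: since the lower bound on $T_f'$ and $T_s'$ is strictly positive for $\gamma>0$, both temperatures are strictly monotonically increasing on $[0,L]$. Combining this with the reservoir values $T_f(0)=T_s(0)=T_b$ from \eref{eq:ODE-full-bc-0} produces the pointwise lower bounds \eref{eq:tfgetc} and \eref{eq:tsgetc} by a one-line integration from $0$ to $y$.

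For part (iii) with $\gamma=0$, the ``only if'' clauses of Lemmas~\ref{lem:temp-der} and~\ref{lem:temp-diff} collapse to equalities: $T_f'(y)=T_s'(y)=0$ for all $y\in[0,L]$, so both temperatures are constant. The reservoir conditions in \eref{eq:ODE-full-bc-0} then pin the common value at $T_b$.

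The substantive obstacles have already been cleared inside the three lemmas, namely the intricate sign bookkeeping for the roots $\lambda_k$, the factors $d_k^\pm$, and the coefficients $\ooC^0_{kj}$, $\ood^{l,k}_{f,s}$ and $\cc^{l,k}$, together with the exponential factorization that isolates the dominant term $e^{(\lambda_0+\lambda_2)L}$. Given those ingredients, the present proposition is essentially an organizational summary; the only residual subtlety worth spelling out explicitly is that the equivalences ``$\gamma=0 \iff T_{f,s}'\equiv 0$'' and ``$\gamma=0 \iff T_s\equiv T_f$'' from Lemmas~\ref{lem:temp-der} and~\ref{lem:temp-diff} are exactly what is needed to split cleanly between the assertions under $\gamma>0$ and those under $\gamma=0$.
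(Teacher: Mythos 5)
Your proposal is correct and follows essentially the same route as the paper: part (i) from Lemma~\ref{lem:reg-determinant}, part (ii) from Lemmas~\ref{lem:temp-der} and~\ref{lem:temp-diff} combined with the reservoir conditions, and part (iii) from the degeneration of the solution formula at $\gamma=0$. The only cosmetic difference is that the paper reads off $T_f=T_s=T_b$ for $\gamma=0$ directly from the explicit representation \eref{eq:Tfs}, whereas you infer constancy from the vanishing derivatives and then pin the value via the boundary condition --- an equivalent argument.
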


\begin{proof}
The existence of a unique solution is ensured by the regularity of the matrix $\bC^0$ according to 
%Lemma \ref{lem:reg-determinant} for sufficiently small $L$ 
%the assumption
Lemma \ref{lem:reg-determinant} 
providing us with unique parameters $(s_1,s_2)$ determined by \eref{eq:param-stable}.\\
For $\gamma>0$, we conclude by Lemma \ref{lem:temp-der}  that the temperatures $T_f$ and $T_s$ are strictly monotonically increasing and, thus, by the initial conditions \eref{eq:temp-system-ic-full} are larger than or equal to $T_b$. Furthermore, Lemma~\ref{lem:temp-diff} verifies that $T_s$ is always larger than or equal to $T_f$.\\
Finally, for $\gamma=0$, we conclude from \eref{eq:Tfs} that both temperatures coincide with the initial value $T_b$.
\end{proof}

\begin{remark}[Simplified temperature system]
In \cite{RomMueller22_ijhmt}, Appendix A,  the solution to the simplified temperature system determined by \eref{eq:temperature-system-simple} and corresponding boundary conditions \eref{eq:T_f_reservoir_1D}, \eref{eq:T_s_reservoir_1D} at $y=0$ and \eref{eq:heat_balance_int_1D} at $y=L$ is determined as
  \begin{subequations}
  \label{eq:A-simplified-III-solution-temp}%
\begin{align}
  \label{eq:A-simplified-III-solution-temp-Tf}
  & T_f(y) = T_b - \oor\,\ooa\, \frac{\lambda_- - \lambda_+ + \lambda_+ e^{\lambda_- y} - \lambda_- e^{\lambda_+ y} }{\lambda_+^2 e^{\lambda_- L}  - \lambda_-^2 e^{\lambda_+ L}}    ,    \\
  \label{eq:A-simplified-III-solution-temp-Ts}
  & T_s(y) = T_b + \oor \frac{\lambda_+^2 e^{\lambda_- y}  - \lambda_-^2 e^{\lambda_+ y}  -  (\lambda_- - \lambda_+) \ooa  }{\lambda_+^2 e^{\lambda_- L}  - \lambda_-^2 e^{\lambda_+ L}}    
%
%  \label{eq:A-simplified-III-solution-temp-Tsprime}
%  & T_s'(y) =  - \oor\,\oob\, \frac{\lambda_+ e^{\lambda_- y} - \lambda_- e^{\lambda_+ y} }{\lambda_+^2 e^{\lambda_- L}  - \lambda_-^2 e^{\lambda_+ L}}   .
%%
%%  & T_f(y) = - \oor + \frac{T_b + \oor}{\lambda_- - \lambda_+} (\lambda_- e^{\lambda_+ y} - \lambda_+ e^{\lambda_- y} ) ,\\
%%  \label{eq:simplified-III-solution-temp-Ts}
%%  & T_s(y) = - \oor + \frac{T_b + \oor}{\lambda_-^2 - \lambda_+^2} (\lambda_-^2 e^{\lambda_+ y} - \lambda_+^2 e^{\lambda_- y} ) ,\\
%%  \label{eq:simplified-III-solution-temp-Tsprime}
%%  & T_s'(y) = \frac{T_b + \oor}{\lambda_- - \lambda_+} \frac{\oob}{\ooa} (\lambda_- e^{\lambda_+ y} - \lambda_+ e^{\lambda_- y} ) .
\end{align}
\end{subequations}
with
\begin{align}
  \label{eq:abr-III}
  & \ooa:= \frac{\hv\Ac}{\cpf\,\mc},\
  \oob:= \frac{\hv}{(1-\varphi)\kappa_s},\\
  \label{eq:A-eigenvalues-simplified-III}
%  \oob:= \frac{\cpf \mc \Ac^{-1}}{(1-\varphi)\kappa_s}.
%  \oor:= \frac{\kappa_{HG} T_{HG}' -\cpf \mc\Ac^{-1} T_{HG}}{(1-\varphi)\kappa_s} 
  &\lambda_\pm = \frac{1}{2} \left( -\ooa \pm \sqrt{\ooa^2 + 4\,  \oob} \right),\\
  \label{eq:A-simplified-III-r}
%  \revsm{\oor := \frac{\ooa}{\hv}\kappa_{HG} T'_{HG} }  .
  &\oor := \frac{\ooa}{\hv}q_{\text{HG}}.
\end{align}
For these temperatures, the same properties  hold true as in Proposition \ref{prop:A-temp-full-III}.
\end{remark}

%%%%%%%%%%%%%%%%%%%%%%%%%%%%%%%%%%%%%%%%%%%%%%%%%%%%%%%%%%%%%%%%%%%%%%%%%%%%%%%%%%%%%%%%%%%%%%%%%%%%%%%%%%

%%%%%%%%%%%%%%%%%%%%%%%%%%%%%%%%%%%%%%%%%%%%%%%%%%%%%%%%%%%%%%%%%%%%%%%%%%%%%%%%%%%%%%%%%%%%%%%%%%%%%%%%%%

\section{Mass-momentum system}
\label{sec:PM-mass-momentum}
%%%%%%%%%%%%%%%%%%%%%%%%%%%%%%%%%%%%%%%%%%%%%%%%%%%%%%%%%%%%%%%%%%%%%%%%%%%%%%%%%%%%%%%%%%%%%%%%%%%%%%%%%%

For the investigation of the mass-momentum system determined by~\eref{eq:mass-momentum-system} and corresponding initial values \eref{eq:rho_f_int_1D} and \eref{eq:velo_int_1D}, it is useful to rewrite the model in a more canonical form. Employing constant mass flow~\eref{eq:mass-cons-III} and conservation of mass, the mass-momentum equations \eref{eq:PM_conti_1D} and \eref{eq:PM_Darcy_Forch_1D}
%as well as the boundary condition \eref{eq:velo_int_1D}
can be rewritten and we finally obtain the system of ordinary differential equations (ODEs)
%
%To solve the simplified problem described in Section \ref{subsec:simplified-PM-III}
%The simplified problem described in Section \ref{subsec:simplified-PM-III} can be split into three subproblems that have to be solved successively.
%
%
%The simplified PM model described in Section \ref{subsec:simplified-PM-III} is determined by the ODE
\begin{subequations}
\label{eq:ODE-simplified-III}%
\begin{align}
\label{eq:ODE-simplified-III-a}
  \rho_f'(y) &=  \oN((\rho_f,v,T_f,T_s)(y))\, \rho_f(y), \\[2mm]
\label{eq:ODE-simplified-III-b}
  v'(y)      &= -\oN((\rho_f,v,T_f,T_s)(y))\, v(y)
%\label{eq:ODE-simplified-III-c}
%  T_f'(y)    &= \frac{\hv\Ac}{\cpf\mc} \left( T_s(y)-T_f(y)\right), \\[2mm]
%\label{eq:ODE-simplified-III-d}
%%  T_s'(y)    &= \frac{1}{(1-\varphi)\kappa_s}
%%                 \left( \cpf\frac{\mc}{\Ac} \left(T_f(y)-T_{HG}\right) +\kappa_{HG} T'_{HG} \right)
%  T_s''(y)    &= \frac{\hv}{(1-\varphi)\kappa_s}  \left( T_s(y)-T_f(y)\right)
\end{align}
\end{subequations}
for $y\in(0,L)$ with initial values
\begin{subequations}
\label{eq:ODE-simplified-III-bc}%
\begin{align}
%\label{eq:ODE-simplified-III-bc-0}
% & \rho_f(0) = \frac{p_{\text{R}}}{R\,T_b}
%%,\ T_f(0) = T_b,\ T_s(0) = T_b
%,\\[2mm]
\label{eq:ODE-simplified-III-bc-rho_f-L}
% & (\rho_f\, v)(L) = \frac{\mc}{\Ac}.
% & v(L) = \frac{\mc}{\Ac} \frac{R\,T_f(L)}{p_{HG}}.
 & \rho_f(L) =  \frac{p_{\text{HG}}}{R\,T_f(L)},\\[2mm]
\label{eq:ODE-simplified-III-bc-v-L}
 & v(L) = \frac{\mc}{\Ac} \frac{R\,T_f(L)}{p_{\text{HG}}}
%   T_s'(L) = \frac{1}{(1-\varphi)\kappa_s}
%             \left(q_{\text{HG}} - \cpf\frac{\mc}{\Ac} \left(T_s(L)-T_f(L)\right) \right)
\end{align}
\end{subequations}
and coefficient
\begin{equation}
\label{eq:ODE-N-III}
 \oN(\rho_f,v,T_f,T_s) :=
 \frac{R (\cpf\mc)^{-1} \hv\Ac \rho_f (T_s-T_f) + K_D^{-1}\mu_f v + K_F^{-1}\rho_f v^2}{\rho_f(\varphi^{-2} v^2 - R T_f)}.
\end{equation}

By means of the mass conservation~\eref{eq:mass-cons-III}, we can substitute the velocity by the density in~\eref{eq:ODE-N-III}. Since we already know the temperatures, 
the coefficient $\oN$ 
becomes a function that only depends on the density and the position, whereas the temperatures enter as parameters:
\begin{equation}
\label{eq:N_quer}
\oN(y,\rho_f;T_f, T_s) =
\frac{R (\cpf\,\mc)^{-1} \hv \Ac \rho_f^2(y)(T_s(y)-T_f(y)) + \mc\Ac^{-1} \left( K_D^{-1} \mu_f + K_F^{-1}\mc\Ac^{-1}\right)}{\varphi^{-2}(\mc \Ac^{-1})^2 - R T_f(y) \rho_f^{2}(y)} .
\end{equation}
%Then, we need to determine the reservoir pressure $p_R$ such that the boundary value problem
%\begin{subequations}
%\label{eq:density-problem-III}
%\begin{align}
%\label{eq:density-problem-ode-III}
%  &\rho_f'(y) = \oN(y,\rho_f;T_f, T_s)\rho_f(y),\qquad y\in (0,L),\\[2mm]
%\label{eq:density-problem-bc-III}
%  & \rho_f(0) = \frac{p_R}{R\,T_b},\ \rho_f(L) =  \frac{p_{HG}}{R\,T_f(L)}.
%\end{align}
%\end{subequations}
%according to \eref{eq:ODE-simplified-III-a}, \eref{eq:ODE-simplified-III-bc} has a (unique) solution.
%\textcolor{red}{
%For this purpose, we solve the
We then solve the backward initial value problem
\begin{subequations}
\label{eq:density-problem-iwp-phg-III}%
\begin{align}
\label{eq:density-problem-ode-phg-III}
  &\rho_f'(y) = \oN(y,\rho_f;T_f, T_s)\rho_f(y),\qquad y\in (0,L),\\[2mm]
\label{eq:density-problem-ic-phg-III}
  &  \rho_f(L) =  \frac{p_{\text{HG}}}{R\,T_f(L)}
\end{align}
\end{subequations}
with an ODE solver from top ($y=L$) to bottom ($y=0$). Note that $T_f(L)$ can be computed from \eref{eq:A-simplified-III-solution-temp-Tf} and $p_{\text{HG}}$ is assumed to be given. If this initial value problem has a unique solution, we may determine the reservoir pressure $p_{\text{R}}$ by
\begin{align}
  \label{eq:pr-III}
  p_{\text{R}} = \rho_f(0) \, R \,T_b .
\end{align}

%\textcolor{red}{
The solution to the initial value problem \eref{eq:density-problem-iwp-phg-III} is not known explicitly. However, it can be verified that for arbitrary $L>0$ there exists a unique solution provided that the heat flux at the interface is positive, i.e., \eref{eq:A-assumption-temp-III} holds, and 
%the data $p_{\text{HG}}$ is assumed to be chosen such that \todo{$p_{\text{HG}}$ hier weglassen, da es in der physikalischen Betrachtung keine Rolle mehr spielt?}
\begin{align}
\label{eq:assumption-r-III}
%  r_M:= \frac{p_{HG}}{R\,T_f(L)} -\sqrt{ \frac{\varphi^{-2}(\mc\,\Ac^{-1})^2}{R\,T_c} } > 0.
   \rho_f(L) 
%   \equiv \frac{p_{\text{HG}}}{R\,T_f(L)}
    > 
   \sqrt{ \frac{\varphi^{-2}(\mc\,\Ac^{-1})^2}{R\,T_b} }
 =   \frac{\mc}{\varphi\Ac} \frac{1}{\sqrt{R\,T_b}}.
\end{align}

%\textcolor{red}{
%For the problem \eref{eq:density-problem-iwp-phg-III} we can verify  existence and uniqueness.
%applying a local Piccard-Lindel\"of theorem.
%\begin{proposition}[Existence and uniqueness of initial value problem \eref{eq:density-problem-iwp-phg-III}]~\\[2mm]

\begin{proposition}[Existence and uniqueness of the density]~\\[2mm]
\label{prop:A-dens-simple-III}
Let $T_f$, $T_s$ be the unique solution of 
%either 
%(i) the simple temperature model  \eref{eq:temp-system-III-iwp}, \eref{eq:bc-simplified-III}
%%be the unique solution of the initial value problem \eref{eq:temp-system-III-iwp} 
%%satisfying the boundary condition \eref{eq:bc-simplified-III}
%% determined by \eref{eq:A-simplified-III-solution-temp-Tf}, \eref{eq:A-simplified-III-solution-temp-Ts}
%or 
%(ii) 
the temperature system
%\eref{eq:temp-system-full}, \eref{eq:temp-system-ic-full} and \eref{eq:bc-complex}
\eref{eq:temp-system-full} -- \eref{eq:bc-complex}
determined by 
%\eref{eq:A-simplified-III-solution-temp-Tf}, \eref{eq:A-simplified-III-solution-temp-Ts} and \eref{eq:Tfs}, respectively.
\eref{eq:Tfs}.
%The assumption \eref{eq:A-assumption-temp-III} of Proposition \ref{prop:A-temp-simple-III} on the data are assumed to hold. In addition, the data $p_{HG}$ is assumed to be chosen such that \eref{eq:assumption-r-III} holds.
Let the data $q_{\text{HG}}$ and $p_{\text{HG}}$ be chosen such that \eref{eq:A-assumption-temp-III} and \eref{eq:assumption-r-III} hold.
%\begin{align}
%\label{eq:A-assumption-r-III}
%%  r_M:= \frac{p_{HG}}{R\,T_f(L)} -\sqrt{ \frac{\varphi^{-2}(\mc\,\Ac^{-1})^2}{R\,T_c} } > 0.
%   \rho_f(L) = \frac{p_{HG}}{R\,T_f(L)} > \sqrt{ \frac{\varphi^{-2}(\mc\,\Ac^{-1})^2}{R\,T_b} } .
%\end{align}
Then
%there exists $\delta>0$ such that
the  initial value problem \eref{eq:density-problem-iwp-phg-III} has a unique solution in
%$[L-\delta,L]$.
$[0,L]$ for arbitrary $L>0$.
In particular, the density is positive and strictly monotonically decreasing.
\end{proposition}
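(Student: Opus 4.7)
The plan is to treat \eref{eq:density-problem-iwp-phg-III} as a first-order ODE with right-hand side $F(y,\rho_f):=\oN(y,\rho_f;T_f,T_s)\,\rho_f$ that is smooth, and in particular locally Lipschitz in $\rho_f$, on the open region $\Omega:=\{(y,\rho_f):y\in[0,L],\ \rho_f>\rho_*(y)\}$, where $\rho_*(y):=\frac{\mc}{\varphi\Ac\sqrt{R\,T_f(y)}}$ is the unique positive value at which the denominator of $\oN$ in \eref{eq:N_quer} vanishes. By Proposition~\ref{prop:A-temp-full-III}, $T_f$ is continuously differentiable on $[0,L]$ and satisfies $T_f(y)\ge T_b>0$, so $\rho_*$ is well defined, continuous, and uniformly bounded above by $\rho_*(0)=\frac{\mc}{\varphi\Ac\sqrt{R\,T_b}}$. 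This upper bound coincides exactly with the right-hand side of \eref{eq:assumption-r-III}, so the hypothesis forces $\rho_f(L)>\rho_*(y)$ uniformly in $y\in[0,L]$: the initial datum lies strictly above the singular curve at every point.

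By Picard–Lindelöf one obtains a unique maximal solution on some interval $(y_*,L]\subseteq[0,L]$. The crucial step is a barrier argument establishing $y_*=0$. On $\Omega$, the denominator of $\oN$ is strictly negative, while the numerator is strictly positive: indeed, $T_s(y)-T_f(y)>0$ on $[0,L]$ by Lemma~\ref{lem:temp-diff} combined with the hypothesis $q_{\text{HG}}>0$ in \eref{eq:A-assumption-temp-III}, and the terms $K_D^{-1}\mu_f$ and $K_F^{-1}\mc\Ac^{-1}$ are positive physical constants. Hence $\oN<0$ strictly on $\Omega$, so integrating backward from $L$ forces $\rho_f(y)\ge\rho_f(L)$ for every $y\in(y_*,L]$, and therefore $\rho_f(y)-\rho_*(y)\ge\rho_f(L)-\rho_*(0)>0$ uniformly. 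This rules out any approach to the singular set. For the matching upper a priori bound, I would note that as $\rho_f\to\infty$ both numerator and denominator of $\oN$ are quadratic in $\rho_f$, so $\oN$ tends to a finite negative limit and $|\oN|$ stays bounded by some constant $C$ on $\{\rho_f\ge\rho_f(L)\}$. Grönwall then yields $\rho_f(y)\le\rho_f(L)e^{C(L-y)}$, preventing blow-up. Standard ODE continuation now forces $y_*=0$, giving a unique solution on all of $[0,L]$.

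Positivity and strict monotonicity follow immediately from the barrier bound: $\rho_f(y)\ge\rho_f(L)>0$ on $[0,L]$ delivers positivity, and $\rho_f'(y)=\oN(y,\rho_f(y);T_f,T_s)\,\rho_f(y)<0$ strictly on $[0,L]$ delivers strict monotonic decrease. The main obstacle in this approach is the barrier step. Without the combined use of \eref{eq:assumption-r-III} and the temperature inequality $T_f\ge T_b$ supplied by Proposition~\ref{prop:A-temp-full-III}, one could not a priori rule out the solution reaching the singular curve $\rho_*$ in finite $y$, which would destroy both the regularity of the ODE and any uniqueness claim. The precise form of the assumption \eref{eq:assumption-r-III} is what makes the barrier argument self-consistent: it is exactly the condition under which the sign of $\oN$ is locked at $y=L$ and remains locked as one integrates backward.
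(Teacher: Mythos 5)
Your proof is correct, but it takes a genuinely different route from the paper. The paper first flips the problem to a forward initial value problem on $[0,L]$ and then essentially re-proves a global Picard--Lindel\"of theorem by hand: it constructs explicit bounds $M(\orho_0,r)$ and $K(\orho_0,r)$ for the right-hand side and its Lipschitz constant on boxes $B_r$, runs the contraction mapping theorem locally, and then iterates the local construction, with the key technical point being that the sum of the step sizes $\sum_i \delta_i$ dominates a harmonic series and therefore the iteration reaches $y=L$ after finitely many steps. You instead invoke the standard maximal-solution machinery and close the argument with an invariant-region/barrier bound from below ($\rho_f(y)\ge\rho_f(L)$ stays above the singular curve $\rho_*$, using $T_f\ge T_b$ and \eref{eq:assumption-r-III}) together with a Gr\"onwall bound from above (since $\oN$ is uniformly bounded on $\{\rho_f\ge\rho_f(L)\}$, the two quadratics in the numerator and denominator having comparable growth), so that the solution is trapped in a compact subset of the regular region and continuation gives $y_*=0$. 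Your version is shorter and leans on standard ODE theory as a black box; the paper's is more self-contained and constructive. A further small difference: the paper controls the sign of the numerator of $\oN$ by substituting $T_s-T_f$ with $T_f'$ via \eref{eq:A-cond-rho-III} and then using monotonicity of $T_f$, whereas you use Lemma~\ref{lem:temp-diff} directly; for the full temperature system your route is arguably the cleaner one.

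One minor imprecision: you assert $T_s(y)-T_f(y)>0$ on all of $[0,L]$, but the boundary conditions \eref{eq:bc_reservoir_1D} force $T_s(0)=T_f(0)=T_b$, so strict positivity fails at $y=0$ (Lemma~\ref{lem:temp-diff} only gives $T_s\ge T_f$). This is harmless for your argument, since the constant term $\mc\Ac^{-1}\bigl(K_D^{-1}\mu_f+K_F^{-1}\mc\Ac^{-1}\bigr)$ in the numerator of \eref{eq:N_quer} already guarantees strict positivity of the numerator, and hence the strict sign $\oN<0$, everywhere on $[0,L]$ --- which you in fact note. Just phrase the temperature inequality as $T_s\ge T_f$ and let the viscous/Forchheimer constants carry the strictness.
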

\begin{proof}
First of all, we perform the  change of coordinates $\oy(y):=L-y$, $y\in[0,L]$, and the change of variables $\orho(\oy):=\rho(y(\oy))$ to rewrite  \eref{eq:density-problem-iwp-phg-III} as forward problem:
\begin{subequations}
\label{eq:A-density-problem-iwp-phg-forward-III}%
\begin{align}
\label{eq:A-density-problem-ode-phg-forward-III}
  &\orho_f'(\oy) = \ooN(\oy,\orho_f;\oT_f, \oT_s) \, \orho_f(\oy) =:\of(\oy,\orho_f(\oy)) ,\qquad \oy\in (0,L),\\[2mm]
\label{eq:A-density-problem-ic-phg-forward-III}
  &  \orho_f(0) =  \frac{p_{HG}}{R\,T_f(L)} =: \orho_0 ,
\end{align}
\end{subequations}
where $\ooN(\oy,\orho_f;\oT_f, \oT_s) := \oN(y(\oy),\rho_f;T_f, T_s)$ and
$\oT_f(\oy):=T_f(y(\oy))$,  $\oT_s(\oy):=T_s(y(\oy))$. Since $T_f$ and $T_s$ are 
%in particular 
solutions of the initial value problem 
%\eref{eq:temp-system-III-iwp} or \eref{eq:temp-system-full} 
 \eref{eq:temp-system-full} 
satisfying in particular \eref{eq:PM_T_f_1D-simple} and \eref{eq:PM_T_f_1D-full}, it holds
\begin{align}
\label{eq:A-cond-rho-III}
  \oT'_f(\oy) =
  \frac{\hv\,\Ac}{\cpf\,\mc} \left( \oT_f(\oy)-\oT_s(\oy)\right) =
  \frac{\hv\,\Ac}{\cpf\,\mc} \left( T_f(y(\oy))-T_s(y(\oy))\right) =
  - T_f'(y(\oy))
  \le  0.
\end{align}
Here non-negativity is ensured by  
%Propositions \ref{prop:A-temp-simple-III} and \ref{prop:A-temp-full-III} 
Proposition  \ref{prop:A-temp-full-III} 
because of assumption \eref{eq:A-assumption-temp-III}. We now rewrite $\ooN$ as
\[
   \ooN(\oy,\orho_f;\oT_f, \oT_s)  =
   \frac{\mc\,\Ac^{-1}\left( K_D^{-1}\,\mu_f+ K_F^{-1}\mc \Ac^{-1}\right)- R \oT'_f(\oy)\orho_f^2(\oy)}{R\,\oT_f(\oy) \orho_f^2(\oy) - \varphi^{-2}(\mc\,\Ac^{-1})^2}
   =:\frac{\oZ(\oy)}{\oD(\oy)} .
\]
%Note that rewriting the initial value problem as forward problem is only a technicality to bring the problem in the canonical form. In principle, this step is superfluous.\\
Note that rewriting the initial value problem as forward problem is essential because data are given at $y=L$ rather than $y=0$.\\
To verify the unique existence of the solution to \eref{eq:A-density-problem-iwp-phg-forward-III},
%in the neighborhood of the initial value
we proceed now similarly to the proof of the local Picard-Lindel\"of theorem using the contraction mapping theorem. We recall this proof here because for our problem at hand we may exploit some problem-inherent properties, e.g., monotonicity. Repeated application of the local Picard-Lindel\"of theorem will finally  lead to a global result.  We will proceed in three steps.\\
\textbf{Step 1}. First of all, we remind that the space $C([0,L]$ of continuous functions on $[0,L]$ equipped with the supremum norm is a Banach space. For a subspace, we consider the set $B_r([0,L]):=\{\orho\in C([0,L]) \,:\, \orho_0 \le \orho(\oy) \le \orho_0+r,\ \oy\in [0,L] \}$ for some $r>0$.
Since $T_f\in C^1([0,L])$ and is positive, 
%see Propositions \ref{prop:A-temp-simple-III} and \ref{prop:A-temp-full-III}, 
see Proposition \ref{prop:A-temp-full-III}, 
there exist bounds $0 = T'_m < T'_M$ such that
%$T'_m \le T_f'(y) \le T_M'$ for $y\in [0,L]$ and
$T'_m \le -\oT_f'(\oy) = T_f'(y(\oy)) \le T_M'$ for $\oy\in [0,L]$. Furthermore, the temperature is bounded by
$T_b \le \oT_f(\oy) = T_f(y(\oy)) \le T_f(L)$ due to monotonicity of $T_f$.
Note that $T'_M=0$ and $ T_f(L)=T_b$ if ``$=$'' holds in \eref{eq:A-assumption-temp-III}.
Then we may estimate the numerator $\oZ$
and the denominator  $\oD$ for arbitrary $\orho\in B_r([0,L])$ by
\begin{align}
  & \oZ_m(\orho_0,r):=R\,T_m'\orho_0^2+\beta \le \oZ(\oy) \le R\,T_M' (\orho_0+r)^2 +\beta =: \oZ_M(\orho_0,r) , \nonumber\\
  & \oD_m(\orho_0,r):= R\,T_b \orho_0^2 - \gamma \le \oD(\oy) \le R\, T_f(L) (\orho_0+r)^2 - \gamma =: \oD_M(\orho_0,r)
\end{align}
with $\beta:=\mc\,\Ac^{-1}\left( K_D^{-1} \mu_f + K_F^{-1}\mc \Ac^{-1}\right)>0$
and $\gamma:= \varphi^{-2}(\mc\,\Ac^{-1})^2>0$. The numerator is always positive because $T'_m=0$. By assumption \eref{eq:assumption-r-III}, the lower bound of the denominator is also positive provided that
\begin{align}
  \orho_0 \ge \orho(0) = \frac{p_{\text{HG}}}{R\,T_f(L)} .
\end{align}
%, i.e., $\oD_m(r)>0$.
Thus,  the right-hand side of the ODE
\eref{eq:A-density-problem-ode-phg-forward-III} is positive and can be estimated by
\[
  0 < \of(\oy,\orho) \le (\orho_0+r) \frac{\oZ_M(\orho_0,r)}{\oD_m(\orho_0,r)} =: M(\orho_0,r),\
  \oy\in [0,L],\
  \orho \in B_r([0,L]) .
\]
Since the denominator is bounded away from zero and $\oT_f\in C^1([0,L])$, the function $\of:[0,L]\times B_r([0,L])\to\R_+$
is continuous in both arguments and differentiable in the second argument. In particular, $\of$ is Lipschitz-continuous in the second argument because by the mean value theorem we have
\[
  \left|\of(\oy,\orho) - \of(\oy,\oorho)\right| \le
  \sup_{s\in[0,1]} \left\{
    \left|\frac{\partial\,\of}{\partial\,\orho}(\oy,\orho+s\,(\oorho-\orho)) \right|
                  \right\}
    |\orho - \oorho| \le K(\orho_0,r) |\orho - \oorho|
%\quad \forall\,  \orho_1,\orho_2\in [ \orho_0, \orho_0+r],\ \oy\in[0,L]
\]
for all  $\orho,\overline{\orho}\in [ \orho_0, \orho_0+r]$ and  $\oy\in[0,L]$. Note that $B_r([0,L])$ is a convex set. To determine $K(\orho_0,r)$, we first determine the derivative
\[
\frac{\partial\,\of}{\partial\,\orho}(\oy,\orho) =
\frac{\gamma\left(3\,R\,\oT_f'(\oy)\,\orho^2-\beta \right) -R\,\oT_f(\oy)\,\orho^2\left(R\,\oT_f'(\oy)\,\orho^2+\beta \right)}{\left( R\,\oT_f(\oy)\,\orho^2-\gamma \right)^2} .
\]
This can be estimated by
\[
\left| \frac{\partial\,\of}{\partial\,\orho}(\oy,\orho) \right| \le
\frac{\gamma\left( 3\,R\,T_M'\,(\orho_0+r)^2 + \beta\right) + R\,T_f(L)\,(\orho_0+r)^2 \oZ_M(\orho_0,r)  }{\oD_m(\orho_0,r)^2} =: K(\orho_0,r).
\]
\\
\textbf{Step 2}.
Now we are ready to prove the local existence of a unique solution to the initial value problem~\eref{eq:A-density-problem-iwp-phg-forward-III}  on $[\oy_0,\oy_0+\delta_0] \subset [0,L]$ with $\oy_0:=0$
and $\delta_0:= \min(L,r/M(\orho_0,r),1/(K(\orho_0,r)+\epsilon))$
for some $\varepsilon>0$ arbitrarily small but fixed.
Note that the estimates in Step 1 also hold for the domain $[\oy_0,\oy_0+\delta_0]$ instead of $[0,L]$. We now apply
%by applying 
the contraction mapping theorem to the fixed point problem: find $\orho\in B_r([\oy_0,\oy_0+\delta_0])$ such that $\Phi(\orho) = \orho$, where
the fixed point function $\Phi:B_r([\oy_0,\oy_0+\delta_0]) \to C([\oy_0,\oy_0+\delta_0])$ is defined by
\[
   \Phi(\orho)(\oy) := \orho_0 + \int_{\oy_0}^\oy \of(s,\orho(s))\, ds, \ \oy\in[\oy_0,\oy_0+\delta_0] .
\]
%Here we choose $\delta_0:= \min(L,r/M(\orho_0,r)),1/(K(\orho_0,r)+\epsilon)$ for some small but fixed  $\varepsilon>0$.
First of all, we note that $\Phi$ maps $B_r([0,\delta_0])$ onto itself. Obviously, the primitive function of a continuous function is continuous. Since by assumption \eref{eq:assumption-r-III} the function $\of(\oy,\orho)$ is positive and bounded by $M(\orho_0,r)$ for any function
$\orho\in B_r([0,\delta_0])$, we may estimate for $\oy\in [\oy_0,\oy_0+\delta_0]$
\[
  \orho_0 \le \Phi(\orho)(\oy)= \orho_0 + \int_{\oy_0}^\oy \of(s,\orho(s))\, ds \le
  \rho_0 + \delta_0 M(\rho_0,r) \le \orho_0 + r .
\]
Next we show that $\Phi$ is a contractive mapping. Fur this purpose, we first note that
\[
  |\Phi(\orho)(\oy) - \Phi(\oorho)(\oy)| \le
   \int_{\oy_0}^\oy |\of(s,\orho(s))-\of(s,\oorho(s))|\, ds \le
%  K \int_{\oy_0}^\oy |\orho(s)-\oorho(s)|\, ds \le
%  K\,\delta_0 \Vert \orho-\oorho_\Vert_{C([\oy_0,\oy_0+\delta_0])} \le
  \frac{K(\rho_0,r)}{K(\rho_0,r)+\varepsilon}\Vert \orho-\oorho\Vert_{C([\oy_0,\oy_0+\delta_0])}
\]
for $\oy\in [\oy_0,\oy_0+\delta_0]$ and, thus,
\[
  \Vert \Phi(\orho) - \Phi(\oorho)\Vert_{C([\oy_0,\oy_0+\delta_0])} \le \frac{K(\rho_0,r)}{K(\rho_0,r)+\varepsilon}\Vert \orho-\oorho\Vert_{C([\oy_0,\oy_0+\delta_0])} .
\]
The contraction mapping theorem then ensures the unique existence of $\orho\equiv \orho(\cdot,[\oy_0,\oy_0+\delta_0])\in B_r([\oy_0,\oy_0+\delta_0])$ solving the initial value problem \eref{eq:A-density-problem-iwp-phg-forward-III} on $[\oy_0,\oy_0+\delta_0]$.
The positivity of $\of$  on $[\oy_0,\oy_0+\delta_0]\times B_r([\oy_0,\oy_0+\delta_0])$ implies
that the derivative of $\orho$ is positive and, thus, the density $\orho$ is strictly monotonically increasing. Furthermore, since the initial data are positive, i.e., $\orho_0>0$, the density must be positive as well.\\
\textbf{Step 3}.
In general, we 
%may 
will
not find $r>0$ such that $\delta_0= L$. Therefore, we
 repeat  the above local procedure where we replace
$\oy_0$, $\orho_0$, $\delta_0$ by $\oy_{i+1}:=\oy_i+\delta_i$, $\orho_{i+1}:= \orho(\oy_{i+1};[\oy_i,\oy_{i+1}])$ and $\delta_{i+1}:= \min(L-\oy_{i+1},r/M(\orho_{i+1},r),1/(K(\orho_{i+1},r)+\epsilon))$ for $i\in\N_0$. Note that due to monotonicity of the solution $\orho(\cdot;[\oy_i,\oy_{i+1}])$  it holds $\orho_i \le \orho_{i+1} \le \orho_i +r$  and therefore $\rho_0 \le \orho_{i+1}  \le \orho_0 + (i+1)\,r$.
The crucial question is whether we can reach the point $\oy=L$ after a \emph{finite} number of iterations, i.e., does there exist $n\in\N_0$ such that
$\sum_{i=0}^n \delta_i = L$. \\
First of all, we observe that $K(\orho_{i},r)$ is bounded away from zero. In particular, $K(\orho,r)$ tends to the ratio $T_f(L)\,T_M'/T_b^2>0$ for $\orho\to \infty$. Thus, for $i$ sufficiently large $\delta_i<1/K(\orho_{i},r)$. On the other hand, $M(\orho,r)$ tends to infinity for $\orho\to\infty$. Thus, it might happen that $\delta_i=r/M(\orho_i,r)$
for $i\ge i_0$. We now have to verify that $\sum_{i\ge i_0} r/M(\orho_{i},r)$ diverges.
For this purpose, we first note that the ratio $\oZ_M(\orho,r)/\oD_m(\orho,r)$ tends to the ratio $T'_M/T_b$ for $\orho\to\infty$. Thus, this ratio stays bounded for $\orho\in[\orho_0,\infty)$, i.e., there exist constants $0<c_m \le c_M < \infty$ such that
$c_m\le \oZ_M(\orho,r)/\oD_m(\orho,r)< c_M$ for $\orho\in [\orho_0,\infty)$. Then we can estimate
\[
  \sum_{i=i_0}^n \frac{r}{M(\orho_i,r)} =
  \sum_{i=i_0}^n \frac{r}{\orho_i+r} \frac{\oD_m(\orho_i,r)}{\oZ_M(\orho_i,r)} \ge
  \frac{1}{c_M} \sum_{i=0}^n \frac{r}{\orho_i+r}  \ge
  \frac{1}{c_M} \sum_{i=0}^n \frac{r}{\orho_0+ (i+1)\,r} .
\]
The right-hand side tends to infinity for $n\to\infty$ since the harmonic sum is diverging.
Thus,  $\sum_{i>i_0}^n \delta_i\ge~L$ for $n$ sufficiently large.
This proves uniqueness and existence of the solution to problem \eref{eq:A-density-problem-iwp-phg-forward-III} in $[0,L]$.
Since by definition $\rho_f(y) = \orho(\oy(y))\ge \orho_0$ and $\rho_f'(y) = -\orho'(\oy(y))$,
the problem \eref{eq:density-problem-iwp-phg-III} also has a unique solution that is positive and strictly monotonically decreasing.
%\qed
\end{proof}

%\todo{Physikalische Interpretation? Wie einschraenkend ist die Anahme im Hinblick auf die Anwendung?}

Note that the assumption 
\eref{eq:assumption-r-III}
%\eqref{eq:constraint_for_Nbar} 
is crucial for the proof to ensure that 
the coefficient $\oN$ determined by~\eqref{eq:N_quer} is negative. In \cite{RomMueller22_ijhmt}, we verify that 
%\eqref{eq:constraint_for_Nbar} 
\eref{eq:assumption-r-III}
holds for our envisaged applications.

\begin{remark}[Density with the simplified temperature system]
In case of the simplified temperature system neglecting fluid heat conduction determined by 
\eref{eq:temperature-system-simple} and corresponding boundary conditions \eref{eq:T_f_reservoir_1D}, \eref{eq:T_s_reservoir_1D} at $y=0$ and \eref{eq:heat_balance_int_1D} at $y=L$, 
the mass-momentum system does not change. Nevertheless, since the temperatures \eref{eq:A-simplified-III-solution-temp} enter the mass-momentum system as parameters, they affect its solution. The proof of Proposition \ref{prop:A-dens-simple-III} employs the properties of the temperatures in Proposition \ref{prop:A-temp-full-III} which are the same for the temperature system with fluid heat conduction and the simplified one without. Therefore, Proposition \ref{prop:A-dens-simple-III} holds true also in case of the simplified system as was originally presented in \cite{RomMueller22_ijhmt}, Appendix~B.
\end{remark}

\textbf{Velocity}. Finally, we can determine the velocity by means of the mass conservation
\eref{eq:mass-cons-III}:
\begin{equation}
\label{eq:velo-simplified-III}
   v(y) =  \frac{\mc}{\Ac\,\rho_f(y)},\quad y\in[0,L] .
\end{equation}
In particular, \eqref{eq:velo_int_1D} holds at $y=L$.
%\begin{comment}
%\textcolor{red}{
%Then we can verify the following properties from the properties of the density.
By Proposition 
%\ref{prop:A-temp-simple-III} 
\ref{prop:A-temp-full-III}
and \ref{prop:A-dens-simple-III}, we then can draw the following conclusion.
\begin{proposition}[Properties of the velocity]~\\[2mm]
\label{prop:velo-simple-III}
Let the assumptions of Proposition \ref{prop:A-dens-simple-III}  hold. Then the velocity  is positive and strictly monotonically increasing.
\end{proposition}
\begin{proof}
Since the density is positive and strictly monotonically decreasing, we conclude from
\eref{eq:velo-simplified-III} the assertion.
\end{proof}
%}
%\end{comment}
%Assuming that the constraints \eref{eq:A-assumption-temp-III} and \eref{eq:assumption-r-III} hold true, the density is positive and strictly monotonically decreasing. Then, we conclude from \eref{eq:velo-simplified-III} that the velocity is positive and strictly monotonically increasing.

\textbf{Pressure}.
By the ideal gas law $p=\rho_f\, R\, T_f$, the density $\rho_f$ and the temperature $T_f$ of the coolant are linked with the pressure. Thus, the pressure is positive. In particular, the pressure is strictly monotonically decreasing for arbitrary $L>0$ 
provided the heat flux at the interface is positive, i.e., \eref{eq:A-assumption-temp-III} holds, and 
%the pressure $p_{\text{HG}}$ satisfies \eref{eq:assumption-r-III} \todo{$p_{\text{HG}}$ hier weglassen, s.o.?}. 
the condition \eref{eq:assumption-r-III} is satisfied.
 This immediately follows from the derivative 
\begin{equation}
p'(y) =
R\, \rho_f(y) \frac{T_f'(y) \varphi^{-2} (\mc\,\Ac^{-1})^2 + \mc\,\Ac^{-1}\left( K_D^{-1}\,\mu_f + K_F^{-1}\,\mc\,\Ac^{-1}\right)}{\varphi^{-2} (\mc\,\Ac^{-1})^2-R\,T_f(y)\,\rho_f^2(y)}
\end{equation}
determined by \eref{eq:density-problem-ode-phg-III} and 
\eref{eq:PM_T_f_1D-full} or \eref{eq:PM_T_f_1D-simple}.
%\eref{eq:ODE-simplified-III-c}. 
Since the derivative of $T_f$ is positive, the numerator is positive. On the other hand, the denominator is negative due to assumption \eref{eq:assumption-r-III} and $\rho_f(y)\ge \rho_f(L)$. Thus, the derivative of the pressure is negative, i.e., the pressure is strictly monotonically decreasing.

\textbf{Summary}. Let the data $T_b$, $p_{\text{HG}}$, $q_{\text{HG}}$ be chosen such that the estimates
\begin{subequations}
\label{eq:assumptions-simple-III}%
\begin{align}
 \label{eq:assumptions-simple-b-III}
% &  \frac{\kappa_{HG} T_{HG}' -\cpf \mc\Ac^{-1} T_{HG}}{(1-\varphi)\kappa_s} >
%    -\frac{\cpf \mc \Ac^{-1}}{(1-\varphi)\kappa_s}\,T_c,\\
%&  \kappa_{HG} T_{HG}'\ge \cpf \mc\Ac^{-1} (T_{HG} -T_b),\\
&  q_{\text{HG}} \ge 0,\\
% &  \frac{p_{HG}}{R\,T_f(L)} > \sqrt{ \frac{\varphi^{-2}(\mc\,\Ac^{-1})^2}{R\,T_c} } \\
 &  \frac{p_{\text{HG}}}{R\,T_f(L)}
     \equiv \rho_f(L)
 > \frac{\mc}{\varphi \Ac} \frac{1}{\sqrt{R\,T_b}}
\end{align}
\end{subequations}
hold. Then, there exists a unique solution to the model consisting of the mass-momentum system \eref{eq:mass-momentum-system} and
either the temperature system with fluid heat conduction \eref{eq:temperature-system-full} or the simplified temperature system \eref{eq:temperature-system-simple} satisfying the boundary conditions \eref{eq:T_f_reservoir_1D}, \eref{eq:T_s_reservoir_1D} at $y=0$ as well as \eref{eq:heat_balance_int_1D}, \eref{eq:Tf_int_1D} (the latter only for the system with fluid heat conduction) at $y=L$, where the  reservoir pressure $p_{\text{R}}$ is determined by
%\label{eq:pr-II}
\begin{equation}
p_{\text{R}} = \rho_f(0) \, R \,T_b.
\end{equation}
In particular, the temperatures $T_f$ and $T_s$ are determined by \eref{eq:Tfs} or \eref{eq:A-simplified-III-solution-temp}.  The density $\rho_f$ is given by the solution of \eref{eq:density-problem-iwp-phg-III}. Finally, the velocity $v$ is determined by \eref{eq:velo-simplified-III}.

%%%%%%%%%%%%%%%%%%%%%%%%%%%%%%%%%%%%%%%%%%%%%%%%%%%%%%%%%%%%%%%%%%%%%%%%%%%%%%%%%%%%%%%%%%%%%%%%%%%%%%%%%%
\section{Numerical Investigations}
\label{sec:numerical_results}
\begin{table}[!b]
  \centering
  \begin{tabular}{llrrl}
    \toprule
    && 2D test case~1 & 2D test case~2 &\\
    \midrule
    \textbf{porous material} &&&&\\
    %throughflow direction \todo{Wozu brauchen wir das hier?} & & parallel & parallel & -\\
    length (in flow direction~$y$) & $L$ & $0.015$ & $0.01$ & m\\
    porosity & $\varphi$ & $0.111$ & $0.111$ & -\\
    permeability coefficient & $K_D$ & $3.57 \cdot 10^{-13}$ & $3.57 \cdot 10^{-13}$ & m$^2$\\
    Forchheimer coefficient & $K_F$ & $5.17 \cdot 10^{-8}$ & $5.17 \cdot 10^{-8}$ & m\\
    thermal conductivity solid & $\kappa_s$ & $15.19$ & $12.5$ & W/(m$\,$K)\\
    volumetric heat transfer coeff. & $h_v$ & $10^6$ & $10^6$ & W/(m$^3\,$K)\\
    \midrule
    \textbf{cooling gas (air) conditions} &&&&\\
    coolant mass flow rate & $\dot m_c$ & $1.91$ & $6.557$ & g/s\\
    blowing ratio & $F$ & $0.0053$ & $0.0035$ & -\\
    %reservoir pressure & $p_{\text{R}}$ & $334$ & $1{,}200$ & kPa\\
    backside solid temperature & $T_b$ & $321.9$ & $304.2$ & K\\
    thermal conductivity & $\kappa_f$ & $\{\mathbf{0.03}, \mathbf{15.19}\}$ & $\{\mathbf{0.03}, \mathbf{12.5}\}$ & W/(m$\,$K)\\
    dynamic viscosity & $\mu_f$ & $2.1 \cdot 10^{-5}$ & $2.1 \cdot 10^{-5}$ & Pa$\,$s\\
    specific heat capacity & $c_{p,f}$ & $1004.5$ & $1004.5$ & J/(kg$\,$K)\\
    \midrule
    \textbf{hot gas (air) conditions} &&&&\\
    inflow Mach number & $Ma_{\text{HG},\infty}$ & $0.3$ & $0.3$ & -\\
    inflow temperature & $T_{\text{HG},\infty}$ & $425$ & $1{,}800$ & K\\
    inflow pressure & $p_{\text{HG},\infty}$ & $95.2$ & $1{,}000$ & kPa\\
    %\midrule
    %\textbf{\# interface mesh cells} &&&&\\
    %hot gas && $5{,}120$ & $10{,}240$ &\\
    %porous med.~(only full-2D/3D) && $1{,}024$ & $1{,}024$ &\\
    \bottomrule
  \end{tabular}
  \caption{Parameters for the two test cases.}
  \label{tab:test_cases}
\end{table}
In the following, we numerically investigate whether the use of the simplified temperature system in a coupled transpiration cooling simulation is justified. The usual assumption for neglecting fluid heat conduction is $\kappa_f \ll \kappa_s$. However, we also investigate a case for which this constraint is violated, namely $\kappa_f = \kappa_s$. We use two 2D test cases, whose parameters are summarized in Tab.~\ref{tab:test_cases}, simulating cooling gas injection into a subsonic turbulent hot gas channel flow. The channel has a length of $1.32\,$m and a height of $0.06\,$m. The original porous medium has a square cross section with a side length of $0.061\,$m and a resulting area of $A_c=0.003721\,$m$^2$. It is mounted to the lower channel wall at $x=0.58\,$m. Its length in $y$-direction is $L=0.015\,$m for test case~1 and $L=0.01\,$m for test case~2. The first test case originates from~\cite{Dahmen15_ijhmt}, whereas the second one is motivated by the numerical study in~\cite{Munk19}. Both test cases were also used in~\cite{RomMueller22_ijhmt}. The main differences between the two test cases are in the inflow temperature and inflow pressure of the hot gas flow.

For both test cases, five simulations are carried out as listed in Tab.~\ref{tab:sims_per_test_case}. The first four simulations take into account fluid heat conduction with different values for the thermal conductivity~$\kappa_f$ of the fluid and with different boundary conditions regarding the first-order derivative of the fluid temperature~$T_f$ at the interface $y=L$ as discussed in Sect.~\ref{sec:PM-models}. The fifth simulation neglects fluid heat conduction and, hence, uses the simplified temperature system. Simulations 1, 3 and 5 are conducted with our assembled-1D model presented in~\cite{RomMueller22_ijhmt}, i.e., by solving 1D problems in the porous medium, assembling the 1D solutions to obtain a 2D solution and coupling the latter with the solution of a 2D hot gas flow solver~\cite{Bramkamp04}. In case of the vanishing fluid temperature gradient in $y$-direction (simulations 2 and 4), we do not have at hand an analytical solution for the 1D temperature system. Hence, we use our original-2D approach~\cite{Dahmen14_ijnmf,Dahmen15_ijhmt,Koenig19_jtht,Koenig20_book} applying the same hot gas flow solver but a 2D solver for the porous medium flow. The latter was implemented using the finite element library deal.II~\cite{Bangerth07}.
\begin{table}[t]
  \centering
  \begin{tabular}{rllll}
    \toprule
    & model & fluid heat cond. & $\kappa_f$ & boundary condition at $y=L$\\
    \midrule
    1 & assembled-1D & yes & $\kappa_f \ll \kappa_s$ & $T_f^{\prime}(L)=\frac{\hv \Ac}{\cpf \mc} \left(T_{s}(L)-T_{f}(L)\right) \neq 0$\\
    2 & original-2D  & yes & $\kappa_f \ll \kappa_s$ & $\nabla T_f(x,L) \cdot \vec{n}=0$\\
    3 & assembled-1D & yes & $\kappa_f = \kappa_s$   & $T_f^{\prime}(L)=\frac{\hv \Ac}{\cpf \mc} \left(T_{s}(L)-T_{f}(L)\right) \neq 0$\\
    4 & original-2D  & yes & $\kappa_f = \kappa_s$   & $\nabla T_f(x,L) \cdot \vec{n}=0$\\
    5 & assembled-1D & no  & - & -\\
    \bottomrule
  \end{tabular}
  \caption{Simulations performed for each test case.}
  \label{tab:sims_per_test_case}
\end{table}

Details on the coupled solution procedure, the setup of the solvers, the computational meshes and mesh convergence can be found in~\cite{RomMueller22_ijhmt}. For the adiabatic side walls of the porous medium, in 1D we apply the side wall modeling as proposed and discussed also in~\cite{RomMueller22_ijhmt}. For both test cases, we perform six coupling steps: porous medium flow and hot gas flow solutions are computed six times each in an alternating fashion. After each run of a flow solver, the coupling conditions on the interface are updated. For the chosen test cases, six steps are sufficient for the flow values on the interface to be converged in both domains.

\subsection{Test case 1}
The fluid temperature distribution inside the porous medium is depicted in Fig.~\ref{fig:tc1_Tf_all} for all five simulations of test case~1. The differences between the solutions are small. The largest visible deviation concerns simulation~4 for $\kappa_f=\kappa_s$ with vanishing fluid temperature gradient at $y=L$, see Fig.~\ref{fig:tc1_Tf_sim4}, around the upper left corner. The largest difference related to all other simulations is measured directly at the corner point and refers to simulation~2. However, it is only $5.5\,$K which corresponds to $1.6\,\%$.
\begin{figure}[p]
  \centering
  \begin{subfigure}[t]{0.495\linewidth}
    \centering
    \includegraphics[width=\linewidth]{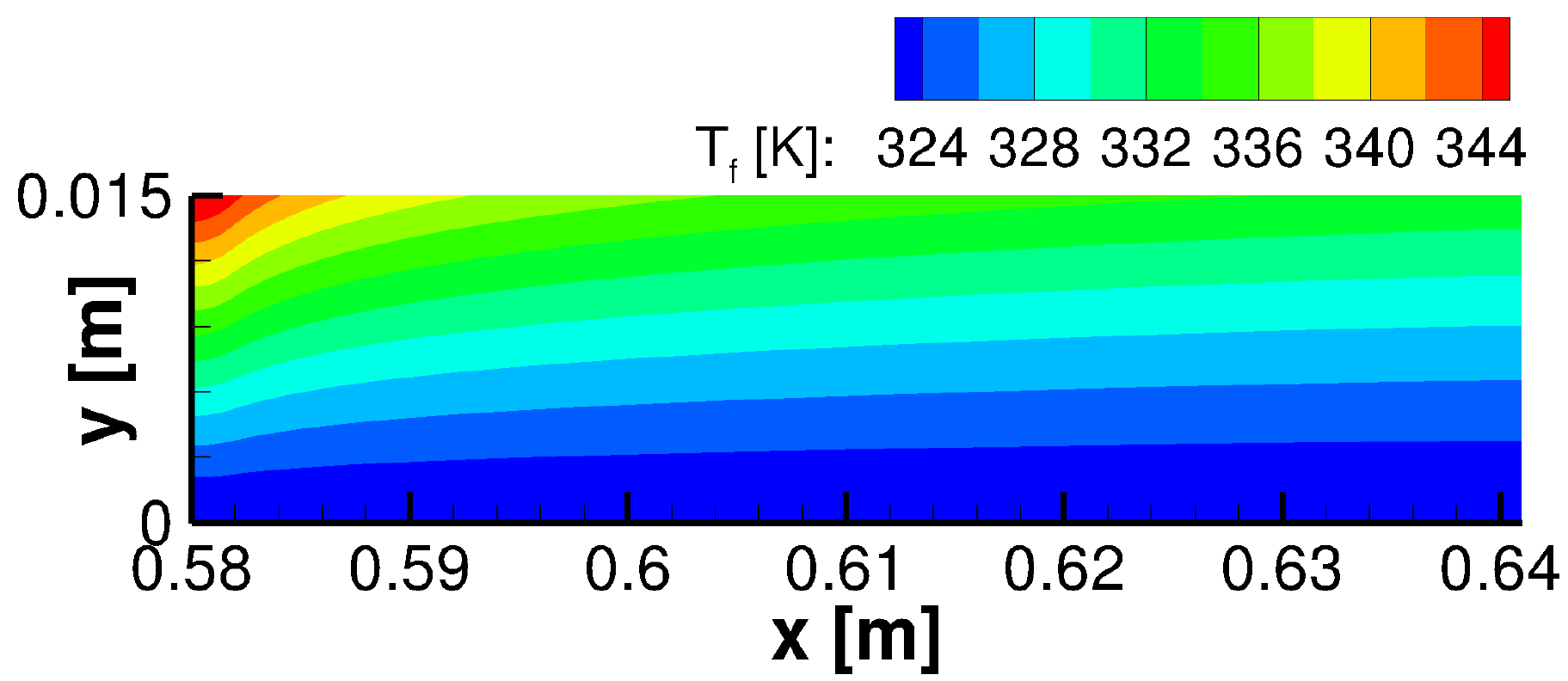}
    \caption{sim.~1: assembled-1D, with fluid heat cond., $\kappa_f \ll \kappa_s$, $T_f^{\prime}(L)=\frac{\hv \Ac}{\cpf \mc} \left(T_{s}(L)-T_{f}(L)\right) \neq 0$}
    \label{fig:tc1_Tf_sim1}
  \end{subfigure}
  \hfill
  \begin{subfigure}[t]{0.495\linewidth}
    \centering
    \includegraphics[width=\linewidth]{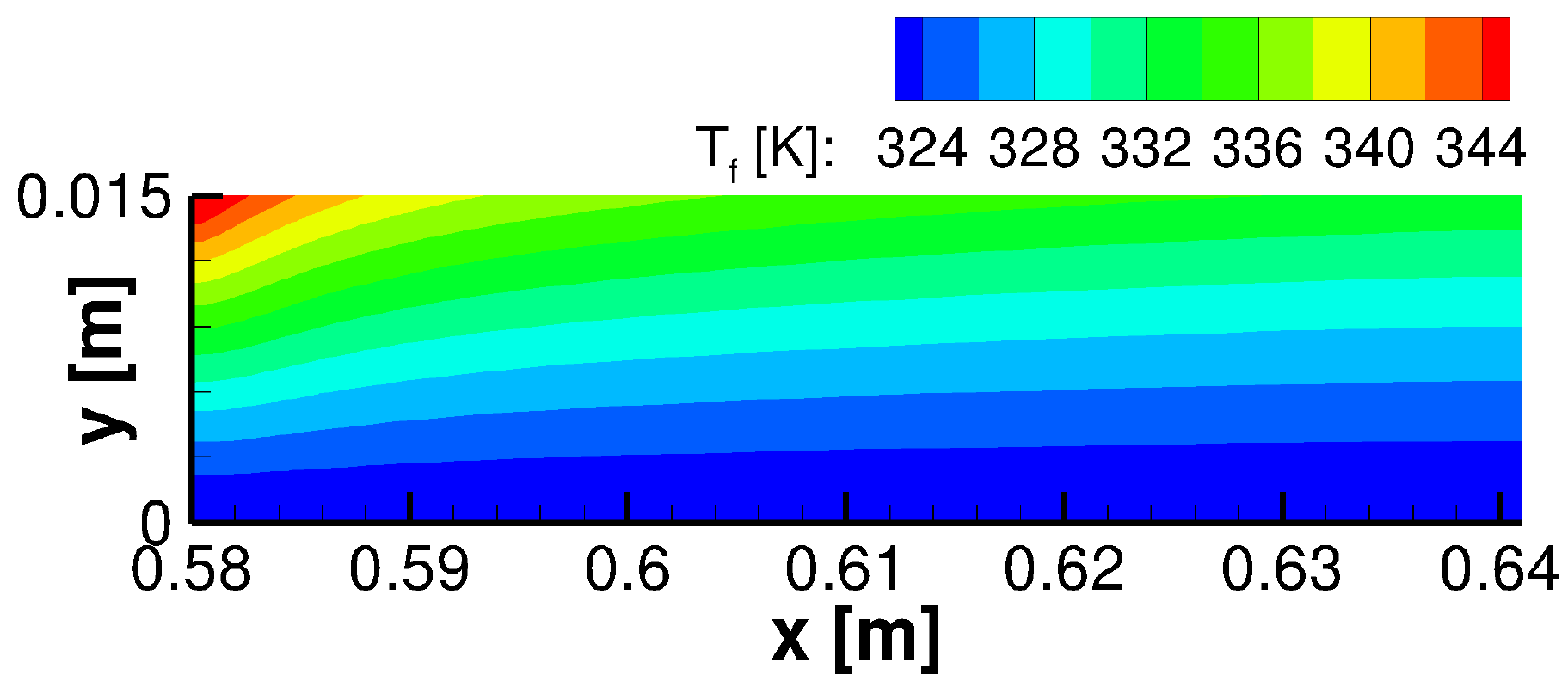}
    \caption{sim.~2: original-2D, with fluid heat cond., $\kappa_f \ll \kappa_s$, $\nabla T_f(x,L) \cdot \vec{n}=0$}
    \label{fig:tc1_Tf_sim2}
  \end{subfigure}
  \\[0.1cm]
  \begin{subfigure}[t]{0.48\linewidth}
    \centering
    \includegraphics[width=\linewidth]{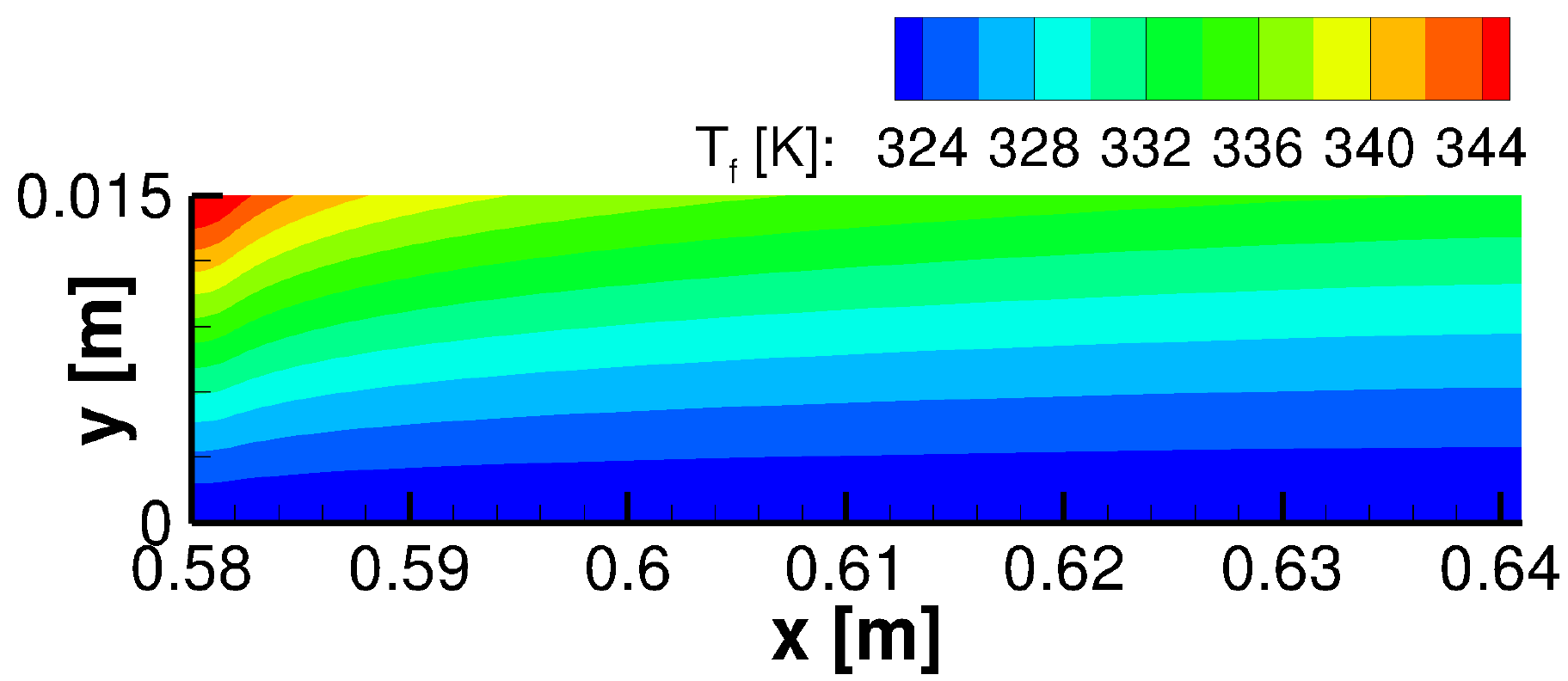}
    \caption{sim.~3: assembled-1D, with fluid heat cond., $\kappa_f = \kappa_s$, $T_f^{\prime}(L)=\frac{\hv \Ac}{\cpf \mc} \left(T_{s}(L)-T_{f}(L)\right) \neq 0$}
    \label{fig:tc1_Tf_sim3}
  \end{subfigure}
  \hfill
  \begin{subfigure}[t]{0.48\linewidth}
    \centering
    \includegraphics[width=\linewidth]{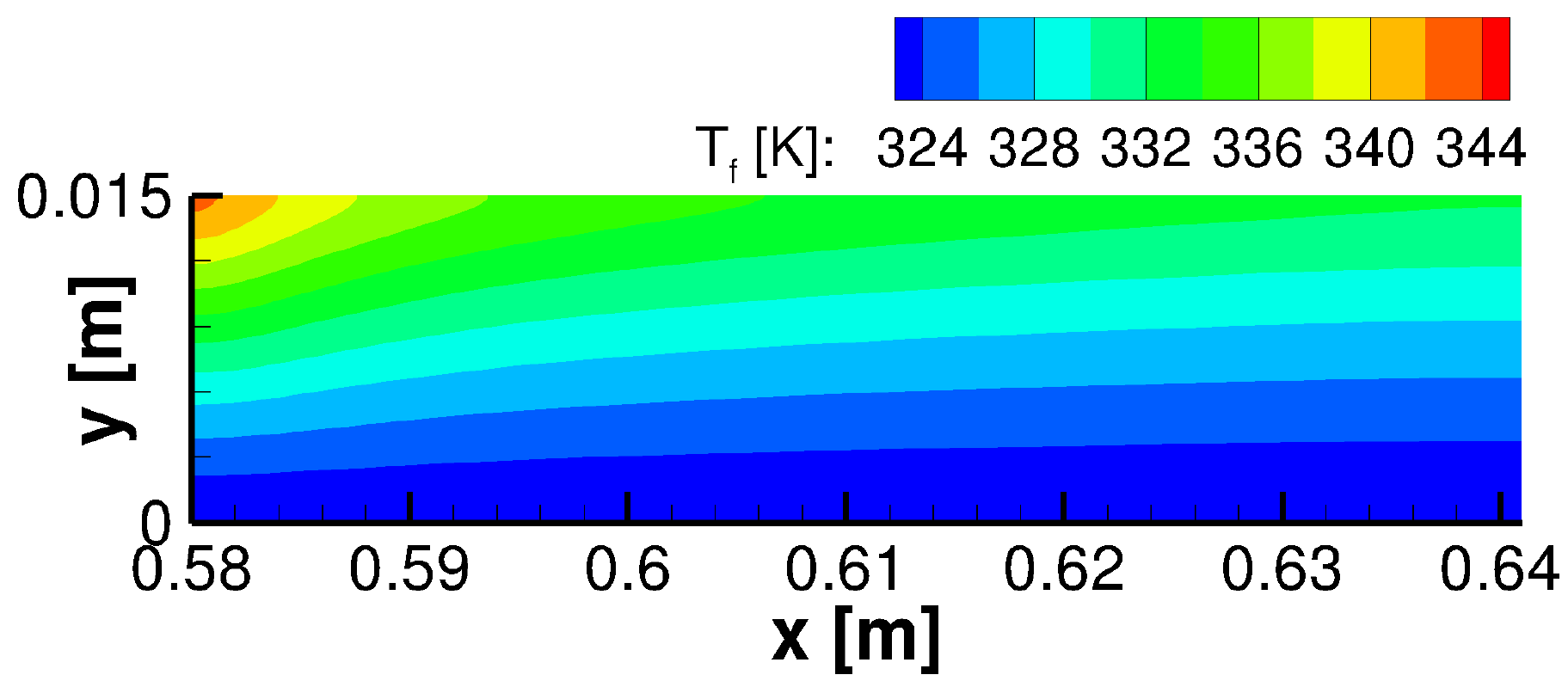}
    \caption{sim.~4: original-2D, with fluid heat cond., $\kappa_f = \kappa_s$, $\nabla T_f(x,L) \cdot \vec{n}=0$}
    \label{fig:tc1_Tf_sim4}
  \end{subfigure}
  \\[0.1cm]
  \begin{subfigure}[t]{0.48\linewidth}
    \centering
    \includegraphics[width=\linewidth]{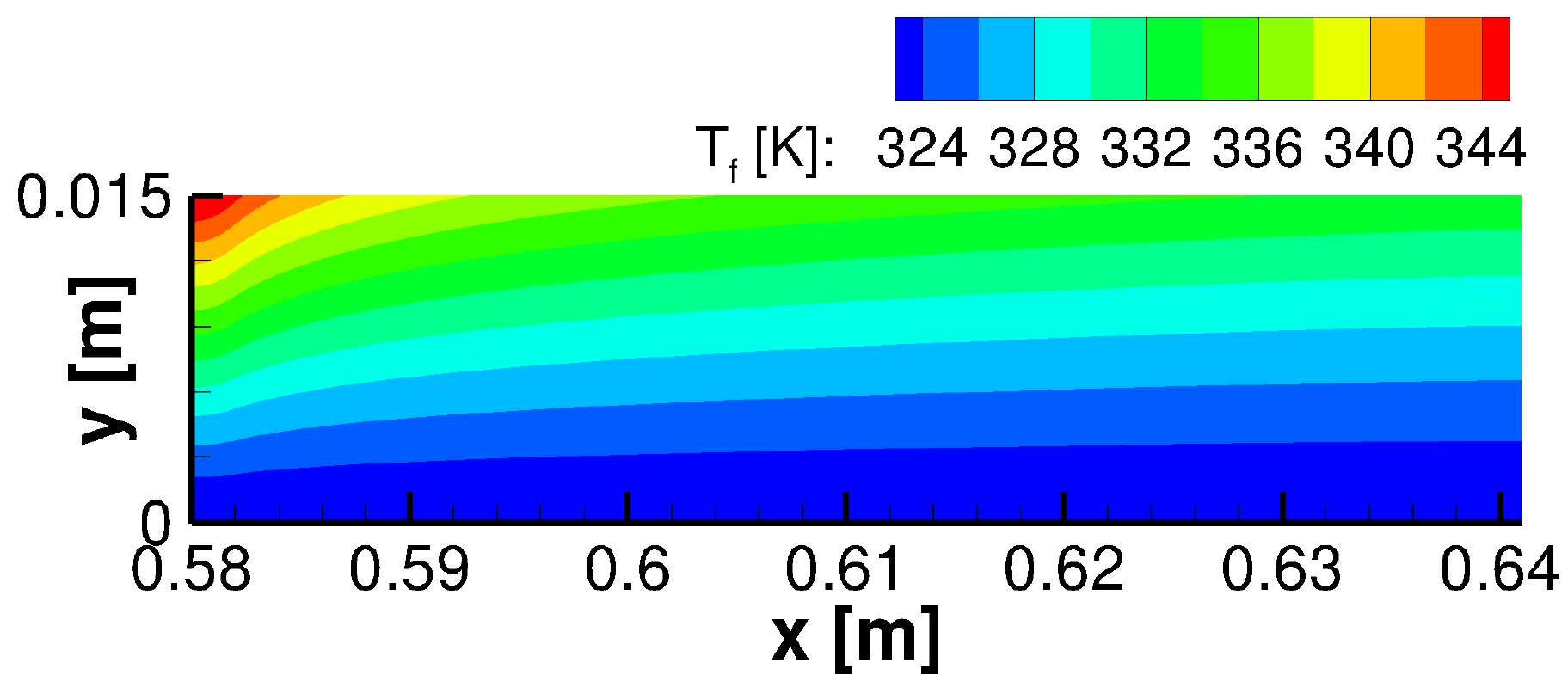}
    \caption{sim.~5: assembled-1D, without fluid heat cond.}
    \label{fig:tc1_Tf_sim5}
  \end{subfigure}
  \caption{Test case 1: Fluid temperature $T_f$ in the porous medium.}
  \label{fig:tc1_Tf_all}
\end{figure}

The influence of the different fluid temperature solutions on the hot gas temperature on the interface, i.e., at the cooling gas injection, and further downstream at the lower channel wall is investigated in Fig.~\ref{fig:tc1_THG_int}. The temperature curves for simulations~1, 2 and 5 are virtually the same. Simulations~3 and 4 for $\kappa_f=\kappa_s$ show minimal differences but only between $x=0.58\,$m and $x=0.641\,m$, i.e., on the interface: the temperatures of simulation~3 are slightly higher and the temperatures of simulation~4 slightly lower than the ones of the three other cases. However, the differences are clearly negligible.
\begin{figure}[p]
  \centering
  \includegraphics[width=0.55\linewidth]{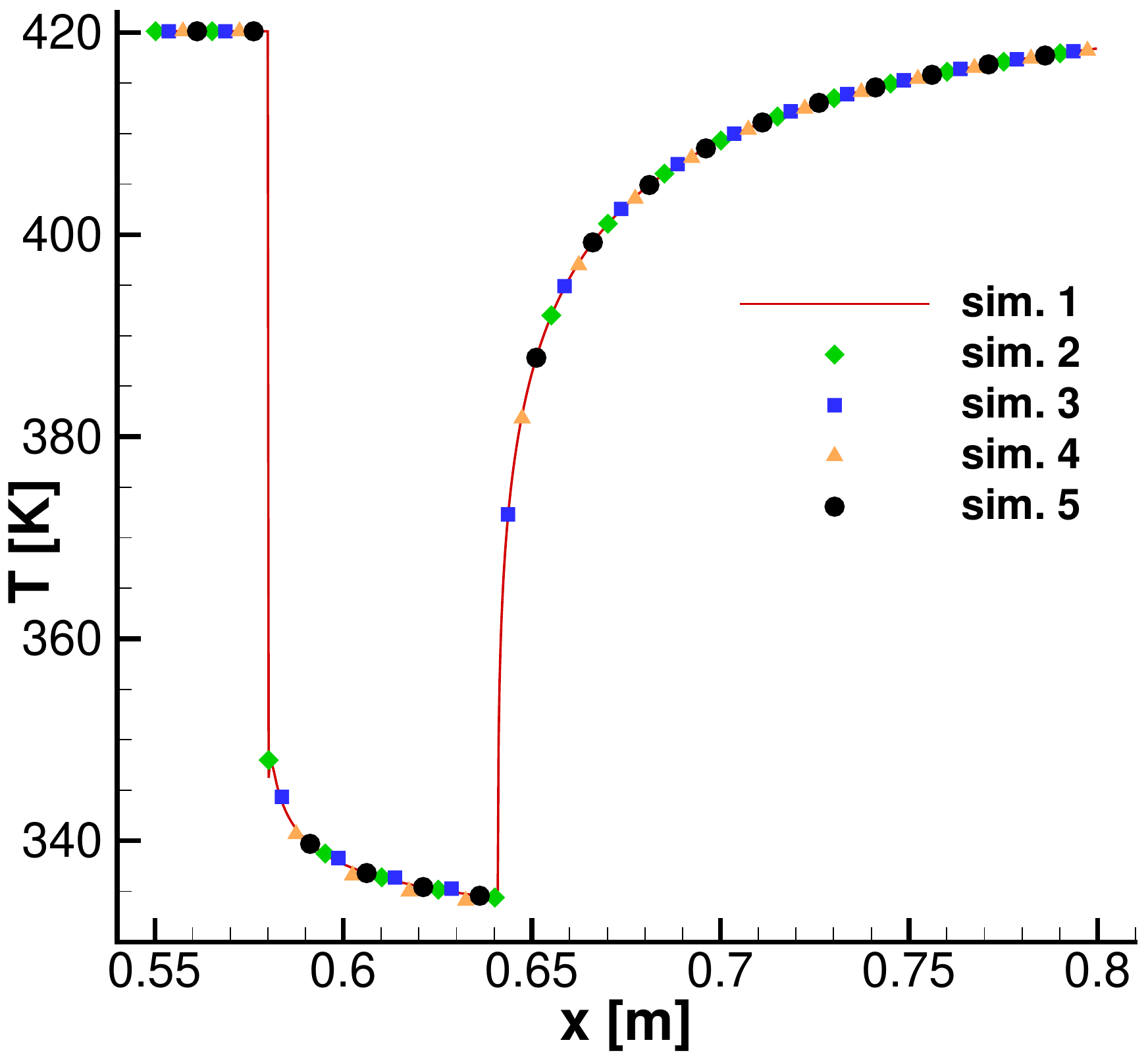}
  \caption{Test case 1: Lower wall hot gas temperature (cooling gas injection at $x \in [0.58, 0.641]\,$m).}
  \label{fig:tc1_THG_int}
\end{figure}

A similar behavior becomes apparent in Fig.~\ref{fig:tc1_PM_x061} showing the temperature throughout the porous medium for $y \in [0,L]$ at fixed position $x=0.61\,$m. While the fluid and solid temperature curves for simulations~1, 2 and 5 coincide, the ones for simulations~3 and 4 with large $\kappa_f$ deviate. Again, the temperatures of simulation~3 are larger than those of simulations~1, 2 and 5, whereas simulation~4 leads to lower temperatures. The adiabatic boundary condition at the interface~$y=L$, which is prescribed in simulations~2 and 4, is only reflected by simulation~4. As explained in Sect.~\ref{sec:PM-models}, this is due to the strong coupling of the two temperature equations and fluid heat conduction being negligible in case of simulation~2. The overall temperature behavior in the porous medium reveals that the choice of the boundary condition for $T_f^{\prime}(L)$ in 1D or $\nabla T_f(x,L) \cdot \vec{n}$ in 2D/3D when taking into account fluid heat conduction has more influence than the choice whether to consider the latter at all. However, with maximum temperature differences of $2.1\,$K ($0.6\,\%$) and $1.2\,$K ($0.4\,\%$) for $T_f$ and $T_s$, respectively, between simulations~3 and 4 at $y=L$, the influence on the cooling of the hot gas flow should be negligible. This is substantiated by Fig.~\ref{fig:tc1_HG_x061}, depicting the hot gas temperature in the boundary layer at the same fixed position $x=0.61\,$m. All five curves are virtually the same with only minor differences in the initial temperature at $y=0\,$m due to the differences in the porous medium at $y=L$.
\begin{figure}[t]
  \centering
  \begin{subfigure}[t]{0.48\linewidth}
    \centering
    \includegraphics[width=\linewidth]{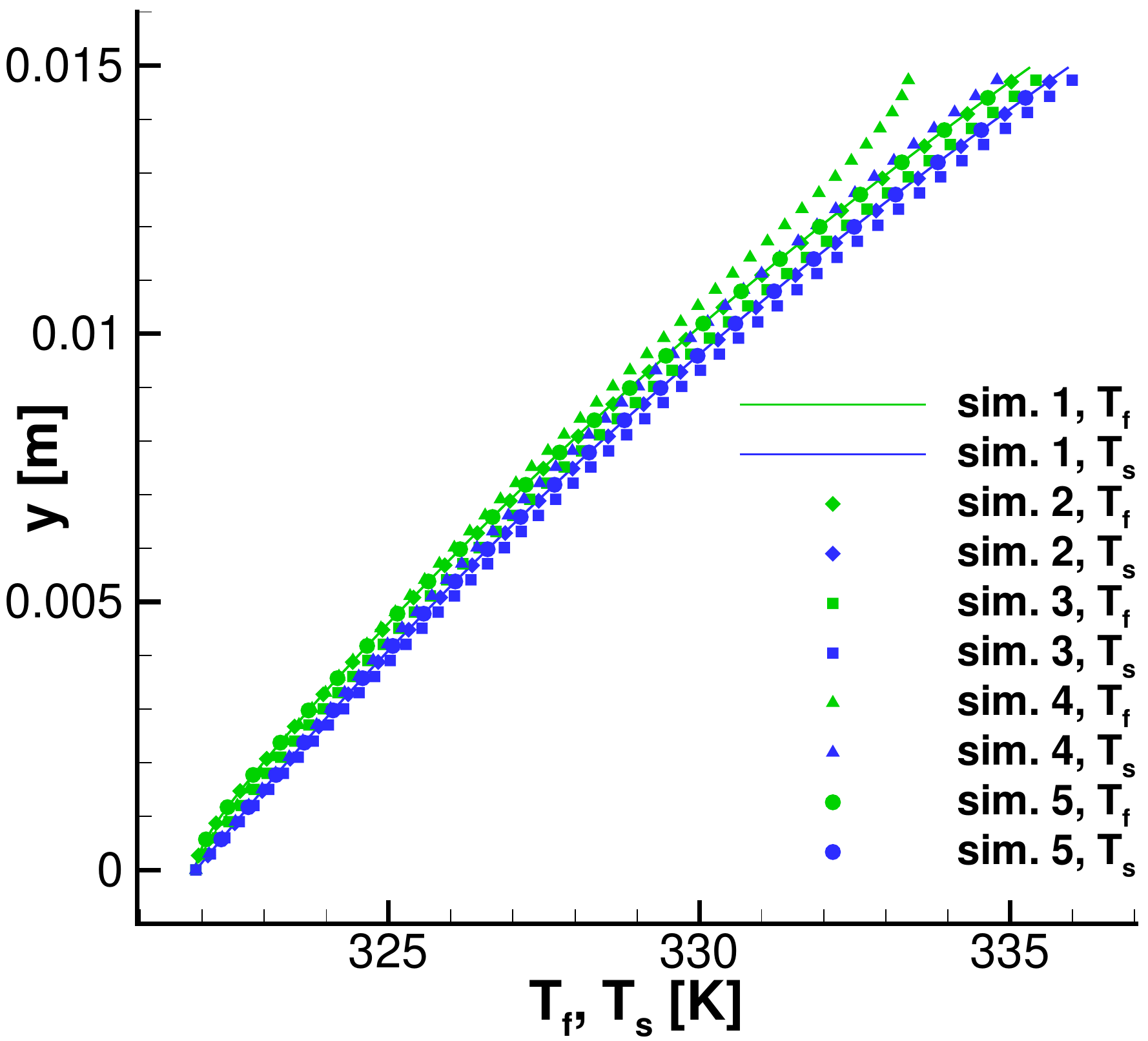}
    \caption{porous medium}
    \label{fig:tc1_PM_x061}
  \end{subfigure}
  \hfill
  \begin{subfigure}[t]{0.48\linewidth}
    \centering
    \includegraphics[width=\linewidth]{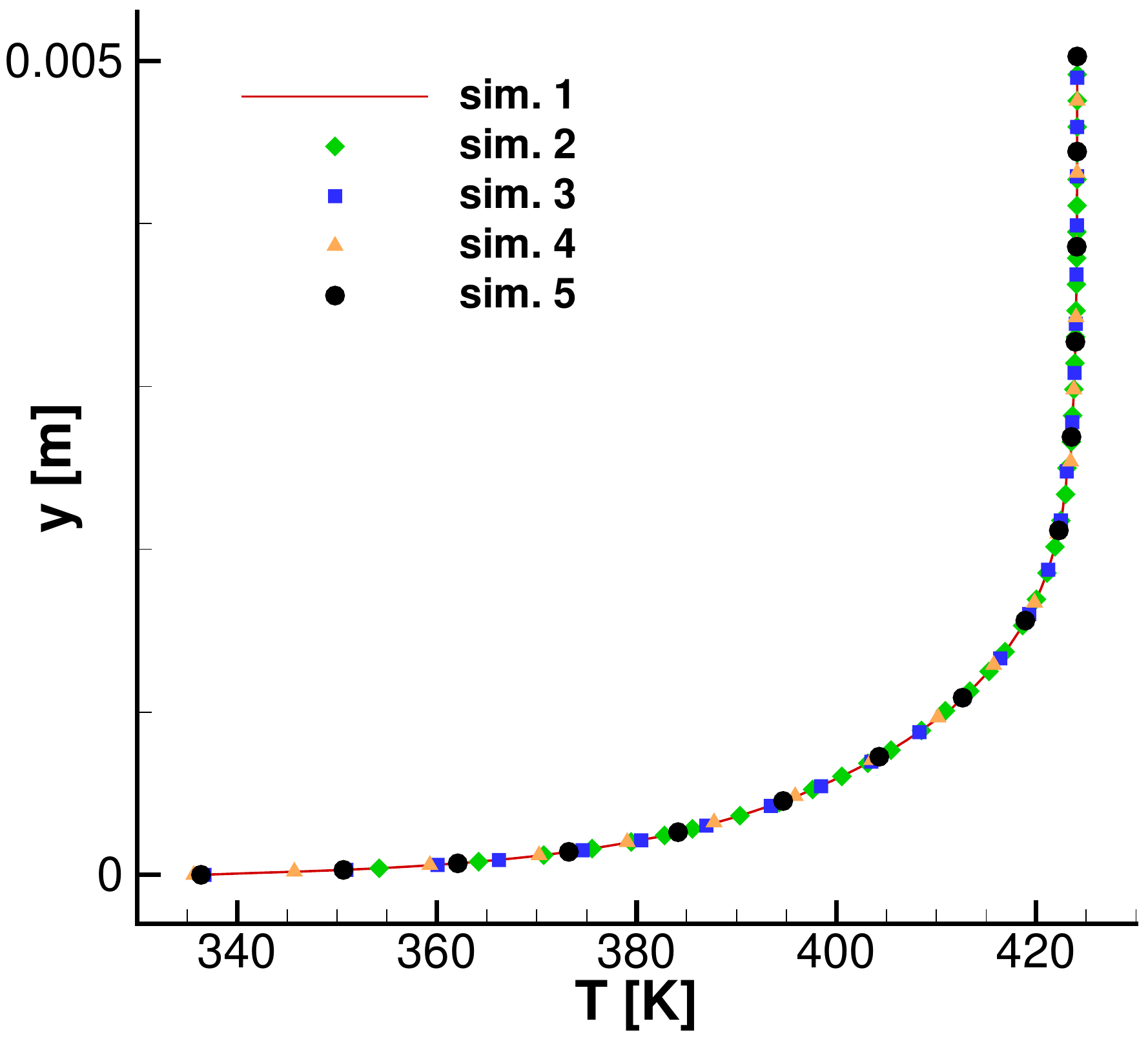}
    \caption{hot gas}
    \label{fig:tc1_HG_x061}
  \end{subfigure}
  \caption{Test case 1: Temperatures at $x=0.61\,$m: (a)~throughout the whole porous medium and (b)~in the boundary layer of the hot gas flow.}
  \label{fig:tc1_x061}
\end{figure}

Since the differences in the fluid density~$\rho_f$ and the Darcy velocity~$v$ regarding the five simulations are analogously small and negligible, we do not present results for these two quantities.

\subsection{Test case 2}
Figures~\ref{fig:tc2_Tf_all} to \ref{fig:tc2_x061} show the results for the high-temperature test case~2 correspondingly to Figs.~\ref{fig:tc1_Tf_all} to \ref{fig:tc1_x061} of test case~1. The main results are analogous to those of test case~1. Again, the largest difference in the fluid temperatures in the porous medium concerns the upper left corner of simulations~2 and 4, cf.~Fig.~\ref{fig:tc2_Tf_all}. At the corner point, the difference is $122.4\,$K which corresponds to $14.7\,\%$. Even though this deviation is much larger than observed for test case~1 ($1.6\,\%$, see above), the influence on the hot gas flow again is small, see Fig.~\ref{fig:tc2_THG_int}. This is due to the differences in the solutions being confined to a very small area around the upper left corner.
\begin{figure}[p]
  \centering
  \begin{subfigure}[t]{0.48\linewidth}
    \centering
    \includegraphics[width=\linewidth]{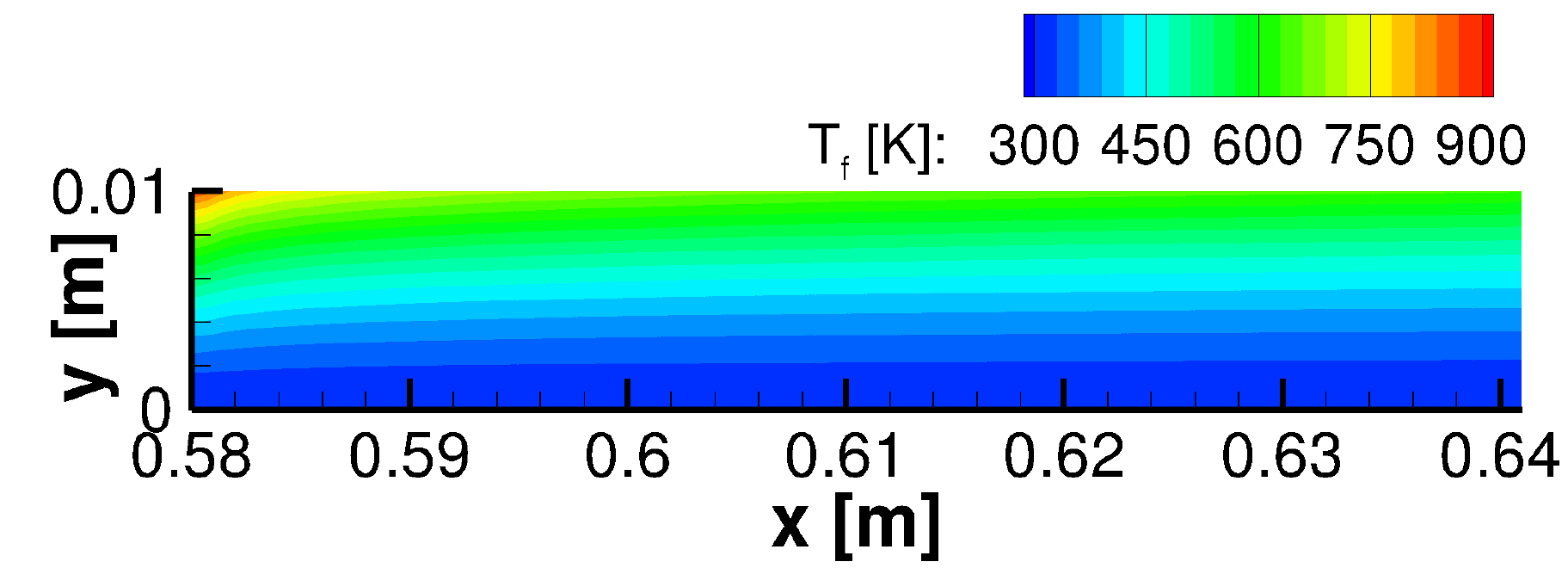}
    \caption{sim.~1: assembled-1D, with fluid heat cond., $\kappa_f \ll \kappa_s$, $T_f^{\prime}(L)=\frac{\hv \Ac}{\cpf \mc} \left(T_{s}(L)-T_{f}(L)\right) \neq 0$}
    \label{fig:tc2_Tf_sim1}
  \end{subfigure}
  \hfill
  \begin{subfigure}[t]{0.48\linewidth}
    \centering
    \includegraphics[width=\linewidth]{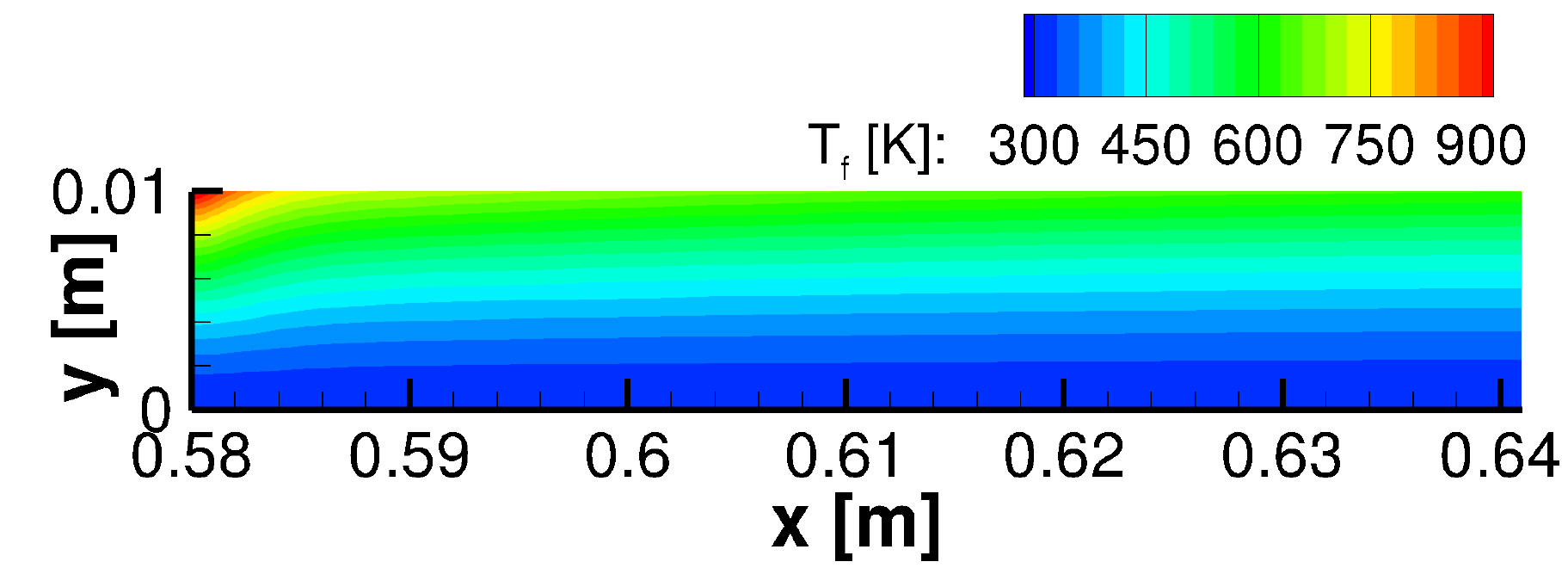}
    \caption{sim.~2: original-2D, with fluid heat cond., $\kappa_f \ll \kappa_s$, $\nabla T_f(x,L) \cdot \vec{n}=0$}
    \label{fig:tc2_Tf_sim2}
  \end{subfigure}
  \\[0.1cm]
  \begin{subfigure}[t]{0.48\linewidth}
    \centering
    \includegraphics[width=\linewidth]{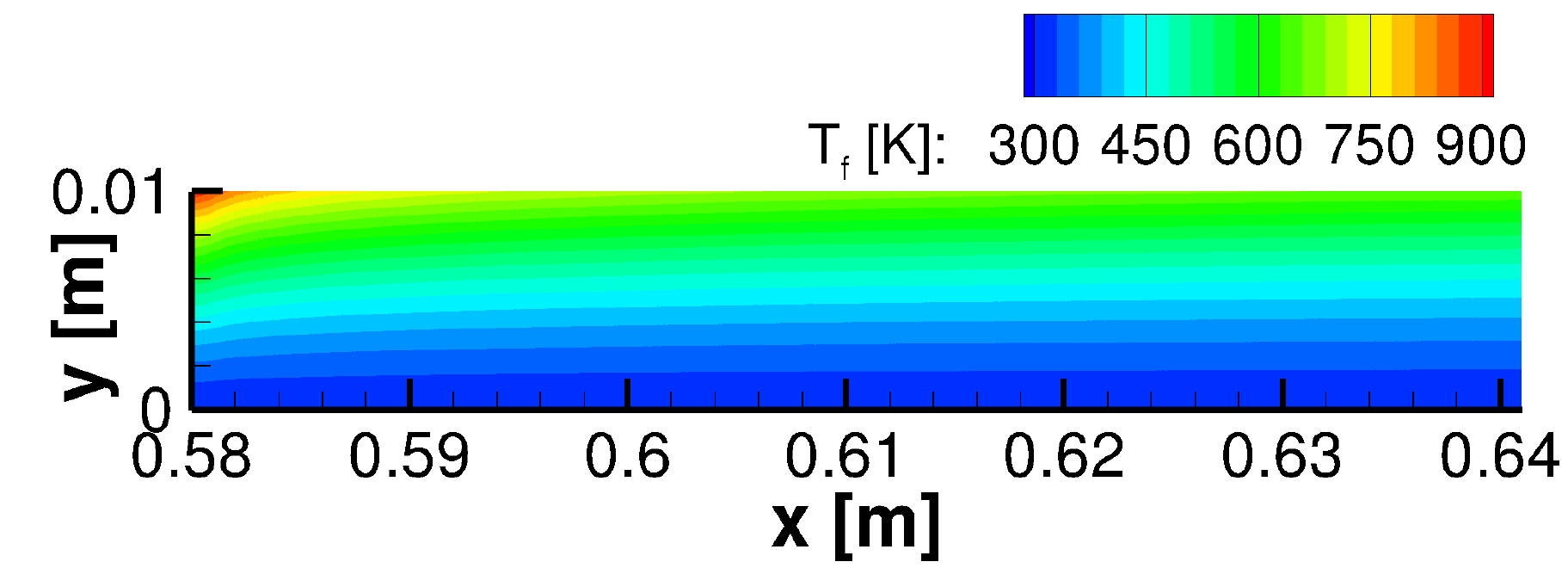}
    \caption{sim.~3: assembled-1D, with fluid heat cond., $\kappa_f = \kappa_s$, $T_f^{\prime}(L)=\frac{\hv \Ac}{\cpf \mc} \left(T_{s}(L)-T_{f}(L)\right) \neq 0$}
    \label{fig:tc2_Tf_sim3}
  \end{subfigure}
  \hfill
  \begin{subfigure}[t]{0.48\linewidth}
    \centering
    \includegraphics[width=\linewidth]{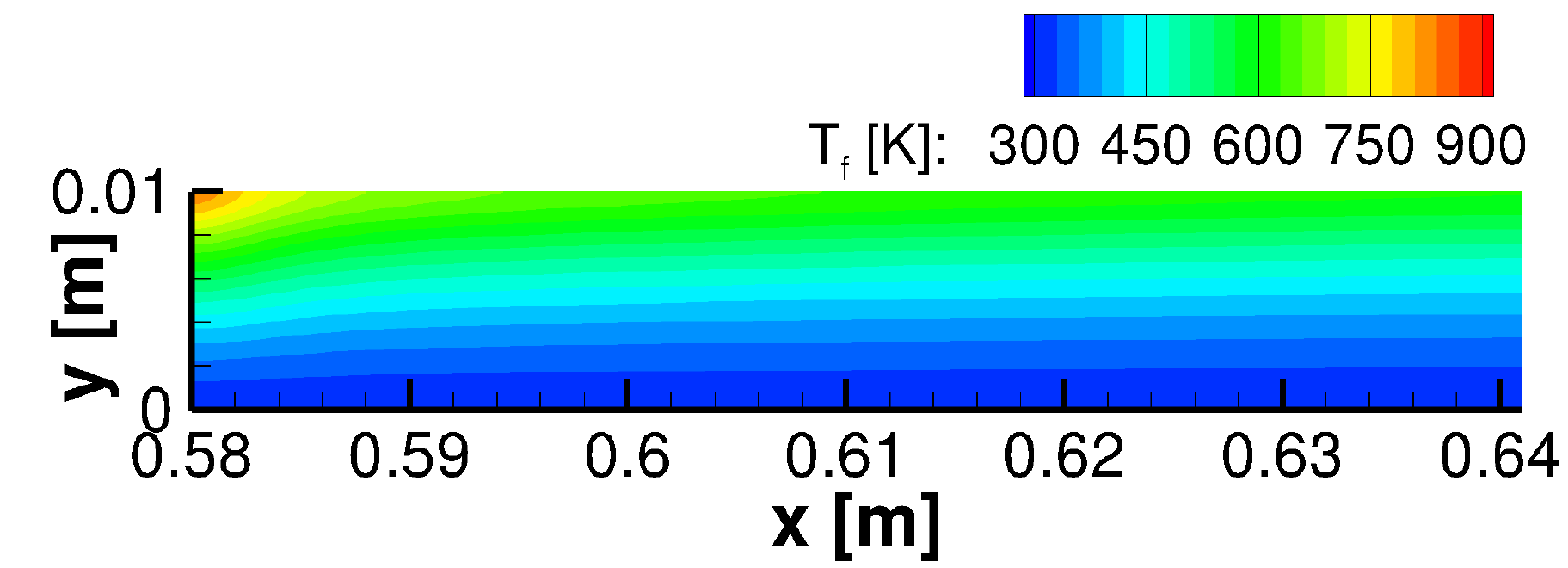}
    \caption{sim.~4: original-2D, with fluid heat cond., $\kappa_f = \kappa_s$, $\nabla T_f(x,L) \cdot \vec{n}=0$}
    \label{fig:tc2_Tf_sim4}
  \end{subfigure}
  \\[0.1cm]
  \begin{subfigure}[t]{0.48\linewidth}
    \centering
    \includegraphics[width=\linewidth]{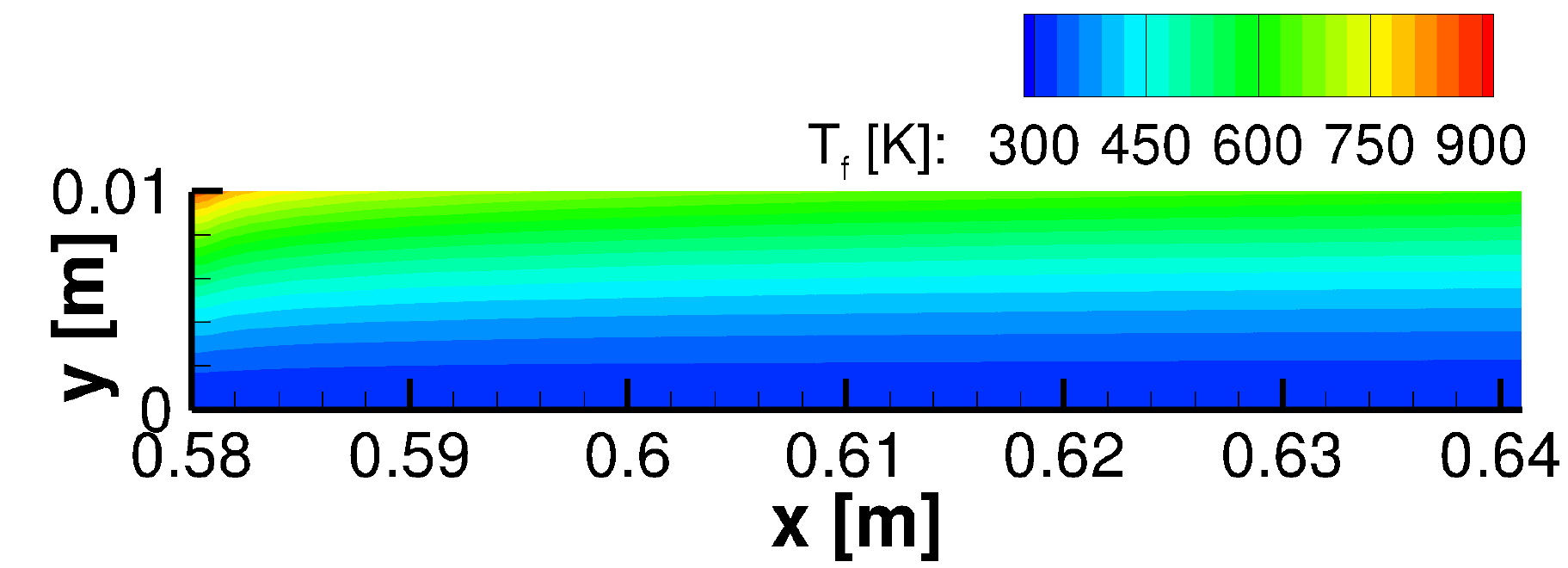}
    \caption{sim.~5: assembled-1D, without fluid heat cond.}
    \label{fig:tc2_Tf_sim5}
  \end{subfigure}
  \caption{Test case 2: Fluid temperature $T_f$ in the porous medium.}
  \label{fig:tc2_Tf_all}
\end{figure}

Figure~\ref{fig:tc2_THG_int} shows that, compared with test case~1, the larger differences in the porous medium temperatures lead to larger differences in the hot gas temperature at the injection. Again, the hot gas temperatures of simulation~3 are the highest and those of simulation~4 the lowest. However, the cooling effect on the lower channel wall downstream of the injection remains virtually the same for all five simulations.
\begin{figure}[p]
  \centering
  \includegraphics[width=0.55\linewidth]{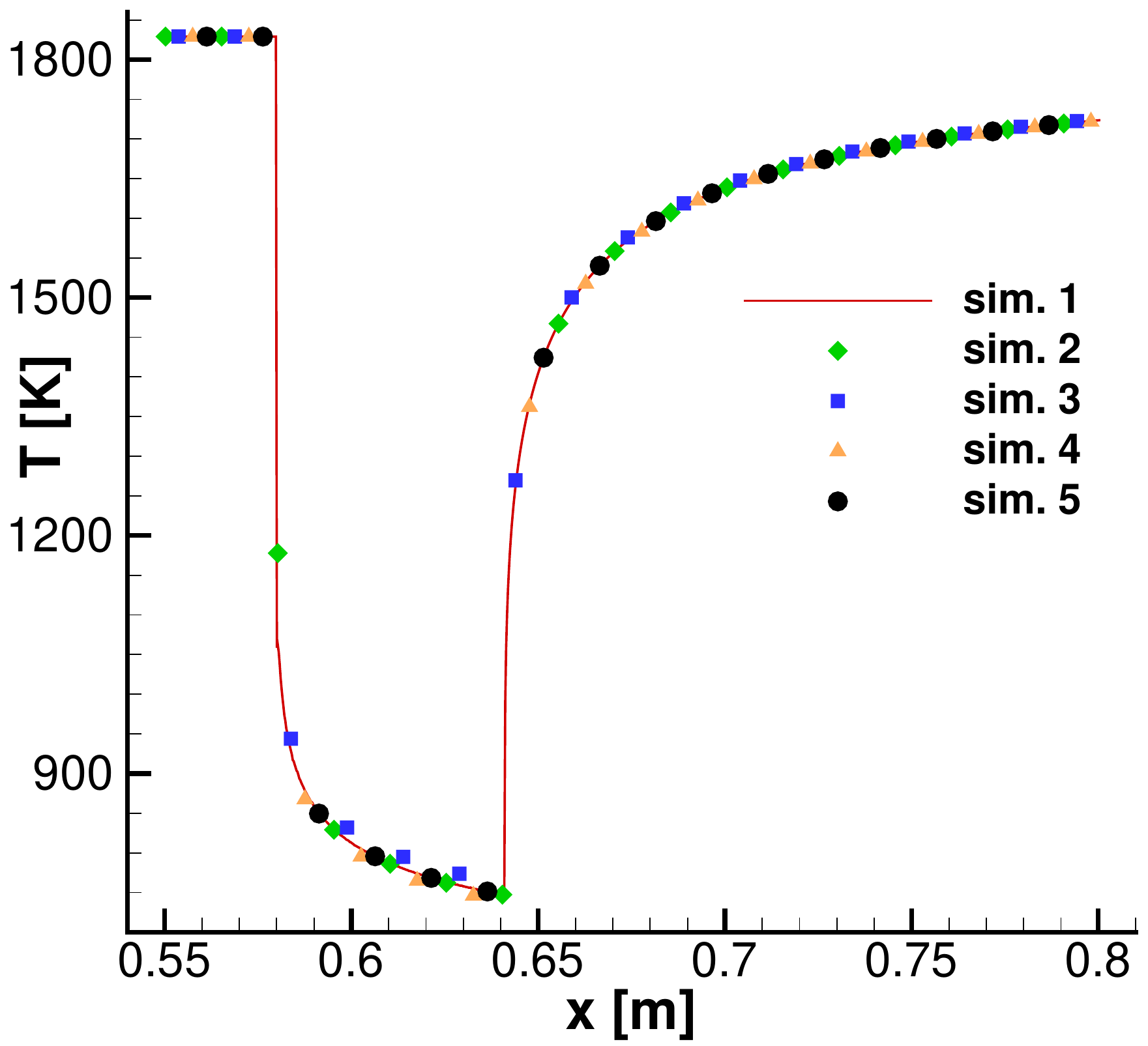}
  \caption{Test case 2: Lower wall hot gas temperature (cooling gas injection at $x \in [0.58, 0.641]\,$m).}
  \label{fig:tc2_THG_int}
\end{figure}

The temperature curves in Fig.~\ref{fig:tc2_PM_x061} first of all reveal a more pronounced temperature nonequilibrium than observed for test case~1, cf.~Fig.~\ref{fig:tc1_PM_x061}. Again, the differences in the temperature solutions are largest at the interface $y=L$. They amount to $45.8\,$K ($7.3\,\%$) and $24.1\,$K ($3.2\,\%$) for $T_f$ and $T_s$, respectively, between simulations~3 and 4. Even though these deviations are significantly larger than those of test case~1 ($0.6\,\%$ and $0.4\,\%$, see above), the boundary layer temperature in the hot gas flow is virtually the same for all five simulations again as Fig.~\ref{fig:tc2_HG_x061} reveals.
\begin{figure}[t]
  \centering
  \begin{subfigure}[t]{0.48\linewidth}
    \centering
    \includegraphics[width=\linewidth]{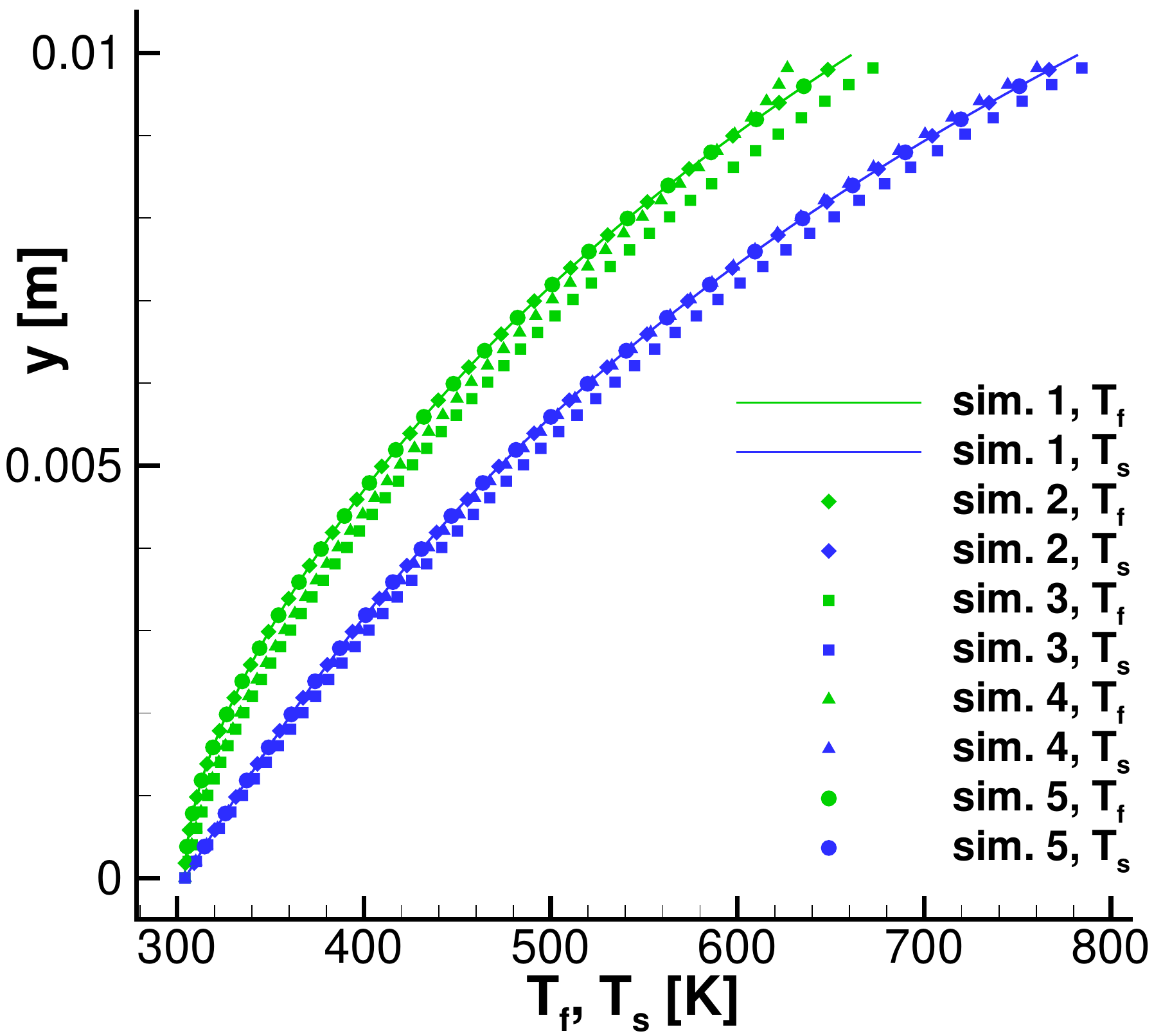}
    \caption{porous medium}
    \label{fig:tc2_PM_x061}
  \end{subfigure}
  \hfill
  \begin{subfigure}[t]{0.48\linewidth}
    \centering
    \includegraphics[width=\linewidth]{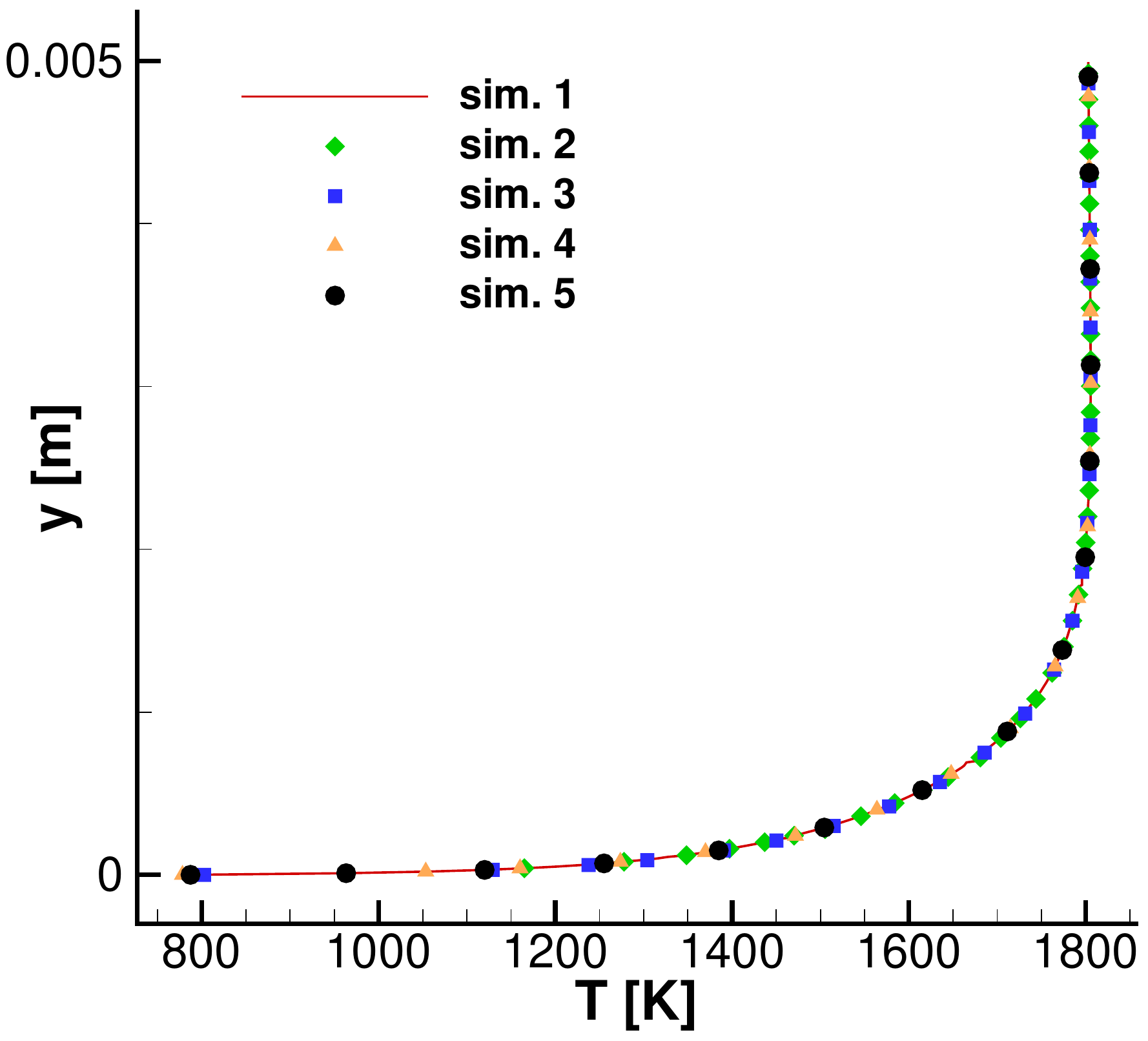}
    \caption{hot gas}
    \label{fig:tc2_HG_x061}
  \end{subfigure}
  \caption{Test case 2: Temperatures at $x=0.61\,$m: (a)~throughout the whole porous medium and (b)~in the boundary layer of the hot gas flow.}
  \label{fig:tc2_x061}
\end{figure}

As the differences in the fluid density~$\rho_f$ and the Darcy velocity~$v$ are also small, we again omit presenting results for these two quantities.

%%%%%%%%%%%%%%%%%%%%%%%%%%%%%%%%%%%%%%%%%%%%%%%%%%%%%%%%%%%%%%%%%%%%%%%%%%%%%%%%%%%%%%%%%%%%%%%%%%%%%%%%%%

\section{Conclusion}
A one-dimensional porous medium model for Darcy-Forchheimer flow under local thermal nonequilibrium has been investigated taking into account fluid heat conduction. This model consists of a linear temperature system for the fluid temperature and the solid temperature and a nonlinear mass-momentum system for the density of the fluid and the Darcy velocity. The temperature system is solved explicitly. Monotonicity properties of the temperature solution are employed to prove the existence of a unique solution to the mass-momentum system.

The 1D model is embedded in our assembled-1D model in~\cite{RomMueller22_ijhmt} for the simulation of transpiration cooling problems, where a porous medium is mounted into a wall of a hot gas channel. Comparisons are performed with (i) solutions determined using a simplified 1D model omitting fluid heat conduction in the assembled-1D model and (ii) solutions using a full 2D porous medium model with different boundary conditions on the interface regarding fluid heat conduction. Computations are performed for two test cases with moderate and high temperatures in the hot gas channel. The numerical results show a very good agreement of the wall temperature in the hot gas both on top of the porous medium and downstream of the injection surface although there are some differences observable in the porous medium flow solutions near the injection surface. This also holds true if the fluid heat conduction is of the order of the  solid heat conduction.
This justifies the use of the simplified 1D porous medium model in our assembled-1D model for transpiration cooling.

%\begin{acknowledgements}
%If you'd like to thank anyone, place your comments here
%and remove the percent signs.
%Funded by the Deutsche Forschungsgemeinschaft (DFG, German Research Foundation) -- project number 320021702/GRK2326--Energy, Entropy and Dissipative Dynamics.
%\end{acknowledgements}

% BibTeX users please use one of
%\bibliographystyle{spbasic}      % basic style, author-year citations
\bibliographystyle{spmpsci}      % mathematics and physical sciences
%\bibliographystyle{spphys}       % APS-like style for physics
%\bibliography{DeterministicUQ}   % name your BibTeX data base

%\todo{Die Liste der Referenzen muss noch ergaenzt bzw. gekuerzt werden.}

%\printbibliography

\end{document}